\theoremstyle{plain}
\newtheorem{theorem}{Theorem}
\newtheorem{lemma}[theorem]{Lemma}
\newtheorem{corollary}[theorem]{Corollary}
\theoremstyle{definition}
\newtheorem{definition}[theorem]{Definition}
\newtheorem{example}[theorem]{Example}
\newtheorem{remark}[theorem]{Remark}
\newcommand{\calA}{\mathcal{A}}
\newcommand{\calB}{\mathcal{B}}
\newcommand{\calL}{\mathcal{L}}
\newcommand{\calM}{\mathcal{M}}
\newcommand{\calC}{\mathcal{C}}
\newcommand{\calV}{\mathcal{V}}
\newcommand{\calT}{\mathcal{T}}
\renewcommand{\phi}{\varphi}
\newcommand{\mybar}[1]{\overline{#1}}
\newcommand{\Nat}{\mathbb{N}}
\newcommand{\Int}{\mathbb{Z}}
\newcommand{\Rat}{\mathbb{Q}}
\newcommand{\id}{{\mathrm{id}}}
\newcommand{\powset}{\mathcal{P}}
\newcommand{\union}{\cup}
\newcommand{\sop}{[} % open substitution syntax
\newcommand{\scl}{]} % close substitution syntax
\newcommand{\sel}[2]{#1 \backslash #2} % element of substitution syntax
\newcommand{\unsubst}[2]{\sop \sel{#1}{#2} \scl} % unary substitution
\newcommand{\impl}{\rightarrow} % logical implication
\newcommand{\liff}{\leftrightarrow} % logical iff
\newcommand{\Land}{\bigwedge}
\newcommand{\Lor}{\bigvee}
\newcommand{\proves}{\vdash}
\newcommand{\lfp}{\mathrm{lfp}} % least-fixed point
\newcommand{\hoare}[3]{\{#1\}#2\{#3\}}
\newcommand{\vc}{\mathrm{vc}}
\newcommand{\vct}{\tilde{\vc}}
\newcommand{\hskipit}{\textbf{skip}}
\newcommand{\hif}{\textbf{if}~}
\newcommand{\hthen}{~\textbf{then}~}
\newcommand{\helse}{~\textbf{else}~}
\newcommand{\hwhile}{\textbf{while}~}
\newcommand{\hdo}{~\textbf{do}~}
\renewcommand{\wp}{\mathrm{wp}}
\newcommand{\Rwp}{R_{\wp}}
\renewcommand{\sp}{\mathrm{sp}}
\newcommand{\wpc}{\mathrm{wp}}
\newcommand{\spc}{\mathrm{sp}}
\newcommand{\Laff}{\calL_{\mathrm{aff}}}
\newcommand{\Aff}{{\mathrm{Aff}~}}
\newcommand{\aff}{{\mathrm{aff}}}
\newcommand{\modelsa}{\models_\mathrm{a}} % models abstractly
\newcommand{\nmodelsa}{\not\models_\mathrm{a}} % does not model abstractly
\begin{document}
%% title
\title{An abstract fixed-point theorem for Horn formula equations}

\author{Stefan Hetzl}
%\authornote{Both authors contributed equally to this research.}
\email{stefan.hetzl@tuwien.ac.at}
%\orcid{1234-5678-9012}
\affiliation{%
	\institution{Institute of Discrete Mathematics and Geometry, TU Wien}
	%\streetaddress{P.O. Box 1212}
	\city{Vienna}
	\country{Austria}
}

\author{Johannes Kloibhofer}
%\authornotemark[1]
\email{j.kloibhofer@uva.nl}
\affiliation{%
	\institution{Institute for Logic, Language and Computation, University of Amsterdam}
	%\streetaddress{P.O. Box 1212}
	\city{Amsterdam}
	\country{The Netherlands}
}

\renewcommand{\shortauthors}{S.~Hetzl \& J.~Kloibhofer}

\begin{abstract}
We consider a class of formula equations in first-order logic, Horn formula equations,
 which are defined by a syntactic restriction on the occurrences of predicate variables.
Horn formula equations play an important role in many applications in computer science.
We state and prove a fixed-point theorem for Horn formula equations in first-order logic with a
 least fixed-point operator.
Our fixed-point theorem is abstract in the sense that it applies to an abstract semantics which generalises
 standard semantics.
We describe several corollaries of this fixed-point theorem in various areas of computational logic,
 ranging from the logical foundations of program verification to inductive theorem proving.
\end{abstract}

\keywords{Horn formulas, fixed-point logic, program verification}

\maketitle

\section{Introduction}

% brief motivation of problem from Ackermann to CHC solving
Solving Boolean equations is one of the oldest and one of the most central
 problems of logic.
It goes back to the 19th century and was already thoroughly investigated
 in~\cite{Schroeder90Vorlesungen}.
It has connections and applications throughout computational logic:
It can be understood as a variant of unification, see, e.g.,~\cite{Martin89Boolean}.
It is strongly related to second-order quantifier elimination and thus has
 many applications in areas such as modal logic, knowledge representation, and common-sense
 reasoning, see, e.g.,~\cite{Gabbay08Second}.
It is suitable as a logical foundation for software verification via constrained Horn
 clause solving, see, e.g.,~\cite{Bjorner15Horn}

% example (\models_{\calC}mula equation)
We are interested in formula equations in first-order logic.
A formula equation is simply an existential second-order formula,
 such as, for example,
\begin{equation}\label{eq.runex}
\exists X\ ( X(2) \land \forall u\, (X(u) \impl X(u + u)) \land \neg X(3) ). \tag{*}
\end{equation}
Solving this equation means finding a first-order formula $\phi(v)$
 such that, when $X(v)$ is replaced by $\phi(v)$, a valid
 first-order formula is obtained.
In this example the formula $\exists w\, w+w = v$, expressing that
 $v$ is even, would be a solution (assuming some basic background theory of natural
 number arithmetic).

% Horn formula equations, fixed points
In this paper we consider a certain class of formula equations, Horn
 formula equations.
Horn formula equations are characterised by the syntactic restriction of
 being a clause set where every clause contains at most one positive
 occurrence of a predicate variable (just as the above example (*) does).
In the context of verification, Horn formula equations are known as constrained Horn clause sets.
It is known that solving a Horn formula equation is strongly related
 to the computation of a least fixed point.
In our example (*), the definite clauses, i.e., those with at least
 one positive occurrence of the predicate variable $X$, define the
 mapping $F:\powset(\Nat) \to \powset(\Nat)$,
 $S \mapsto \{ 2 \} \union \{ 2\cdot n \mid n \in S \}$.
The least fixed point of $F$ is $P = \{ 2^m \mid m \geq 1 \}$.
By inserting $P$ for $X$, the definite clauses are satisfied trivially.
Thus (*) is seen to be equivalent to the conjunction
 of the remaining clauses, i.e., those that do not contain a positive
 occurrence of $X$, satisfying $P$.
In our case this amounts to the statement $3 \notin P$.
This is true, so (*) is true.

This solution $P$, the set of powers of two, is known to be undefinable in
the original language of the problem: that of linear
 arithmetic~\cite{Enderton01Mathematical}.
Another solution of (*) is  $E = \{ n \geq 2 \mid \text{$n$ is even} \}$
 which is definable in linear arithmetic (by the formula $\exists y\, y + y = x$).
The question of the expressivity of solutions in a restricted language is central for
 solving formula equations, and, in a wider context, for inductive theorem proving
 and software verification where the solutions of formula equations correspond to
 induction and loop invariants respectively.
The technique of abstract interpretation, introduced in~\cite{Cousot77Abstract}, is a flexible
 and powerful approach to dealing with such situations in software verification.
In this paper we introduce model abstractions, a notion of abstract Tarski model which
 corresponds to abstract interpretation and thus allows to deal with this phenomenon
 in the context of formula equations.

% our contribution: fixed point theorem
In this paper we formulate and prove a fixed point theorem for
 Horn formula equations which, in its simplest form, states the following:
Every Horn formula equations $\exists X\, \varphi(X)$ has a solution of the
 form $\lfp_{X}~\Phi(X)$ and, moreover, this is the least solution.
Here, $\Phi(X)$ is a formula that defines an operator and is induced by $\varphi(X)$.
The mere fact that this is true is well-known in constrained Horn clause solving
 and in logic programming.
Our main contributions are: 1.~We formulate and prove this result in a logic
 with an explicit fixed
 point operator in a general way that encompasses both, simultaneous least fixed points
 and  abstract interpretation.
2.~We show that this formulation with an explicit least fixed point operator is useful
 for a wide range of different applications.
In fact, about half of this paper is devoted to discussing these different applications:
as a simple corollary one can obtain the expressibility of the
 weakest precondition and the strongest postcondition, and thus the partial correctness of an imperative
 program in FO[LFP] without expressivity hypothesis.
It permits to considerably simplify the proof of the decidability of affine formula equations~\cite{Hetzl20Decidability}.
As another corollary it allows a generalisation of a result by
 Ackermann~\cite{Ackermann35Untersuchungen} on second-order quantifier-elimination in a direction different
 from the recent generalisation~\cite{Wernhard17Approximating}.
A result from a recently introduced approach to inductive theorem proving with tree grammars described
 in~\cite{Eberhard15Inductive} can also be obtained from our fixed-point theorem as another straightforward corollary.

This paper is structured as follows: in Section~\ref{sec.preliminaries} we introduce
 basic technical notions and results.
In Section~\ref{sec.form_eq} we introduce and discuss formula equations and, in particular, Horn
 formula equations.
The abstract fixed-point theorem is stated and proved in detail in Section~\ref{sec.afpthm}.
In Section~\ref{sectionAffineSolutionProblem} we obtain the decidability of the affine solution
 problem as corollary of the abstract fixed-point theorem.
Section~\ref{sec.app_prog_verif} describes applications of our fixed-point theorem to the logical foundations of program verification,
 in particular the definability of the weakest precondition and the strongest postcondition in FO[LFP] without
 expressivity hypothesis.
In Section~\ref{sec.approximation} we use it to approximate certain second-order formulas
 and in Section~\ref{sec.IndProving} we obtain a result about a method for inductive theorem proving
 based on tree grammars~\cite{Eberhard15Inductive} as corollary of the fixed-point theorem.
In Section~\ref{sec.related_work} we discuss related work.

Many of the results of this paper originate from the second author's master's thesis~\cite{Kloibhofer20Fixed}.
A concise presentation of the main results of the master thesis, in particular the fixed-point theorem for
 standard semantics, can be found in~\cite{Hetzl21Fixpoint}.
The abstract version of the fixed-point theorem and its corollaries have been obtained later and have been announced in the abstract~\cite{Hetzl21Abstract}.
In this paper we give a full proof of the abstract fixed-point theorem and describe its applications in detail.

\section{Preliminaries}\label{sec.preliminaries}

\subsection{Notations}

A \emph{language} $\mathcal{L}$ consists of constant, predicate and function symbols. \emph{Terms}
 over $\mathcal{L}$ are built from individual variables, constant and function symbols of $\mathcal{L}$.
\emph{First-order (FO) formulas} over $\mathcal{L}$ are built from predicate symbols, terms, the
 logical connectives $\neg, \vee, \wedge$ and the quantifiers $\exists, \forall$ over individual
 variables.
Moreover, we use the symbols $\rightarrow$ and $\leftrightarrow$, where $A \rightarrow B$ is defined
 to be an abbreviation for $\neg A \vee B$ and $A \leftrightarrow B$ to be an abbreviation for
 $A \rightarrow B \wedge B \rightarrow A$.
In a \emph{second-order (SO) formula}, in addition, there may occur predicate variables, which may be
 bound by universal or existential quantifiers.
Predicate variables stand for predicates of a certain arity.
To distinguish them from individual variables we denote them with upper-case letters.
Unless otherwise noted we always talk about logic \emph{with equality}, which means that
 we have a specified binary predicate symbol ``$=$'', which is interpreted as equality.

A subformula $\psi$ of a formula $\varphi$ occurs positively in $\varphi$ if $\psi$ occurs in the scope of an even number of negations in $\varphi$ and occurs negatively if it occurs in the scope of an odd number of negations. For example, in the formula $\varphi := A \rightarrow B$
the subformula $A$ occurs negatively and $B$ occurs positively in $\varphi$, as $\varphi$
is an abbreviation for $\neg A \vee B$. A predicate variable $X$ \emph{occurs only positively} in $\varphi$ if every occurrence of $X$ as a subformula in $\varphi$ occurs positively. Conversely $X$ \emph{occurs only negatively} if every occurrence of $X$ as a subformula in $\varphi$ occurs negatively.

For a formula $\varphi$, variables $x_1,...,x_n$ and terms $t_1,...,t_n$ we define
 $\varphi[x_1\backslash t_1,...,x_n \backslash t_n]$ to be the formula $\varphi$, where
 every occurrence of $x_j$ is replaced by $t_j$ for every $j \in \{1,...,n\}$ simultaneously.
If a free variable in $t_1,...,t_n$ also occurs as a bound variable in $\varphi$, then we consider
 the variant $\varphi'$, where every bound variable which occurs in $t_1,...,t_n$ is renamed.
Thus the substitution $\varphi[x_1\backslash t_1,...,x_n \backslash t_n]$ is always defined, with
 possible renaming of bound variables.
Similarly, for predicate variables $X_1,...,X_n$ and formulas
 $\alpha_1,...,\alpha_n$ we define $\varphi[X_1 \backslash \alpha_1,...,X_n \backslash \alpha_n]$ to
 be the formula $\varphi$, where every occurrence of $X_j$ is replaced by $\alpha_j$ for every
 $j \in \{1,...,n\}$.
We use the usual vector notation, i.e., we write $\mybar{X}$ for $X_1,...,X_n$ if $n$ is clear from
 the context or unimportant.
Consistently we write $\varphi[\mybar{X} \backslash \mybar{\alpha}]$ for $\varphi[X_1 \backslash \alpha_1,...,X_n \backslash \alpha_n]$.

An $\mathcal{L}$-\emph{structure} is a pair $\mathcal{M} = (M,I)$, where $M$ is a set and $I$ is an interpretation of $\mathcal{L}$, i.e. $I(P) \subseteq M^k$ for a $k$-ary predicate symbol $P \in \mathcal{L}$ and $I(f): M^k \rightarrow M$ for a $k$-ary function symbol $f \in \mathcal{L}$.
An \emph{environment} is an interpretation of free variables by elements of the
 structure.
For an environment $\theta$, a variable $x$ and $m \in M$ we define $\theta[x:=m]$ by
 $\theta[x:=m](x) = m$ and $\theta[x:=m](y) = \theta(y)$ for $x \neq y$. $\theta[X:=S]$
 is defined analogously for a $k$-ary predicate variable $X$ and $S \subseteq M^k$.
For a structure $\mathcal{M}$, an environment $\theta$ and a formula $\varphi$ we define $\mathcal{M},\theta \models \varphi$ as usual. In particular $\mathcal{M},\theta \models \exists X \varphi(X)$ for a $k$-ary predicate variable $X$ if there exists an $S \subseteq M^k$ such that $\mathcal{M},\theta[X:=S] \models \varphi(X)$. We also write $\calM,\theta \models \phi(S)$ as an abbreviation for $\mathcal{M},\theta[X:=S] \models \varphi(X)$, where $X$ is a fresh predicate variable.
Similarly, if $a \in M$, we write $\mathcal{M}, \theta \models \psi(a)$ as abbreviation
 for $\mathcal{M}, \theta[x:=a] \models \psi(x)$.
We define $\mathcal{M} \models \varphi$ if $\mathcal{M},\theta \models \varphi$ for all environments $\theta$. A formula is \emph{valid}, written as $\models \varphi$, if $\mathcal{M} \models \varphi$ for all structures $\mathcal{M}$. We write $\varphi \equiv \psi$, if $\varphi$ and $\psi$ are logically equivalent, i.e. if $\models \varphi \leftrightarrow \psi$.

We also consider atomic \emph{least fixed-point (LFP)} formulas of the form
\begin{align*}
	[\lfp_R ~\varphi(R,\mybar{x})](\mybar{t}), 
\end{align*}
where $\varphi(R,\mybar{x})$ is a first-order formula in $\calL \cup \{R\}$, such that $R$ occurs only positively in $\varphi$, R is $k$-ary and $\mybar{t}$ is a $k$-tuple of terms. The semantics of an atomic LFP formula will be defined in the next subsection.

In this paper we talk about the following formula classes:
\begin{enumerate}
	\item FO: classical first-order logic,
	\item {SO:} Second-order logic,
	\item {FO[LFP]}: \emph{Least fixed-point logic} is an extension of first-order logic, which in addition to the usual formation rules allows atomic LFP-formulas,
	\item SO[LFP]: \emph{Second-order least fixed-point logic} is a syntactic extension of second-order logic, which additionally allows atomic LFP formulas.
\end{enumerate}

\begin{example}
	Let $\calL = \{E,s\}$ be the language of graphs with edge relation $E$ and one specified vertex $s$. Then
	\begin{enumerate}
		\item $\varphi(R,x) := E(s,x) \vee \exists y (R(s,y) \wedge E(y,x))$ is a FO-formula, where $R$ is a binary predicate variable.
		\item $\forall R (\varphi(R,x) \impl E(s,x))$ is a SO-formula,
		\item $\psi(y) := [\lfp_R ~\varphi(R,x)](y)$ is a FO[LFP]-formula and
		\item $\exists P (\psi(y) \leftrightarrow \neg P(y))$ is a SO[LFP]-formula.
	\end{enumerate}
\end{example}

Monotonicity is a key property for defining least fixed-points.
Before we define the semantics of atomic least fixed-point formulas we state a simple
 monotonicity lemma that will be useful at several occasion later on.
\begin{lemma}\label{lem.ImplPositiv}
	Let $\calL$ be a language and $\calM$ be an $\calL$-structure. Let $R$ be a predicate variable and let $S$ and $S'$ be relations in $\calM$ with the same arity as $R$. Let $\varphi$ be a first-order formula in $\calL \cup \{R\}$, such that $R$ occurs only positively in $\varphi$. Then
	\begin{align*}
		\calM \models S(\mybar{x}) \rightarrow S'(\mybar{x}) \quad \Rightarrow \quad \calM \models \varphi[R \backslash S] \rightarrow \varphi[R \backslash S'].
	\end{align*}
	Conversely, if $R$ occurs only negatively in $\varphi$, then
	\begin{align*}
		\calM \models S(\mybar{x}) \rightarrow S'(\mybar{x}) \quad \Rightarrow \quad \calM \models \varphi[R \backslash S'] \rightarrow \varphi[R \backslash S].
	\end{align*}
\end{lemma}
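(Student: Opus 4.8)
The plan is to prove both implications simultaneously by structural induction on $\varphi$. The reason for coupling them is that the defining condition---$R$ occurring only positively, respectively only negatively---flips under negation, so the positive claim for $\neg\psi$ reduces to the negative claim for $\psi$ and vice versa; neither statement can be pushed through the induction in isolation. Concretely, I would first read $\calM \models \varphi[R\backslash S] \rightarrow \varphi[R\backslash S']$ pointwise: fixing an arbitrary environment $\theta$, it suffices to show that $\calM,\theta \models \varphi[R\backslash S]$ implies $\calM,\theta \models \varphi[R\backslash S']$, where $\calM,\theta \models \varphi[R\backslash S]$ abbreviates $\calM,\theta[R:=S] \models \varphi$ (and symmetrically $\calM \models S(\mybar{x}) \rightarrow S'(\mybar{x})$ just says $S \subseteq S'$). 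Thus I prove, for every $\theta$: (P) if $R$ occurs only positively in $\varphi$ then $\calM,\theta\models\varphi[R\backslash S]$ implies $\calM,\theta\models\varphi[R\backslash S']$; and (N) if $R$ occurs only negatively then $\calM,\theta\models\varphi[R\backslash S']$ implies $\calM,\theta\models\varphi[R\backslash S]$.

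For the base cases, if $\varphi$ is an atom not containing $R$ then $R$ occurs neither positively nor negatively, the substitutions act as the identity, and both (P) and (N) hold by reflexivity of implication. If $\varphi$ is $R(\mybar{t})$ then $R$ occurs only positively, so only (P) is non-vacuous; there $\varphi[R\backslash S]$ and $\varphi[R\backslash S']$ evaluate to $S(\mybar{t})$ and $S'(\mybar{t})$ at $\theta$, and the implication is exactly the hypothesis $S \subseteq S'$ instantiated at the tuple denoted by $\mybar{t}$. The connective cases for $\vee$ and $\wedge$ are routine: positivity (resp.\ negativity) of $R$ in $\psi_1 \vee \psi_2$ or $\psi_1 \wedge \psi_2$ is inherited by both immediate subformulas, so I apply the induction hypothesis to each and recombine using monotonicity of $\vee$ and $\wedge$. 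For $\exists y\,\psi$ and $\forall y\,\psi$ I use that substituting for $R$ commutes with the quantifier (no capture arises, since $S,S'$ are relations of the structure and $R \neq y$), apply the induction hypothesis at every modified environment $\theta[y:=m]$, and then take the existential, respectively universal, combination.

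The one genuinely load-bearing case is negation. Suppose $R$ occurs only positively in $\neg\psi$; then $R$ occurs only negatively in $\psi$, so by the induction hypothesis (N) for $\psi$ we have that $\calM,\theta\models\psi[R\backslash S']$ implies $\calM,\theta\models\psi[R\backslash S]$. Contraposing gives that $\calM,\theta\models\neg\psi[R\backslash S]$ implies $\calM,\theta\models\neg\psi[R\backslash S']$, which is (P) for $\neg\psi$; the symmetric argument, using (P) for $\psi$, yields (N) for $\neg\psi$. This step is where the simultaneous formulation pays off, and it is the only place where the two statements genuinely feed into each other. I do not expect any serious obstacle beyond bookkeeping; the main point to get right is the clean semantic reading of $\varphi[R\backslash S]$ via $\theta[R:=S]$, which makes the quantifier and atomic cases transparent and isolates all the real content into the negation step.
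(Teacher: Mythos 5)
Your proof is correct. Note that the paper itself gives no proof of this lemma at all---it is stated as ``a simple monotonicity lemma'' and left to the reader---so there is nothing to diverge from: your simultaneous structural induction, with the statements (P) and (N) trading places at the negation case, is precisely the canonical argument the authors are implicitly relying on, and you have handled the one real subtlety (reading $\varphi[R\backslash S]$ semantically via $\theta[R:=S]$, so that the atomic case is the hypothesis $S \subseteq S'$ and quantifiers commute with the substitution without capture issues) correctly.
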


\subsection{Fixed-point logics}

\subsubsection{Semantics of least fixed-point formulas}
For the semantics of least fixed-point logic we need some background on fixed points, which
 is best presented in the context of complete lattices.

\begin{definition}
	Let $(E,\leq)$ be a complete lattice.
	\begin{enumerate}
		\item  A function $f: E \impl E$ is called an \emph{operator} on $E$.
		\item $f$ is called \emph{monotone} if for all $x \leq y$ it holds that $f(x) \leq f(y)$.
	%	\item An operator is \emph{inflationary} if $X \subseteq F(X)$ for all $X \subseteq A$.
		\item An element $x$ is a \emph{fixed point} of $f$ if $f(x) = x$.
		\item An element $x$ is the \emph{least fixed point} of $f$, if $x$ is a fixed point of $f$ and for any fixed point $y$ of $f$ it holds that $x \leq y$. This is denoted as $x = \lfp(f)$.
	\end{enumerate}
\end{definition}
%
%Note that there are monotone operators, which are not inflationary and inflationary operators, which are not monotone.
%
We are particularly interested in the complete lattice $(M^k,\subseteq)$, where $M$ is
 the domain of a structure $\calM$.
Towards the definition of the semantics of least fixed-point atoms we start with defining
 the operator induced by a formula.
\begin{definition}\label{def.fixedPointOperator}
        Let $\calL$ be a language and $R$ a predicate variable of arity $k$. Let $\varphi(R,x_1,...,x_k)$ be a first-order formula in $\calL \cup \{R\}$. Note that there could also occur free variables in $\phi$, which are not stated explicitly.
        For an $\calL$-structure $\calM$ define the operator $F_{\varphi}$ on $M^k$ by
	\begin{align*}
		F_{\varphi}: \quad X \mapsto \{\mybar{a} \in M^k ~|~ \calM \models \varphi(X,\mybar{a}) \}.
	\end{align*}
\end{definition}
For the definition of $F_{\phi}$ in the case that $\phi$ has free variables, recall that we defined $\mathcal{M} \models \varphi$ if $\mathcal{M},\theta \models \varphi$ for all environments $\theta$.

\begin{lemma}\label{lfpmonotoneLemma}
	If $R$ occurs only positively in $\varphi$, then $F_{\varphi}$ is monotone.
\end{lemma}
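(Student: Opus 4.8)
The plan is to derive monotonicity directly from the positive case of Lemma~\ref{lem.ImplPositiv}, which already packages exactly the semantic content of positivity that we need. Monotonicity of $F_\varphi$ on the complete lattice $(M^k,\subseteq)$ means precisely that whenever $S \subseteq S'$ are $k$-ary relations on $M$, we have $F_\varphi(S) \subseteq F_\varphi(S')$. So I would fix two such relations with $S \subseteq S'$ and aim to establish this inclusion between their images.

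First I would restate the hypothesis $S \subseteq S'$ in the semantic form required by Lemma~\ref{lem.ImplPositiv}. Since $\calM \models \psi$ means $\calM,\theta \models \psi$ for all environments $\theta$, the validity $\calM \models S(\mybar{x}) \rightarrow S'(\mybar{x})$ holds if and only if every tuple $\mybar{a} \in M^k$ lying in $S$ also lies in $S'$, i.e.\ exactly when $S \subseteq S'$. Then, because $R$ occurs only positively in $\varphi$, the positive direction of Lemma~\ref{lem.ImplPositiv} yields $\calM \models \varphi[R \backslash S] \rightarrow \varphi[R \backslash S']$. It remains to unfold the definition of $F_\varphi$: taking any $\mybar{a} \in F_\varphi(S)$, so that $\calM \models \varphi(S,\mybar{a})$, the implication just obtained (instantiated at the environment assigning $\mybar{a}$ to the displayed variables) gives $\calM \models \varphi(S',\mybar{a})$, i.e.\ $\mybar{a} \in F_\varphi(S')$. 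Hence $F_\varphi(S) \subseteq F_\varphi(S')$, as desired.

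I do not expect any genuine obstacle here: the argument is essentially a bookkeeping exercise that transports the implication supplied by Lemma~\ref{lem.ImplPositiv} through two definitional translations, namely set inclusion $\leftrightarrow$ validity of the implication $S(\mybar{x}) \rightarrow S'(\mybar{x})$, and membership in $F_\varphi(S)$ $\leftrightarrow$ satisfaction of $\varphi[R \backslash S]$. The only point requiring mild care is that $\varphi$ may contain free individual variables besides those displayed in $\varphi(R,\mybar{x})$; but since both the definition of $F_\varphi$ and Lemma~\ref{lem.ImplPositiv} are phrased via $\calM \models$, i.e.\ quantification over all environments, these additional free variables are treated uniformly and cause no difficulty.
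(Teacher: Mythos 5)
Your proposal is correct and follows exactly the paper's own argument: translate $S \subseteq S'$ into $\calM \models S(\mybar{x}) \rightarrow S'(\mybar{x})$, apply the positive case of Lemma~\ref{lem.ImplPositiv} to obtain $\calM \models \varphi[R \backslash S] \rightarrow \varphi[R \backslash S']$, and unfold the definition of $F_\varphi$ to conclude $F_\varphi(S) \subseteq F_\varphi(S')$. Your version merely spells out the definitional bookkeeping (and the remark about additional free variables) that the paper leaves implicit.
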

\begin{proof}
	Let $A \subseteq B \subseteq M^k$. We have $\calM \models A(\mybar{x}) \rightarrow B(\mybar{x})$. Then Lemma \ref{lem.ImplPositiv} yields $\calM \models \varphi(A,\mybar{a}) \rightarrow \varphi(B,\mybar{a})$. Thus $F_{\varphi}(A) \subseteq F_{\varphi}(B)$.
\end{proof}

The following theorem is an important result in the study of complete lattices, a proof can be found in \cite{Arnold2001}. 

\begin{theorem}[Knaster-Tarski]\label{Knaster-Tarski}
Let $(E,\leq)$ be a complete lattice and $f$ be a monotone operator on $E$. Then $f$ has a least fixed point and 
	\begin{align*}
		\lfp(f) = \bigwedge \{x ~|~ f(x) \leq x\}.
	\end{align*} 
\end{theorem}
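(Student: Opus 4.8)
The plan is to prove the Knaster-Tarski theorem, which states that a monotone operator $f$ on a complete lattice $(E,\leq)$ has a least fixed point equal to $\bigwedge\{x \mid f(x) \leq x\}$.

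First I would introduce the set of pre-fixed points $P := \{x \mid f(x) \leq x\}$ and define $p := \bigwedge P$, which exists because $E$ is a complete lattice. The goal is to show that $p$ is a fixed point of $f$ and that it is the least one.

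Next I would show $f(p) \leq p$, i.e.\ that $p$ is itself a pre-fixed point. For any $x \in P$ we have $p \leq x$ by definition of the infimum, and monotonicity gives $f(p) \leq f(x) \leq x$. Thus $f(p)$ is a lower bound of $P$, and since $p$ is the greatest lower bound, $f(p) \leq p$. For the reverse inequality, I would apply $f$ to $f(p) \leq p$ and use monotonicity to obtain $f(f(p)) \leq f(p)$, which says $f(p) \in P$; hence $p \leq f(p)$ by definition of $p$ as a lower bound of $P$. Combining the two inequalities yields $f(p) = p$, so $p$ is a fixed point.

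Finally I would verify minimality: any fixed point $y$ satisfies $f(y) = y \leq y$, so $y \in P$, whence $p \leq y$. This shows $p = \lfp(f)$ and simultaneously establishes the displayed formula $\lfp(f) = \bigwedge\{x \mid f(x) \leq x\}$. The argument is elementary and purely order-theoretic; the only subtle point is being careful to distinguish the two roles of $p$ (as a lower bound and as the greatest lower bound) when deriving the two inequalities, but there is no real obstacle here.
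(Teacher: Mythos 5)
Your proof is correct and complete: it is the standard Knaster--Tarski argument via the infimum of the pre-fixed points, establishing $f(p) \leq p$ from monotonicity, the reverse inequality by applying $f$ once more, and minimality from the fact that every fixed point is a pre-fixed point. The paper itself gives no proof of this theorem (it only cites the literature), and the argument you give is precisely the classical one found in the cited reference, so there is nothing to add or correct.
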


\begin{definition}[LFP]
	The semantics of an atomic least fixed-point formula is defined as follows:
	\begin{align*}
		\calM \models [\lfp_R ~\varphi(R,\mybar{x})] (\mybar{a}) :\Leftrightarrow~ \mybar{a} \in \lfp(F_{\varphi}).
	\end{align*}
\end{definition}

As $R$ occurs only positively, $F_{\varphi}$ is a monotone operator according to Lemma \ref{lfpmonotoneLemma}.
Using Theorem~\ref{Knaster-Tarski}, the least fixed point exists and thus the semantics
 of least fixed-point atoms is well-defined.

\begin{example}\label{lfpExample}
	Let $\calL = \{E\}$ be the language of graphs, where $E$ is a binary relation symbol
	representing the edge relation, and let $R$ be a binary predicate variable. Define
	\begin{align*}
		\varphi(R,u,v) \equiv E(u,v) \vee \exists w (R(u,w) \wedge E(w,v)).
	\end{align*}
	As $R$ occurs only positively in $\varphi$ we can define $[\lfp_R ~\varphi(R,u,v)] (x,y)$, which holds iff there is a path from $x$ to $y$.
\end{example}

\subsubsection{Simultaneous fixed point logics}

The previously introduced concepts may be generalised for product lattices. This leads to the notion of simultaneous fixed points. %We skip the proofs as they are similar, only more technical, to their counterparts in the previous subsection.

\begin{definition}
	Let $(E_1,\leq),...,(E_n,\leq)$ be complete lattices.
	\begin{enumerate}
		\item  A function  $f: E_1\times\cdots\times E_n \impl E_1\times\cdots\times E_n$ is called an \emph{$n$-ary operator} on $E_1\times\cdots\times E_n$.
		\item For two tuples of elements $\mybar{x} := (x_1,...,x_n)$ and $\mybar{y} :=(y_1,...,y_n)$ in $ E_1\times\cdots\times E_n$ we write $\mybar{x} \leq \mybar{y}$ if $x_i \leq y_i$ for all $i \in \{1,...,n\}$.
		\item $f$ is \emph{monotone} if for all $\mybar{x} \leq \mybar{y}$ it holds that $f(\mybar{x}) \leq f(\mybar{y})$.
		%\item $F$ is \emph{inflationary} if for all $\mybar{X}$ and $i \in \{1,...,n\}$ it holds that $X_i \subseteq F(\mybar{X})_i$.
		\item $\mybar{x}$ is a \emph{fixed point} of $f$ if $f(\mybar{x}) = \mybar{x}$.
		\item If $\mybar{x}$ is a fixed point of $f$ and for every fixed point $\mybar{y}$ we have $\mybar{x} \leq \mybar{y}$, then $\mybar{x}$ is called the \emph{least fixed point} of $f$, written $\mybar{x} = \lfp(f)$.
	\end{enumerate}
\end{definition}

Similarly to Theorem \ref{Knaster-Tarski} we have (cf. \cite{Arnold2001}):
\begin{theorem}
	Let $(E_1,\leq),...,(E_n,\leq)$ be complete lattices and $f$  be a monotone operator on $E_1\times\cdots\times E_n$. Then $f$ has a least fixed point.
\end{theorem}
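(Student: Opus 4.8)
The plan is to reduce this to the single-lattice Knaster--Tarski theorem (Theorem~\ref{Knaster-Tarski}) by observing that the product $E := E_1 \times \cdots \times E_n$, ordered componentwise, is itself a complete lattice, and that the notion of monotone $n$-ary operator coincides with ordinary monotonicity on $(E,\leq)$.

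First I would show that $(E,\leq)$, with $\mybar{x} \leq \mybar{y}$ iff $x_i \leq y_i$ for all $i \in \{1,\dots,n\}$, is a complete lattice. This is routine: for any subset $A \subseteq E$, setting $A_i := \{x_i \mid (x_1,\dots,x_n) \in A\}$, the tuple $\bigwedge A := (\bigwedge A_1, \dots, \bigwedge A_n)$ --- whose $i$-th component is the meet of $A_i$ in the complete lattice $E_i$ --- is the infimum of $A$ in $E$. Indeed, it is a lower bound since each component lies below the corresponding component of every element of $A$, and it is the greatest such since any lower bound of $A$ must be componentwise below it. Suprema are obtained dually, so every subset of $E$ has both an infimum and a supremum.

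Second, I would note that the three relevant notions all transfer verbatim. Monotonicity of $f$ as an $n$-ary operator is exactly monotonicity of $f$ as a single operator on $(E,\leq)$, since in both cases it asserts that $\mybar{x} \leq \mybar{y}$ implies $f(\mybar{x}) \leq f(\mybar{y})$ with respect to the componentwise order. Likewise, a fixed point (resp.\ least fixed point) of the $n$-ary operator is precisely a fixed point (resp.\ least fixed point) of $f$ on $(E,\leq)$.

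With these identifications in place, the conclusion is immediate: applying Theorem~\ref{Knaster-Tarski} to the complete lattice $(E,\leq)$ and the monotone operator $f$ yields that $f$ has a least fixed point. The only step requiring any verification is the first --- confirming that componentwise meets are genuine meets in the product order --- and this presents no real difficulty; everything else is a direct translation of definitions. If one wished, one could additionally record the explicit formula $\lfp(f) = \bigwedge \{\mybar{x} \mid f(\mybar{x}) \leq \mybar{x}\}$ inherited from Theorem~\ref{Knaster-Tarski}.
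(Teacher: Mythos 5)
Your proof is correct and complete. The paper in fact gives no proof of this statement at all---it merely remarks that it holds ``similarly to Theorem~\ref{Knaster-Tarski}'' and defers to \cite{Arnold2001}---and your argument is exactly the standard one that remark alludes to: the componentwise order makes $E_1 \times \cdots \times E_n$ a complete lattice (with infima and suprema computed componentwise, as you verify), under which the paper's notions of monotone $n$-ary operator, fixed point, and least fixed point coincide verbatim with the single-lattice notions, so Theorem~\ref{Knaster-Tarski} applies directly.
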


\begin{definition}\label{def.SimFixedPointOperator}
	Let $\calL$ be a language and $R_1,...,R_n$ be predicate variables, with $R_i$ being of arity $k_i$. Consider an $n$-tuple of first-order formulas  $\Phi = (\varphi_i(R_1,\ldots,R_n,\mybar{x_i}))_{i=1}^n$ in  $\calL \cup \{R_1,...,R_n\}$, where $|\mybar{x_i}| = k_i$.

	For a structure $\calM$ define 
	\begin{align*}
		F_i^{\calM}:\quad  M^{k_1} \times \cdots \times M^{k_n} &\rightarrow M^{k_i},\\
		(X_1,...,X_n) &\mapsto \{\mybar{x} \in M^{k_i} ~|~ \calM \models \varphi_i(X_1,...,X_n,\mybar{x}) \}.
	\end{align*}
	Now define $F_{\Phi}^{\calM} = (F_1^{\calM},...,F_n^{\calM})$. If $\calM$ is clear from the context we write $F_{\Phi}$. Moreover we write $(F_{\Phi})_i$ for $F_i^{\calM}$.
\end{definition}

\begin{lemma}\label{lfpSimmonotone}
	If $R_1,...,R_n$ occur only positively in $\Phi$, then $F_{\Phi}$ is monotone.
\end{lemma}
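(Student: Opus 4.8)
The plan is to reduce the simultaneous case to the single-variable monotonicity result (Lemma~\ref{lfpmonotoneLemma}), since $F_\Phi$ is defined componentwise and each component $F_i^\calM$ is built from a formula $\varphi_i$ in which all of $R_1,\ldots,R_n$ occur only positively. Unwinding the definition of monotonicity for an $n$-ary operator, I would fix tuples $(A_1,\ldots,A_n) \leq (B_1,\ldots,B_n)$ in $M^{k_1}\times\cdots\times M^{k_n}$, meaning $A_j \subseteq B_j$ for every $j$, and show that $(F_\Phi(A_1,\ldots,A_n))_i \subseteq (F_\Phi(B_1,\ldots,B_n))_i$ for each $i$, i.e.\ that $F_i^\calM(\mybar A) \subseteq F_i^\calM(\mybar B)$. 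This is exactly the statement that $\calM \models \varphi_i(A_1,\ldots,A_n,\mybar a)$ implies $\calM \models \varphi_i(B_1,\ldots,B_n,\mybar a)$ for all $\mybar a$, so the whole lemma comes down to a validity of the form $\calM \models \varphi_i[\mybar R\backslash \mybar A] \rightarrow \varphi_i[\mybar R\backslash \mybar B]$.

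The key step is therefore to obtain this implication from the hypotheses $\calM \models A_j(\mybar x) \rightarrow B_j(\mybar x)$. The existing Lemma~\ref{lem.ImplPositiv} handles substitution for a single positively-occurring predicate variable, so the work is to iterate it across the $n$ variables. Concretely, I would argue by a chain of intermediate relations
\begin{align*}
	\varphi_i[R_1\backslash A_1,\ldots,R_n\backslash A_n],\
	\varphi_i[R_1\backslash B_1, R_2\backslash A_2,\ldots,R_n\backslash A_n],\ \ldots,\
	\varphi_i[R_1\backslash B_1,\ldots,R_n\backslash B_n],
\end{align*}
replacing one $A_j$ by $B_j$ at each stage. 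At the $j$-th stage, all predicate variables except the one being swapped are already instantiated by fixed relations, so the formula is first-order in the single remaining positive variable $R_j$; since $R_j$ occurs only positively in $\varphi_i$ (and substituting the other relations preserves this), Lemma~\ref{lem.ImplPositiv} applied with $S = A_j$ and $S' = B_j$ gives that $\calM$ validates the implication from the earlier member of the chain to the next. Composing these $n$ implications transitively yields $\calM \models \varphi_i[\mybar R\backslash \mybar A]\rightarrow \varphi_i[\mybar R\backslash \mybar B]$, which is what we need.

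The main obstacle, though a mild one, is a bookkeeping point rather than a conceptual one: I must check that substituting the other predicate variables by concrete relations does not disturb the positivity of the variable currently being swapped, so that Lemma~\ref{lem.ImplPositiv} genuinely applies at each stage. This holds because positivity is determined by the parity of the number of negations in whose scope an occurrence lies, and replacing a distinct variable $R_m$ ($m\neq j$) by a fixed relation $A_m$ or $B_m$ neither creates nor removes negations governing the occurrences of $R_j$; hence every occurrence of $R_j$ remains positive throughout. With positivity preserved at each step, the inductive chain goes through and the lemma follows. I would phrase this either as the explicit $n$-step chain above or, equivalently, as a short induction on $n$ with Lemma~\ref{lem.ImplPositiv} as the base/inductive case, reducing the last variable while keeping the rest fixed.
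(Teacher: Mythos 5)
Your proof is correct and follows essentially the same route as the paper: the paper disposes of this lemma by declaring it ``analogous to the proof of Lemma~\ref{lfpmonotoneLemma}'', i.e.\ unfold the definition of monotonicity componentwise and use positivity via Lemma~\ref{lem.ImplPositiv} to get $\calM \models \varphi_i[\mybar{R}\backslash\mybar{A}] \rightarrow \varphi_i[\mybar{R}\backslash\mybar{B}]$. Your hybrid-chain argument, swapping one $A_j$ for $B_j$ at a time and checking that substituting relations for the other variables preserves positivity, is a legitimate and carefully justified way of filling in exactly the multi-variable bookkeeping that the paper's one-line proof leaves implicit.
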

\begin{proof}
	This proof is analogous to the proof of Lemma \ref{lfpmonotoneLemma}.
\end{proof}

\begin{definition}[LFP$^{SIM}$]
	Let $\calL$ be a language and $R_1,...,R_n$ be predicate variables with $R_i$ being of arity $k_i$. Let  $\Phi = (\varphi_i(R_1,..,.R_n,\mybar{x_i}))_{i=1}^n$ be an $n$-tuple of first-order formulas in  $\calL \cup \{R_1,...,R_n\}$, where $|\mybar{x_i}| = k_i$ and $R_1,..,R_n$ occur only positively in $\Phi$ .
	An atomic \emph{simultaneous least fixed-point(LFP$^{SIM}$)} formula is of the form
	\begin{align*}
		[\lfp_{R_i} ~\Phi](\mybar{t}), 
	\end{align*}
 where $\mybar{t}$ is a $k_i$-tuple of terms in $\calL$.
	The semantics is defined as follows:
	\begin{align*}
		\calM \models [\lfp_{R_i} ~\Phi](\mybar{a}) :~\Leftrightarrow~ \mybar{a} \in \lfp(F_{\Phi})_i.
	\end{align*}
\end{definition}

\begin{example}
	Again consider the language of graphs $\calL = \{E\}$, where $E$ is a binary relation symbol. Let $R$ and $S$ be two binary predicate variables. Define $\Phi$ as
	\begin{align*}
		\varphi_1(R,S,u,v) &\equiv E(u,v) \vee \exists w (S(u,w) \wedge E(w,v)),\\
		\varphi_2(R,S,u,v) &\equiv \exists w (R(u,w) \wedge E(w,v)).
	\end{align*}
	As $R$ and $S$ occur only positively in $\Phi$, we can define the atomic LFP$^{SIM}$ formulas $[\lfp_R ~\Phi](x,y)$ and $[\lfp_S ~\Phi](x,y)$. The former holds iff there is a path of odd length and the latter iff there is an path of even length from $x$ to $y$.
\end{example}

Define FO[LFP$^{SIM}$] to be an extension of first-order logic, which also allows atomic LFP$^{SIM}$ formulas. Analogously SO[LFP$^{SIM}$] is an extension of second-order logic, which additionally allows atomic LFP$^{SIM}$ formulas.

Note that  FO[LFP$^{SIM}$], while being more convenient, is not more expressive than FO[LFP]
 in the following sense:
For every formula $\varphi$ in FO[LFP$^{SIM}]$ there exists a formula $\psi$ in FO[LFP],
 such that $\varphi \equiv \psi.$
A proof of this result can be found in \cite{Libkin2004}.
%
% \begin{theorem}\label{leastFixedPointLogicSimultaneousEquivalent}
% 	For every formula $\varphi$ in FO[LFP$^{SIM}]$ there exists a formula $\psi$ in FO[LFP], such that $\varphi \equiv \psi.$
% \end{theorem}
%
Note that the converse of this result is also true, as FO[LFP$^{SIM}]$ is a generalisation
 of FO[LFP].
Thus we can use LFP and LFP$^{SIM}$ interchangeably.
It is also well-known that the LFP-operator can be expressed in second-order logic so that we have:  For every FO[LFP] formula $\varphi$
 there exists a SO formula $\psi$ such that $\varphi  \equiv \psi$.

%
% \begin{lemma}
% 	Let $\calL$ be a language and $\varphi$ be a FO[LFP] formula. Then there exists a SO formula $\psi$ such that $\varphi  \equiv \psi$.
% \end{lemma}
% \begin{proof}
% 	By induction on the formula $\varphi$. The only case we have to check is an atomic formula of the form 
% 	\begin{align*}
% 		\varphi(\mybar{y}) \equiv [\lfp_R ~\varphi(R,\mybar{x})](\mybar{y}).
% 	\end{align*}
% 	Define
% 	\begin{align*}
% 		\psi(\mybar{y}) \equiv \exists X [& \forall \mybar{x} (\varphi(X,\mybar{x}) \impl X(\mybar{x})) \\
% 		& \wedge \forall Y [\forall \mybar{x} (\varphi(Y,\mybar{x}) \impl Y(\mybar{x})) \impl \forall \mybar{x} ( X(\mybar{x}) \impl Y(\mybar{x}))] \\
% 		&\wedge X(\mybar{y})], 
% 	\end{align*}
% 	then $\varphi \equiv \psi$.
% \end{proof}
% \hrule

\subsubsection{Fixed-point approximation}\label{sec.prelim.approximation}
We are particularly interested in least fixed-point formulas defined from existential first-order formulas.
We call a first-order formula $\varphi$ existential, if there exists a formula $\tilde{\varphi}$ in
prenex normal form without universal quantifiers, that is logically equivalent to $\varphi$.
For many applications, in particular in software verification, it suffices to consider least fixed-points
of existential formulas.
This point has been made prominently in~\cite{Blass87Existential}.
Those least fixed-point formulas may be approximated by first-order logic formulas. In order to achieve this we first define infinite conjunctions and disjunctions:

\begin{definition}\label{def.infiniteFormulas}
	Let $\calL$ be a language and $\calM$ be an $\calL$-structure. Let $\Psi$ be a set of first-order formulas in $\calL$. Define
	\begin{align*}
		\calM \models \bigvee_{\varphi \in \Psi} \varphi ~&:\Leftrightarrow~ \exists \varphi \in \Psi :~ \calM \models \varphi, \\
		\calM \models \bigwedge_{\varphi \in \Psi} \varphi ~&:\Leftrightarrow~ \forall \varphi \in \Psi :~ \calM \models \varphi,
	\end{align*}
\end{definition}

\begin{lemma}\label{lem.infiniteFormulas}
	Let $\calL$ be a language and let $\Psi_1,\Psi_2$ be sets of first-order formulas in $\calL$. Then
	\begin{enumerate}
		\item $ \neg \bigvee_{\varphi \in \Psi_1} \varphi \equiv \bigwedge_{\varphi \in \Psi_1} \neg \varphi$,
		\item  $\bigwedge_{\varphi_1 \in \Psi_1} \varphi_1 \vee \bigwedge_{\varphi_2 \in \Psi_2} \varphi_2 \equiv \bigwedge_{(\varphi_1,\varphi_2) \in \Psi_1\times\Psi_2} \varphi_1 \vee \varphi_2$,
		\item  $\bigwedge_{\varphi_1 \in \Psi_1} \varphi_1 \wedge \bigwedge_{\varphi_2 \in \Psi_2} \varphi_2 \equiv \bigwedge_{\varphi \in \Psi_1\cup\Psi_2} \varphi$,
		\item $\forall y \bigwedge_{\varphi \in \Psi_1} \varphi(y) \equiv \bigwedge_{\varphi \in \Psi_1} \forall y ~\varphi(y)$.
	\end{enumerate}
\end{lemma}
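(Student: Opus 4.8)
The plan is to fix an arbitrary $\calL$-structure $\calM$ together with an environment $\theta$ and to prove, for each of the four items, that $\calM,\theta \models \mathrm{LHS}$ holds if and only if $\calM,\theta \models \mathrm{RHS}$ holds; since the equivalence $\equiv$ (spelled out via $\liff$) is itself a statement about every structure and every environment, establishing the biconditional pointwise in $(\calM,\theta)$ yields the claimed logical equivalences. To do this I first extend the infinitary satisfaction relation of Definition~\ref{def.infiniteFormulas} from $\calM \models$ to $\calM,\theta \models$ in the obvious way, setting $\calM,\theta \models \bigvee_{\varphi \in \Psi}\varphi$ iff $\calM,\theta \models \varphi$ for some $\varphi \in \Psi$, and $\calM,\theta \models \bigwedge_{\varphi \in \Psi}\varphi$ iff $\calM,\theta \models \varphi$ for every $\varphi \in \Psi$; the relation $\calM \models$ of the definition is then just the special case ``for all $\theta$''. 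Each item now reduces to a purely meta-level manipulation of the quantifiers ``for some $\varphi$'' and ``for every $\varphi$''.

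Concretely, item~(1) unfolds to the statement that it is not the case that $\calM,\theta \models \varphi$ for some $\varphi \in \Psi_1$, which by the meta-level De~Morgan law $\neg\exists = \forall\neg$ is equivalent to $\calM,\theta \models \neg\varphi$ for every $\varphi \in \Psi_1$, i.e.\ to $\calM,\theta \models \bigwedge_{\varphi}\neg\varphi$. Item~(3) is the observation that satisfaction of every formula in $\Psi_1$ together with satisfaction of every formula in $\Psi_2$ is the same as satisfaction of every formula in $\Psi_1 \cup \Psi_2$. Item~(4) unfolds $\forall y$ to a meta-level universal quantifier over the domain $M$, so that both sides assert that for every $a \in M$ and every $\varphi \in \Psi_1$ we have $\calM,\theta[y:=a] \models \varphi(y)$; the two sides differ only in the order of these two universal quantifiers, and commuting them gives the equivalence.

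The one step that needs an actual argument, and which I expect to be the main obstacle, is item~(2), the distributivity of $\vee$ over an infinite conjunction. After unfolding, the left-hand side reads ``$\calM,\theta \models \varphi_1$ for all $\varphi_1 \in \Psi_1$, or $\calM,\theta \models \varphi_2$ for all $\varphi_2 \in \Psi_2$'', while the right-hand side reads ``for all $(\varphi_1,\varphi_2) \in \Psi_1 \times \Psi_2$, $\calM,\theta \models \varphi_1$ or $\calM,\theta \models \varphi_2$''. The forward direction is immediate by cases on which disjunct of the left-hand side holds. For the converse I would argue by contraposition: if the left-hand side fails, then there are $\varphi_1 \in \Psi_1$ and $\varphi_2 \in \Psi_2$ with $\calM,\theta \not\models \varphi_1$ and $\calM,\theta \not\models \varphi_2$, and the single pair $(\varphi_1,\varphi_2)$ then witnesses the failure of the right-hand side. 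It is precisely here that fixing the environment $\theta$ is essential: the \emph{same} $\theta$ must simultaneously refute $\varphi_1$ and $\varphi_2$. This is why the equivalences must be proved pointwise in $\theta$ rather than under the ``for all environments'' reading of $\calM \models$, and it is the only genuinely non-formal point in the proof.
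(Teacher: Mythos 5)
Your proof is correct. The paper itself offers no proof of this lemma at all---it is stated as a routine fact immediately after Definition~\ref{def.infiniteFormulas}---so there is no official argument to compare against; your write-up supplies the verification the authors omitted, and it does so along the only viable route. Your central observation deserves emphasis: Definition~\ref{def.infiniteFormulas} introduces the infinitary connectives only through the relation $\calM \models$, which the paper has declared to mean ``for all environments''. Read literally, that uniform semantics does not even assign meaning to the nested formulas occurring in the lemma (a negated infinite disjunction, a disjunction of two infinite conjunctions), and item~(2) would actually be false under it: for $\Psi_1 = \{P(x)\}$ and $\Psi_2 = \{\neg P(x)\}$ in a structure where $P$ is neither empty nor total, every pair satisfies $\calM \models \varphi_1 \vee \varphi_2$, yet neither $\calM \models P(x)$ nor $\calM \models \neg P(x)$ holds. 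So your pointwise relation $\calM,\theta \models$ is not a stylistic preference but the reading the lemma requires, exactly as you argue in your discussion of item~(2). One small correction to your framing: the paper's uniform relation is not literally the ``for all $\theta$'' specialisation of your pointwise one in the case of disjunction (``for all $\theta$ there is $\varphi$'' is weaker than ``there is $\varphi$ for all $\theta$''); the two readings do agree on conjunctions and on formulas whose free variables have been instantiated by domain elements, which is how the paper actually uses $\sigma^{\omega}_{j,\Phi}$ in Lemma~\ref{lem.apprEquivSetsFormulas} and Theorem~\ref{thm.apprLfp}, so nothing downstream is affected. The four individual arguments---meta-level De Morgan for (1), case split plus contraposition with a single witnessing pair for (2), merging of index sets for (3), and commuting two universal quantifiers for (4)---are all correct and complete.
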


For existential first-order formulas, we can define a sequence of sets, that converges to the least fixed point. These sets may be described by first-order formulas.

\begin{definition}
		Let $(E_1,\leq),...,(E_n,\leq)$ be complete lattices and $f$ be an $n$-ary operator on $E_1 \times \cdots \times E_n$.  By induction define the following elements in $E_1\times\cdots\times E_n $:
	\begin{align}\label{ifpSequenceDefinition}
		S_f^0 &= (\bot,...,\bot),\nonumber\\
		S_f^{n+1} &= f(S_f^{n}),\\
		S_f^{\omega} &= \bigcup_{n \in \omega} S_f^{n}.\nonumber
	\end{align}
If it is obvious which operator we talk of, we sometimes omit the subscript and write $S^{n}$ and $S^\omega$.
\end{definition}

We are interested in the operator $F_{\Phi}$ of Definition \ref{def.SimFixedPointOperator}. If $F_{\Phi}$ is defined from existential formulas, then $F_{\Phi}$ is continuous. In this case Kleene's fixed point theorem states that $S^\omega$ coincides with the least fixed point of $F_{\Phi}$. The lemma also follows from \cite[Theorem 9]{Blass87Existential}:
\begin{lemma}\label{lem.apprClosureOrdinalOmega}
		Let $\calL$ be a language and $R_1,...,R_n$ be predicate variables. Let $\Phi =  (\varphi_i(R_1,...,R_n,\mybar{x_i}))_{i=1}^n$ be an $n$-tuple of existential first-order formulas, such that $R_1,...,R_n$ occur only positively in $\Phi$. Let $\calM$ be an $\calL$-structure and $f = F_{\Phi}^{\calM}$. Then $S_f^\omega = \lfp(f)$.
\end{lemma}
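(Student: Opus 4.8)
The plan is to prove this via Kleene's fixed-point theorem, the essential ingredient being that $f = F_{\Phi}^{\calM}$ is \emph{continuous} on the product lattice $M^{k_1} \times \cdots \times M^{k_n}$; this is precisely where the existentiality hypothesis will be used. First I would record that $f$ is monotone by Lemma \ref{lfpSimmonotone}, so that the chain $S_f^0 \subseteq S_f^1 \subseteq \cdots$ is increasing (by an easy induction using $S_f^0 = (\bot,\ldots,\bot)$ and $S_f^{m+1} = f(S_f^m)$), and hence its componentwise union $S_f^{\omega}$ is well-defined. It then suffices to establish two things: that $S_f^{\omega}$ is a fixed point of $f$, and that $S_f^{\omega} \subseteq \mybar{y}$ for every fixed point $\mybar{y}$ of $f$.

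The minimality part is routine: if $f(\mybar{y}) = \mybar{y}$, then in particular $f(\mybar{y}) \subseteq \mybar{y}$, and one shows $S_f^m \subseteq \mybar{y}$ for all $m$ by induction, where the base case is immediate and the induction step uses monotonicity together with $f(\mybar{y}) \subseteq \mybar{y}$; thus $S_f^{\omega} = \bigcup_m S_f^m \subseteq \mybar{y}$. Alternatively, this follows directly from the characterisation $\lfp(f) = \bigwedge\{x \mid f(x) \le x\}$ of Theorem \ref{Knaster-Tarski}.

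The heart of the argument, and the step I expect to be the main obstacle, is showing that $S_f^{\omega}$ is a fixed point, i.e., $f(S_f^{\omega}) = \bigcup_m f(S_f^m)$; the inclusion $\supseteq$ is immediate from monotonicity, and the converse is where continuity, and hence the restriction to existential formulas, enters. Fix a component $i$ and a tuple $\mybar{a}$ with $\calM \models \varphi_i(S_f^{\omega}, \mybar{a})$. Writing $\varphi_i$, up to logical equivalence, as $\exists \mybar{y}\, \psi_i$ with $\psi_i$ quantifier-free, I would choose witnesses $\mybar{b}$ so that $\calM \models \psi_i(S_f^{\omega}, \mybar{a}, \mybar{b})$. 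The matrix $\psi_i$ contains only finitely many atoms of the form $R_j(\mybar{s})$, and since $R_1,\ldots,R_n$ occur only positively, $\psi_i$ is monotone in the truth values of these atoms. Each such atom that is \emph{true} at $S_f^{\omega}$ holds because its tuple lies in $(S_f^{\omega})_j = \bigcup_m (S_f^m)_j$, hence already at some finite stage; taking the maximum $N$ over the finitely many relevant atoms, all of them are already true at stage $N$, while the atoms that are false at $S_f^{\omega}$ remain false at $S_f^N \subseteq S_f^{\omega}$ because the chain is increasing.

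Consequently the $R$-atoms occurring in $\psi_i$ take the same truth values under $S_f^N$ as under $S_f^{\omega}$, while the $\calL$-atoms are independent of the stage, so $\calM \models \psi_i(S_f^N, \mybar{a}, \mybar{b})$ and therefore $\calM \models \varphi_i(S_f^N, \mybar{a})$. This gives $\mybar{a} \in (f(S_f^N))_i \subseteq (\bigcup_m f(S_f^m))_i$, proving the nontrivial inclusion and hence continuity. The decisive use of existentiality is the reduction to a quantifier-free matrix with only finitely many predicate-variable atoms: this is exactly what lets a single finite stage $N$ suffice, which would fail for an unbounded universal quantifier ranging over the domain.
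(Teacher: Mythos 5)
Your proposal is correct and follows exactly the route the paper takes: the paper proves this lemma by observing that $F_{\Phi}$ is continuous when defined from existential formulas and then invoking Kleene's fixed-point theorem (citing Blass--Gurevich, Theorem 9, as an alternative source). Your write-up simply fills in the details the paper leaves implicit---the finite-stage argument extracting a bound $N$ from the finitely many $R_j$-atoms in the quantifier-free matrix is precisely the standard proof of that continuity claim, and your fixed-point-plus-minimality argument is the standard Kleene argument.
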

% \begin{proof}
% 	For notational simplicity, we prove the lemma for one predicate variable $R$, the general proof is analogous. Let $\calM$ be a model and let $\mybar{a} \in F_{\varphi}(S_{F_{\varphi}}^{\omega})$. This means $\calM,[R:= S_{F_{\varphi}}^{\omega}] \models \varphi(R,\mybar{a})$. As $\varphi$ is existential it can be written as $\varphi(R,\mybar{a}) \equiv \exists \mybar{y} ~\tilde{\varphi}(R,\mybar{a},\mybar{y})$, where $\tilde{\varphi}$ is quantifier-free. Choose $\mybar{y_0}$ such that 
% 	\[
% 	\calM,[R:= S_{F_{\varphi}}^{\omega},\mybar{y} := \mybar{y_0}] \models \tilde{\varphi}(R,\mybar{a},\mybar{y}).
% 	\]
% 	In $\tilde{\varphi}$ there are only finitely many occurrences of $R(\mybar{t}(\mybar{a},\mybar{y}))$, where $\mybar{t}$ is a tuple of terms in $\calL$. As $S_{F_{\varphi}}^{\omega} = \bigcup_{l \in \omega} S_{F_{\varphi}}^{l}$, there is an $l_0 \in \omega$ such that these occurrences of $R$ interpreted as $S_{F_{\varphi}}^{\omega}$ and $S_{F_{\varphi}}^{l_0}$ yield the same truth value. Hence
% 		\[
% 	\calM,[R:= S_{F_{\varphi}}^{l_0},\mybar{y} := \mybar{y_0}] \models \tilde{\varphi}(R,\mybar{a},\mybar{y})
% 	\]
% 	and therefore $\mybar{a} \in F_{\varphi}(S_{F_{\varphi}}^{l_0}) \subseteq F_{\varphi}^{\omega}$. The other inclusion $S_{F_{\varphi}}^{\omega} \subseteq F_{\varphi}(S_{F_{\varphi}}^{\omega})$ holds, as $F_{\varphi}$ is monotone.
% \end{proof}

The next step is to describe the sets $S^{n}$ and $S^\omega$ with first-order formulas.
\begin{definition}\label{def.apprSigma}
	Let $\calL$ be a language and $R_1,...,R_n$ be predicate variables. Let $\Phi =  (\varphi_i(R_1,...,R_n,\mybar{x_i}))_{i=1}^n$ be an $n$-tuple of first-order formulas. For $j \in \{1,...,n\}$ define
	\begin{align*}
		\sigma_{j,\Phi}^0(\mybar{x_j}) &\equiv \bot\\
		\sigma_{j,\Phi}^{l+1}(\mybar{x_j}) &\equiv \varphi_j(\sigma_{1,\Phi}^l,...,\sigma_{n,\Phi}^l,\mybar{x_j})\\
		\sigma_{j,\Phi}^{\omega}(\mybar{x_j}) &\equiv \bigvee_{l\in \omega} \sigma_{j,\Phi}^l(\mybar{x_j}).
	\end{align*}
\end{definition}

\begin{lemma}\label{lem.apprEquivSetsFormulas}
	Let $\calL$ be a language and let $R_1,...,R_n$ be predicate variables. Let $\Phi =  (\varphi_i(R_1,...,R_n,\mybar{x_i}))_{i=1}^n$ be an $n$-tuple of first-order formulas, such that $R_1,...,R_n$ occur only positively in $\Phi$. Let $\calM$ be an $\calL$-structure. For every $j \in \{1,...,n\}$ and $\mybar{a} \in M^{k_j}$, where $k_j$ equals the arity of $R_j$, we have
	\begin{align*}
		\calM \models \sigma_{j,\Phi}^l(\mybar{a}) \quad \Leftrightarrow \quad  \mybar{a} \in (S_{F_{\Phi}}^l)_j
	\end{align*}
	for every $l \in \omega \cup \{\omega\}$.
\end{lemma}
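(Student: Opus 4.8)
The plan is to prove the equivalence by induction on $l$, first establishing it for all finite $l \in \omega$ (simultaneously for every index $j$) and then deriving the limit case $l = \omega$ from the finite ones. For the base case $l = 0$ the claim is immediate: $\sigma_{j,\Phi}^0 \equiv \bot$ is unsatisfiable, and $(S_{F_\Phi}^0)_j = \bot = \emptyset$, so both sides of the equivalence are false for every $\mybar{a}$.

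For the successor step, assume the claim holds for $l$ and all indices $i \in \{1,\ldots,n\}$. By the definition of $\sigma_{j,\Phi}^{l+1}$ we have that $\calM \models \sigma_{j,\Phi}^{l+1}(\mybar{a})$ holds iff $\calM \models \varphi_j[\mybar{R}\backslash(\sigma_{1,\Phi}^l,\ldots,\sigma_{n,\Phi}^l)](\mybar{a})$, which is a purely syntactic substitution of the defining formulas $\sigma_{i,\Phi}^l$ for the predicate variables $R_i$. On the semantic side, unfolding the definition of the sequence ($S_{F_\Phi}^{l+1} = F_{\Phi}(S_{F_\Phi}^l)$) together with Definition~\ref{def.SimFixedPointOperator} gives that $\mybar{a} \in (S_{F_\Phi}^{l+1})_j$ holds iff $\calM \models \varphi_j((S_{F_\Phi}^l)_1,\ldots,(S_{F_\Phi}^l)_n,\mybar{a})$, i.e.\ iff $\varphi_j(\mybar{R},\mybar{a})$ holds when each $R_i$ is interpreted by the relation $(S_{F_\Phi}^l)_i$. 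By the induction hypothesis each formula $\sigma_{i,\Phi}^l$ defines exactly the relation $(S_{F_\Phi}^l)_i$ in $\calM$, so it remains to identify these two right-hand sides.

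The limit case then follows directly. By Definition~\ref{def.infiniteFormulas}, $\calM \models \sigma_{j,\Phi}^\omega(\mybar{a})$ holds iff $\calM \models \sigma_{j,\Phi}^l(\mybar{a})$ for some $l \in \omega$; by the already-established finite cases this is equivalent to $\mybar{a} \in (S_{F_\Phi}^l)_j$ for some $l \in \omega$, which by the definition $S_{F_\Phi}^\omega = \bigcup_{l\in\omega} S_{F_\Phi}^l$ is exactly $\mybar{a} \in (S_{F_\Phi}^\omega)_j$.

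The one genuine obstacle is the identification invoked in the successor step, namely a substitution lemma stating that substituting into $\varphi_j$ a tuple of formulas which define given relations yields a formula whose truth value in $\calM$ coincides with that of $\varphi_j$ under the interpretation of $\mybar{R}$ by those relations. This is proved by a routine induction on the structure of $\varphi_j$, but care is needed at the atomic step, where an atom $R_i(\mybar{t})$ is replaced by $\sigma_{i,\Phi}^l[\mybar{x_i}\backslash\mybar{t}]$: one must track the environment correctly, handle the possible renaming of bound variables discussed in the preliminaries, and account for any auxiliary free variables that the formulas in $\Phi$ may contain. Note that the positivity hypothesis on $\mybar{R}$ is not actually needed for this particular equivalence; it is carried along because the lemma is used in contexts where $F_{\Phi}$ is the monotone operator of interest. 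Once the substitution lemma is in place, the remainder is bookkeeping.
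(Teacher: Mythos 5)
Your proof is correct and takes essentially the same approach as the paper: the paper's entire proof is the single sentence that the claim ``follows by induction on $l$ from the definition of $\sigma_{j,\Phi}^l$ and $S_{F_{\Phi}}^{l}$,'' which is precisely the induction (base, successor, limit) that you carry out. You additionally make explicit the substitution lemma needed in the successor step and the observation that positivity is not actually used, details the paper leaves implicit.
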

\begin{proof}
	This follows by induction on $l$ from the definition of $\sigma_{j,\Phi}^l$ and $S_{F_{\Phi}}^{l}$.
\end{proof}

\begin{theorem}\label{thm.apprLfp}
		Let $\calL$ be a language and let $R_1,...,R_n$ be predicate variables. Let $\Phi =  (\varphi_i(R_1,...,R_n,\mybar{x_i}))_{i=1}^n$ be an $n$-tuple of existential first-order formulas, such that $R_1,...,R_n$ occur only positively in $\Phi$. Then for all $j \in \{1,...,n\}$
		\[
		[\lfp_{R_j} ~\Phi] \equiv \sigma_{j,\Phi}^{\omega}.
		\]
\end{theorem}
\begin{proof}
	Let $\calM$ be an $\calL$-structure and $\mybar{a} \in M^{k_j}$. Lemma \ref{lem.apprClosureOrdinalOmega} states that $\calM \models [\lfp_{R_j} ~\Phi](\mybar{a})$ iff $\mybar{a} \in (S_{F_{\Phi}}^{\omega})_j$ for $j \in \{1,...,n\}$. Now Lemma \ref{lem.apprEquivSetsFormulas} concludes $\calM \models [\lfp_{R_j} ~\Phi](\mybar{a}) \leftrightarrow \sigma_{j,\Phi}^{\omega}(\mybar{a})$ for $j \in \{1,...,n\}$.
\end{proof}

\section{Formula equations}\label{sec.form_eq}

\subsection{Introduction to formula equations}

Equations are pervasive in mathematics.
Solving a given equation means to find objects for the unknowns such that, after substitution,
 the two sides of the  equation are equal.
What kinds of objects are permissible as solutions and what ``equal'' means depends on the
 type of equations under consideration.
Occasionally equations can be brought into normal forms in which one of the two sides becomes
 trivial.
This is, for example, the case for systems of linear equations.

In this paper, we work with formula equations in first-order logic.
As a formula equation we want to consider
\begin{equation}\label{eq.feqtwosided}
\text{an $\calL$-formula}\ \psi_1 \liff \psi_2\ \text{containing predicate variables $\mybar{X} = X_1,\ldots,X_n$.}
\end{equation}
The kind of objects we want to allow as solutions are first-order formulas.
The notion of equality we want to satisfy after inserting a solution for the
 predicate variables is logical equivalence.
Consequently, a solution of~(\ref{eq.feqtwosided}) is a first-order substitution
 $\unsubst{\mybar{X}}{\mybar{\chi}}$ such that\ $\models \psi_1\unsubst{\mybar{X}}{\mybar{\chi}} \liff \psi_2 \unsubst{\mybar{X}}{\mybar{\chi}}$.
Formula equations can easily be brought into a form in which one of the two sides is trivial:
we can simplify formula equations by instead considering
\begin{equation}\label{eq.feqonesided}
\text{an $\calL$-formula}\ \psi\ \text{containing predicate variables $\mybar{X} = X_1,\ldots,X_n$.}
\end{equation}
Then, as a solution, we ask for a first-order substitution $\unsubst{\mybar{X}}{\mybar{\chi}}$ such that\ 
 $\models \psi\unsubst{\mybar{X}}{\mybar{\chi}}$.
Note that every instance of~(\ref{eq.feqtwosided}) is an instance of~(\ref{eq.feqonesided}) by letting $\psi$ be $\psi_1 \liff \psi_2$
 and every instance of~(\ref{eq.feqonesided}) is an instance of~(\ref{eq.feqtwosided}) by letting $\psi_1$ be $\psi$ and 
 $\psi_2$ be $\top$.
Moreover, it will be notationally useful to explicitly indicate the predicate variables by 
 existential quantifiers.
Consequently we define:
\begin{definition}
A {\em formula equation} is a closed $\calL$-formula $\exists \mybar{X}\, \psi$ where
 $\psi$ contains only first-order quantifiers.
A {\em solution} of $\exists \mybar{X}\, \psi$ is a first-order substitution
 $\sop\sel{\mybar{X}}{\mybar{\chi}}\scl$ such that\  $\models \psi\unsubst{\mybar{X}}{\mybar{\chi}}$. We call a formula equation \emph{solvable} if there exists a solution. 
\end{definition}
Note that a formula equation is simply a closed existential second-order formula.
We prefer to work with the terminology ``formula equation'' because 1.~it is shorter
 and 2.~it emphasises the aspect of finding a solution which is central for this entire work.
\begin{example}\label{ex.runex_intro}
We will now start a more precise treatment of the formula equation
\[
 \exists X\ (X(2) \land \forall u\, (X(u) \impl X(u+u)) \land \neg X(3))
\]
mentioned in the introduction: we work in the language $\calL_A = \{0,s,+,<\}$ of linear arithmetic.
$2$ is an abbreviation for the term $s(s(0))$, $3$ is an abbreviation of $s(s(s(0)))$, and so on.
We need a simple background theory to establish basic facts, such as, for example,
 $\neg \exists x\, 3 = x + x$.
To that aim let $T$ be the theory of open induction for $\calL_A = \{0,s,+,<\}$.
Then we define the formula equation
\[
\exists X\, \big( T \impl (X(2) \land \forall u\, (X(u) \impl X(u+u)) \land \neg X(3)) \big)
\]
Let $\chi_1(v) \equiv \exists w\, (v = w + w)$ and $\chi_2(v) \equiv  v \neq 3$. Then $\unsubst{X}{\chi_1}$ and $\unsubst{X}{\chi_2}$ are solutions.
\end{example}
The problem of computing a solution of a formula equation given as input will be denoted as FEQ in the sequel.
A formula equation $\exists \mybar{X}\, \psi$ is called {\em valid} if it is a valid second-order formula
 and {\em satisfiable} if it is a satisfiable second-order formula.
Every solvable formula equation is valid and every valid formula equation is satisfiable but neither of
 the converse implications are true as the following example shows.
\begin{example}\label{ex.feq}
If $\psi$ is a first-order formula which is satisfiable but not valid and does not contain $X$
 then, trivially, $\exists X\, \psi$ is a formula equation which is satisfiable but not valid.
 
Towards an example for a valid but unsolvable formula equation we work in the first-order language
 $\calL = \{ 0, s \}$, where $0$ is a constant symbol and $s$ is a unary function symbol.
Let $A_1$ be $\forall x\, s(x)\neq 0$, let $A_2$ be $\forall x\forall y\, (s(x)=s(y)\impl x=y)$
 and consider the formula
\[
A_1 \land A_2 \impl \exists X \exists Y \forall u\, \Big( X(0) \land Y(s(0)) \land (X(u) \impl Y(s(u)))
 \land (Y(u) \impl X(s(u))) \land \neg (X(u) \land Y(u)) \Big)
\]
which, up to some simple logical equivalence transformations, is a formula equation $\Phi$.
Now $\Phi$ is valid since, in a model $\calM$ of $A_1 \land A_2$, interpreting $X$ by
 $\{ s^{2n}(0)^\calM \mid n \in \Nat \}$ and $Y$ by $\{ s^{2n+1}(0)^\calM \mid n \in \Nat \}$ makes the remaining
 formula true.

For unsolvability suppose that $\Phi$ has a solution $\sop \sel{X}{\chi(u)}, \sel{Y}{\psi(u)} \scl$.
Then, since the standard model $\Nat$ in the language $\calL$ satisfies $A_1\land A_2$, we would have
\[
 \Nat \models \chi(0)\land \psi(s(0)) \land \forall u\, (\chi(u) \impl \psi(s(u))) \land \forall u\, (\psi(u) \impl \chi(s(u))) \land \forall u\, \neg (\chi(u) \land \psi(u)),
\]
in particular, $\chi$ would be a definition of the even numbers and $\psi$ a definition of the odd numbers.
However, the theory of $\Nat$ in $\calL$ admits quantifier elimination~\cite[Theorem 31G]{Enderton01Mathematical},
 which has the consequence that the $L$-definable sets in $\Nat$ are the finite and co-finite
 subsets of $\Nat$~\cite[Section 3.1, Exercise 4]{Enderton01Mathematical}.
Thus we obtain a contradiction to $\chi$ being
 a definition of the even numbers (which is neither finite nor co-finite).
\end{example}

% FEQ/SOQE/WSOQE
Solving formula equations (FEQ) is closely related to the problem of second-order quantifier elimination (SOQE):
given a formula $\exists \mybar{X}\, \psi$ where $\psi$ contains only first-order quantifiers, find
 a first-order formula $\phi$ such that\ $\models \exists \mybar{X}\, \psi \liff \phi$.
The relationship between FEQ and SOQE is often based on a third problem:
 second-order quantifier elimination by a witness (WSOQE): given a formula $\exists \mybar{X}\, \psi$ where $\psi$ contains only first-order
 quantifiers, find a first-order substitution $\unsubst{\mybar{X}}{\mybar{\chi}}$
 such that\ $\models \exists \mybar{X}\, \psi \liff \psi\unsubst{\mybar{X}}{\mybar{\chi}}$.
Such formulas $\mybar{\chi}$ are called ELIM-witnesses in~\cite{Wernhard17Boolean}.
Such a tuple $\mybar{\chi}$ of formulas is a canonical witness for $\exists \mybar{X}\, \psi$
 in the sense that, whenever there is a tuple of sets $\mybar{X}$ satisfying $\psi$ then
 $\mybar{\chi}$ does.
The existence of (computable) canonical witnesses has two useful corollaries:
 1.~SOQE can be solved in a straightforward way and 2.~FEQ boils down to a validity check.
This has been exploited, for example, in the proof of decidability of the QFBUP
 problem~\cite{Eberhard17Boolean}.
Our fixed point theorem can be seen as precisely such a result on the existence of a canonical
 witness in FO[LFP].

The complex of these three problems, FEQ, SOQE, and WSOQE, has a long history in logic and
 a wealth of applications in computer science,
see the textbook~\cite{Gabbay08Second} on second-order quantifier elimination.
A number of algorithms for second-order quantifier elimination have been developed, for example:
The SCAN algorithm introduced in~\cite{Gabbay92Quantifier} (tries to) compute(s) a first-order formula equivalent to $\exists X\, \psi$
 for a conjunctive normal form $\psi$ by forming the closure of $\psi$ under constraint resolution which
 only resolves on $X$-literals.
The DLS algorithm has been introduced in~\cite{Doherty97Computing} and consists essentially of
 formula rewriting steps tailored to allow application of Ackermann's lemma (which instantiates a predicate
 variable provided some conditions on the polarity of its occurrences are met). This approach has been
generalised by Nonnengart and Sza{\l}as in \cite{Nonnengart98Fixpoint} by allowing FO[LFP], which resulted in the DLS* algorithm.

\subsection{Horn formula equations}

Towards the definition of Horn formula equations we first introduce the notion of constrained clause.
 \begin{definition}
	Let $\calL$ be a first-order language. A \emph{constrained clause} is a formula $C$ of the form
	\[
	\gamma \vee \bigvee_{i =1}^m \neg X_i(\mybar{t_i}) \vee \bigvee_{j=1}^n Y_j(\mybar{s_j}),
	\]
	where $\gamma$ is a first-order formula in $\calL$, $X_i,Y_j$ are predicate variables and $\mybar{t_i},\mybar{s_j}$ are tuples of terms in $\calL$ for $i \in \{1,...m\}, j \in \{1,...,n\}$. $C$ is called
	\begin{enumerate}
		\item \emph{Horn}, if $n \leq 1$,
		\item \emph{dual-Horn}, if $m \leq 1$ and
		\item \emph{linear-Horn}, if $m,n \leq 1$.
	\end{enumerate}
\end{definition}

Note that a constrained clause is allowed to (and typically does) contain free individual variables which,
 as usual in clause logic, are treated as universally quantified.
A finite set $S$ of constrained clauses is considered as the conjunction of these clauses and is thus
 logically equivalent to a formula of the form $\forall^* \Land_{C\in S} C$ where $\forall^*$ denotes
 the universal closure w.r.t.\ individual variables.
 
 \begin{definition}
 	A formula equation $\exists \mybar{X} \forall^* \Land_{C\in S} C$ is called a
 	\begin{enumerate}
 		\item \emph{Horn formula equation}, if $S$ is a set of constrained Horn clauses,
 		\item \emph{dual-Horn formula equation}, if $S$ is a set of constrained dual-Horn clauses and
 		\item \emph{linear-Horn formula equation}, if $S$ is a set of constrained linear-Horn clauses.
 	\end{enumerate}

 \end{definition}

Thus Horn formula equations correspond to existential second order Horn logic, which also
 plays a significant role in finite model theory, see~\cite{Graedel91Expressive}.

\begin{example}\label{ex.runex}
The formula equation
\[
\exists X\, \big( T \impl (X(2) \land \forall u\, (X(u) \impl X(u+u)) \land \neg X(3) \big)
\]
from Example~\ref{ex.runex_intro} is logically equivalent to the Horn formula equation
\[
\exists X\, \psi \equiv \exists X\, \big( ( T \impl X(2) ) \land \forall u\, (T \land X(u) \impl X(u+u)) \land (T\impl \neg X(3)) \big).
\]
We will use this Horn formula equation as running example throughout most of this paper.
\end{example}

There are different notions of solvability for constrained Horn clauses in the literature:
{\em satisfiability} of~\cite{Gurfinkel19Science} is satisfiability of a Horn formula equation,
{\em semantic solvability} of~\cite{Ruemmer13Classifying} is validity of a Horn formula equation, and
{\em syntactic solvability} of~\cite{Ruemmer13Classifying} is solvability of a Horn formula equation.
In this paper we will primarily be interested in this last notion: solvability of a Horn formula equation.
Note that the formula equation considered in Example~\ref{ex.feq} is Horn so that, also when restricted
 to the class of Horn formula equations, solvability implies validity and validity implies satisfiability but
 both inverse implications are not true.

Let $\exists \mybar{X}\, \psi$ be a Horn formula equation.
We distinguish three different types of clauses in $\psi$:

\begin{align*}
	\begin{array}{crl}
		\text{(B)} &\gamma &\impl Y(\mybar{s}),   \\
		\text{(I)} &\gamma \land X_{1}(\mybar{t_1}) \land \cdots \land X_{m}(\mybar{t_m}) &\impl Y(\mybar{s}),  \\
		\text{(E)} &\gamma \land X_{1}(\mybar{t_1}) \land \cdots \land X_{m}(\mybar{t_m}) &\impl \bot, 
	\end{array}
\end{align*}
where the constraint $\gamma$ is a formula in $\calL$ not containing predicate variables,
$m \geq 1$, $\mybar{t_1},..,\mybar{t_m},\mybar{s}$ are
tuples of first-order terms in $\calL$ of appropriate arity and $Y,X_1,...,X_m$ are predicate variables. 
Note that free variables $\mybar{y}$ may occur in the formulas $\gamma$ and the terms
$\mybar{s}, \mybar{t_1},\ldots,\mybar{t_m}$.
We call the first {\em base clauses}, the second {\em induction clauses}, and the third {\em end clauses}.

For simplicity let us first consider a Horn formula equation $\exists X\, \psi$ containing
 only one predicate variable $X$.
For finding a solution $R$ for $X$ in a given model $\calM$, it is natural to proceed
 as follows:
for any base clause $\gamma \impl X(\mybar{s})$, if $\calM \models \gamma$, then add
 $\mybar{s}$ to $R$.
Inductively, for any induction clause $\gamma \land X(\mybar{t_1}) \land \cdots \land
 X(\mybar{t_m}) \impl X(\mybar{s})$, if $\calM \models \gamma$ and $\mybar{t_1},...,\mybar{t_m}$ are already in $R$, then add $\mybar{s}$ to $R$.
In the rest of this section, we lay the groundwork for pushing this procedure
 on the object level.
We translate the clauses of the form (B) and (I) to formulas, where all
 predicate variables occur only positively:
Let $C$ be an induction clause of the form $\gamma \land X_{1}(\mybar{t_1}) \land \cdots
 \land X_{m}(\mybar{t_m}) \impl Y(\mybar{s})$.
We define the translated clause $T_C(\mybar{x})$ to be $\gamma \land X_{1}(\mybar{t_1}) \land \cdots \land X_{m}(\mybar{t_m}) \land \mybar{x} = \mybar{s}$.
Similarly for a base clause $C$ of the form $\gamma \impl Y(\mybar{s})$ the translated clause $T_C(\mybar{x})$ is $\gamma \land \mybar{x} = \mybar{s}$.
\begin{example}
Continuing the running example, Example~\ref{ex.runex}, let $C_1$ be the base clause $T \impl X(2)$
 and $C_2$ be the induction clause $T\land X(u) \impl X(u+u)$.
Then $T_{C_1}(x) \equiv T\land x = 2$ and $T_{C_2}(x) \equiv T\land X(u) \land x = u + u$.
\end{example}
Next we gather all clauses having the same positive predicate variable. For a base or induction clause $C$ let $P_C$ be the unique predicate variable that occurs positively. Let $\exists X_1 \cdots \exists X_n\, \psi$ be a Horn formula equation with predicate variables $X_1,...,X_n$. We define $B_j$ and $I_j$ to be the set of base and induction clauses $C$ in $\psi$, such that $P_C = X_j$, for $j = 1,...,n$.
\begin{definition}\label{def.HornformulaSplit}
	Let $\exists X_1 \cdots \exists X_n\, \psi$ be a Horn formula equation.
	Define $\Phi_{\psi}$ to be the $n$-tuple of first-order formulas $(\varphi_1,...,\varphi_n)$ defined as
	\[
	\varphi_j(X_1,\ldots,X_n,\mybar{x_j})
	:=
	\exists \mybar{y} 
	\Lor_{C \in B_j \cup I_j} T_C(\mybar{x_j}), 
	\]
	where $\mybar{y}$ are the free variables occurring in the clauses in $B_j \union I_j$ and $\mybar{x_j}$ is a
	tuple of variables such that $|\mybar{x_j}|$ equals the arity of $X_j$. If the context is clear we sometimes omit the subscript and write $\Phi$ instead of $\Phi_\psi$ to achieve clearer notation.
\end{definition}

\begin{example}\label{ex.runningPhi}
Continuing the running example, Example~\ref{ex.runex}, $\Phi_\psi$ is a $1$-tuple consisting of
\[
\phi(X,x) \equiv \exists u\, (( T \land x=2) \lor (T\land X(u) \land x = u+u)).
\]
which is logically equivalent to
\[
T \land \big( (x=2) \lor \exists u\, ( X(u) \land x = u+u) ).
\]
\end{example}

In $\Phi_\psi$ all predicate variables occur only positively.
Hence it defines a monotone operator which has a least fixed point.
From the point of view of (constraint) logic programming, the above
 tuple of formulas is a first-order definition
of the operator $T_P$ induced by $\exists \mybar{X}\, \psi$ when considered as a constraint logic program $P$, see, e.g.,~\cite{Jaffar94Constraint}.

\begin{lemma}\label{lem.Hornformula.split}
	Let  $\exists X_1 \cdots \exists X_n\, \psi$ be a Horn formula equation. Let $E$ be the set of end clauses in $\psi$ and $\Phi_{\psi}\ =(\varphi_1,...,\varphi_n)$. Then
	\[
	\psi \equiv \bigwedge_{j=1}^n \forall \mybar{x_j}\, (\varphi_j(X_1,...,X_n,\mybar{x_j}) \impl X_j(\mybar{x_j})) \wedge \forall \mybar{y} \bigwedge_{C \in E} C.
	\]
\end{lemma}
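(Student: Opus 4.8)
The plan is to split $\psi$ into the conjunction of its clauses, dispatch the end clauses directly, and then show that the base and induction clauses sharing a common positive predicate variable collapse into a single implication governed by $\varphi_j$.

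First I would note that $\psi$ is, by assumption, logically equivalent to $\forall^* \Land_{C \in S} C$, and that the clause set $S$ partitions into base clauses, induction clauses, and end clauses. The end clauses contribute exactly the conjunct $\forall \mybar{y} \Land_{C \in E} C$, so it remains to treat the base and induction clauses. Since every such clause $C$ has a unique positive predicate variable $P_C = X_j$, these clauses are partitioned into the sets $B_j \union I_j$ for $j = 1, \ldots, n$, and their universal closure is $\Land_{j=1}^n \Land_{C \in B_j \union I_j} \forall^* C$. The whole statement therefore reduces to establishing, for each fixed $j$, the equivalence
\[
\Land_{C \in B_j \union I_j} \forall^* C \;\equiv\; \forall \mybar{x_j}\, (\varphi_j(X_1, \ldots, X_n, \mybar{x_j}) \impl X_j(\mybar{x_j})).
\]

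I would prove this by transforming the right-hand side. Expanding $\varphi_j = \exists \mybar{y} \Lor_{C \in B_j \union I_j} T_C(\mybar{x_j})$ and using that the existentially quantified $\mybar{y}$ do not occur in the conclusion $X_j(\mybar{x_j})$ (the tuple $\mybar{x_j}$ being fresh), the antecedent's existential quantifier moves outward as a universal, yielding $\forall \mybar{x_j}\, \forall \mybar{y}\, (\Lor_C T_C(\mybar{x_j}) \impl X_j(\mybar{x_j}))$. Distributing the disjunction in the antecedent over the implication turns this into $\Land_C \forall \mybar{x_j}\, \forall \mybar{y}\, (T_C(\mybar{x_j}) \impl X_j(\mybar{x_j}))$. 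Finally, recalling that each translated clause $T_C(\mybar{x_j})$ contains the conjunct $\mybar{x_j} = \mybar{s}$, I would apply the one-point rule: since $\mybar{x_j}$ occurs neither in the body of $C$ nor in the term tuple $\mybar{s}$, quantifying over $\mybar{x_j}$ under the equality $\mybar{x_j} = \mybar{s}$ substitutes $\mybar{s}$ for $\mybar{x_j}$, leaving $\forall \mybar{y}\, (\text{body}_C \impl X_j(\mybar{s}))$, which is precisely the universal closure $\forall^* C$ of the original clause.

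The steps are all standard first-order equivalences, so I do not expect a conceptual obstacle; the care lies entirely in the quantifier bookkeeping. The two points to get right are the side-condition for moving $\exists \mybar{y}$ out of the antecedent as $\forall \mybar{y}$, and the one-point elimination of $\mybar{x_j}$ — both hinging on the freshness of $\mybar{x_j}$ and on the variables $\mybar{y}$ not appearing in $X_j(\mybar{x_j})$. A minor subtlety is that $\mybar{y}$ collects the free variables of \emph{all} clauses in $B_j \union I_j$ at once, so after distribution each individual clause $C$ is quantified over some variables it does not mention; these vacuous quantifications are harmless and can simply be dropped to recover $\forall^* C$.
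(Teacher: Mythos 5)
Your proposal is correct, and it uses the same decomposition as the paper: the end clauses pass through untouched, and everything reduces to showing, for each fixed $j$, that the conjunction of the universally closed clauses in $B_j \union I_j$ is equivalent to $\forall \mybar{x_j}\,(\varphi_j(X_1,\ldots,X_n,\mybar{x_j}) \impl X_j(\mybar{x_j}))$. Where you genuinely differ is in how that per-$j$ equivalence is established. The paper argues semantically: it fixes a structure $\calM$ and an environment $\theta$ and reasons contrapositively in both directions, showing that a falsified clause instance yields a falsifying witness $\mybar{x_j} := \mybar{s}$ for $\varphi_j \impl X_j(\mybar{x_j})$, and conversely that a falsifying instance of the implication forces $\mybar{x_j} = \mybar{s}$ via the equality conjunct of $T_C$ and hence a falsified clause. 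You instead derive the equivalence by a chain of syntactic, equivalence-preserving transformations: pulling $\exists \mybar{y}$ out of the antecedent as $\forall \mybar{y}$, distributing the disjunction over the implication, and eliminating $\mybar{x_j}$ by the one-point rule. The mathematical content is the same in both — the conjunct $\mybar{x_j} = \mybar{s}$ in $T_C$ does the work, as a choice of witness in the paper and as one-point elimination for you — but your route is more modular and makes explicit that the lemma is a purely first-order rewriting, which is exactly the property the paper later relies on when it transfers this lemma to abstract semantics in the proof of Theorem~\ref{thm.abstractFP}; the paper's pointwise argument, in exchange, is self-contained at the level of satisfaction and needs no named quantifier laws. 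You were also right to isolate the side conditions (freshness of $\mybar{x_j}$ with respect to $\mybar{y}$ and to the terms $\mybar{s}$, $\mybar{t_i}$, and the vacuity of quantifying a clause over variables it does not mention): these are precisely the points where your syntactic derivation could otherwise break, and you discharge them correctly.
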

\begin{proof}
It suffices to show that, for all $j \in \{ 1,\ldots,n \}$, for all models $\calM$, and
 for all environments $\theta$ interpreting $X_1,\ldots,X_n$:
\begin{align*}
 \calM, \theta & \models \forall \mybar{x_j}\, (\varphi_j(X_1,...,X_n,\mybar{x_j}) \impl X_j(\mybar{x_j}))\ \text{iff} \\
 \calM, \theta & \models C\ \text{for all induction clauses $C$ in $\psi$ with head symbol $X_j$.}
\end{align*}
%
%
% 	We have the same end clauses on the right and the left side of the equivalence, thus we only have to consider the base and induction clauses of $\psi$. Let $\calM$ be a model and $\theta$ an environment of $X_1,...,X_n$. Let $C$ be an induction clause $\gamma \land Z_{1}(\mybar{t_1}) \land \cdots \land Z_{m}(\mybar{t_m}) \impl X_j(\mybar{s})$ in $\psi$, where $Z_1,...,Z_m$ are in $\{X_1,...,X_n\}$.

First assume that $\calM,\theta \not\models C$ for some induction clause $C$ with head
 symbol $X_j$, i.e.,  $\calM, \theta \models \exists \mybar{y} (\gamma \land \Land_{k = 1}^{m}  Z_{k}(\mybar{t_k}) \wedge \neg X_{j}(\mybar{s}))$.
	Then $T_C(\mybar{x}) \equiv \gamma \land\Land_{k = 1}^{m}  Z_{k}(\mybar{t_k}) \land \mybar{x} = \mybar{s}$ and therefore $\calM, \theta \models \exists \mybar{y}\, T_C(\mybar{s})$.
	Thus it holds $\calM, \theta \models (\varphi_j(X_1,...,X_n,\mybar{s}) \wedge \neg X_j(\mybar{s}))$ and hence $\calM, \theta \not\models \forall \mybar{x_j} (\varphi_j(X_1,...,X_n,\mybar{x_j}) \impl X_j(\mybar{x_j}))$. The analogous argument holds for base clauses.
	
	For the other direction assume that $\calM,\theta \not\models \bigwedge_{j=1}^n \forall \mybar{x_j} (\varphi_j(X_1,...,X_n,\mybar{x_j}) \impl X_j(\mybar{x_j}))$, i.e. $\calM,\theta \models \exists \mybar{x_j} (\varphi_j(X_1,...,X_n,\mybar{x_j}) \land \neg X_j(\mybar{x_j}))$ for some $j \in \{1,...,n\}$. Thus there is a clause $C \in B_j \cup I_j$ such that $\calM,\theta \models \exists \mybar{x_j}(\exists \mybar{y}\, T_C(\mybar{x_j}) \land \neg X_j(\mybar{x_j}))$. Let $C$ be an induction clause, the same argument also holds for base clauses. Then $T_C(\mybar{x}) \equiv \gamma \land\Land_{k = 1}^{m}  Z_{k}(\mybar{t_k}) \land \mybar{x} = \mybar{s}$. It follows that $\mybar{x_j} = \mybar{s}$ and $\calM,\theta \models \exists \mybar{y} (\gamma \land \Land_{k = 1}^{m}  Z_{k}(\mybar{t_k}) \land \neg X_j(\mybar{s}))$. Hence $\calM,\theta \not\models C$.
\end{proof}
The idea of the above transformation is to split the positive and negative occurrences of $X_1,...,X_n$ in $\psi$, note that $X_1,...,X_n$  occur only positively in $\varphi_1,...,\varphi_n$ and only negatively in the end clauses $C \in E$. This will be taken advantage of later.

\section{The abstract fixed-point theorem}\label{sec.afpthm}

The aim of this section is to prove the main result of this paper, the
 following abstract fixed-point theorem. 
 It states that Horn formula equations have a least solution.
\begin{theorem}[Abstract Horn fixed-point theorem]\label{thm.abstractFP}
	Let $\exists \mybar{X} \psi$ be a Horn formula equation and $\mu_j := [\lfp_{X_j} ~\Phi_{\psi}]$ for $j \in \{1,...,n\}$, then:
	\begin{enumerate}
		\item $\modelsa \exists \mybar{X}  ~\psi \leftrightarrow \psi[\mybar{X}\backslash \mybar{\mu}]$ and
		\item if $(\calM,G) \modelsa \psi[\mybar{X}\backslash \mybar{R}]$ for some model abstraction $(\calM,G)$ and abstract relations $R_1,...,R_n$, then $(\calM,G) \modelsa \bigwedge_{j=1}^n (\mu_j \rightarrow R_j)$.
	\end{enumerate}
\end{theorem}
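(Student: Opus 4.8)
The plan is to reduce everything to the fixed-point structure exposed by Lemma~\ref{lem.Hornformula.split} and then apply the Knaster--Tarski theorem, working throughout with the abstract satisfaction relation $\modelsa$ over the complete lattice of abstract relations supplied by a model abstraction $(\calM,G)$. After the split of Lemma~\ref{lem.Hornformula.split}, I would write $\psi[\mybar{X}\backslash\mybar{S}]$ as the conjunction of a \emph{definite part} $D(\mybar{S}) := \Land_{j=1}^n \forall \mybar{x_j}\,(\varphi_j(\mybar{S},\mybar{x_j}) \impl S_j(\mybar{x_j}))$ and an \emph{end-clause part} $N(\mybar{S}) := \forall \mybar{y}\, \Land_{C\in E} C[\mybar{X}\backslash\mybar{S}]$. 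The three crucial observations are: first, $(\calM,G) \modelsa D(\mybar{S})$ says exactly that $\mybar{S}$ is a pre-fixed point of the operator $F_{\Phi_\psi}$ of Definition~\ref{def.SimFixedPointOperator}, i.e.\ $F_{\Phi_\psi}(\mybar{S}) \leq \mybar{S}$; second, $\mybar{X}$ occurs only negatively in $N$; and third, $\mybar{\mu}$ is by definition the least fixed point of $F_{\Phi_\psi}$, which by Theorem~\ref{Knaster-Tarski} coincides with $\Land\{\mybar{S} \mid F_{\Phi_\psi}(\mybar{S}) \leq \mybar{S}\}$, the least pre-fixed point.

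I would prove part~(2) first, since it feeds into part~(1). Assume $(\calM,G) \modelsa \psi[\mybar{X}\backslash\mybar{R}]$; then in particular $(\calM,G)\modelsa D(\mybar{R})$, so $\mybar{R}$ is a pre-fixed point of $F_{\Phi_\psi}$. By the Knaster--Tarski characterisation of $\mybar{\mu}$ as the least pre-fixed point, $\mybar{\mu} \leq \mybar{R}$ in the lattice order, which is precisely $(\calM,G)\modelsa \Land_{j=1}^n(\mu_j \impl R_j)$.

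For part~(1), the direction from right to left is immediate: if $(\calM,G)\modelsa \psi[\mybar{X}\backslash\mybar{\mu}]$ then $\mybar{\mu}$ witnesses $(\calM,G)\modelsa \exists\mybar{X}\,\psi$. For the converse, suppose $(\calM,G)\modelsa\exists\mybar{X}\,\psi$, so that some tuple $\mybar{R}$ of abstract relations satisfies $(\calM,G)\modelsa\psi[\mybar{X}\backslash\mybar{R}]$; I must show that $\mybar{\mu}$ is itself a solution, i.e.\ that it satisfies both $D$ and $N$. The definite part $D(\mybar{\mu})$ holds because $\mybar{\mu}$ is a fixed point of $F_{\Phi_\psi}$, hence in particular a pre-fixed point. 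For the end-clause part, part~(2) gives $\mybar{\mu}\leq\mybar{R}$; since $\mybar{X}$ occurs only negatively in $N$ and $(\calM,G)\modelsa N(\mybar{R})$, the negative case of Lemma~\ref{lem.ImplPositiv} transfers satisfaction downward to yield $(\calM,G)\modelsa N(\mybar{\mu})$. Together these give $(\calM,G)\modelsa\psi[\mybar{X}\backslash\mybar{\mu}]$.

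The combinatorial core above is exactly the standard-semantics argument; the real work, and the main obstacle, is discharging each step over the abstract semantics $\modelsa$ rather than $\models$. Concretely, I expect to need: the abstract analogue of the monotonicity Lemma~\ref{lem.ImplPositiv} for $\modelsa$; a version of Theorem~\ref{Knaster-Tarski} on the lattice $G$ of abstract relations, guaranteeing that $F_{\Phi_\psi}$ interpreted abstractly is monotone, that $\mybar{\mu}$ actually lies in $G$, and that the abstract interpretation of the atoms $[\lfp_{X_j}\Phi_\psi]$ is indeed this abstract least fixed point; and the fact that the clausal rearrangement of Lemma~\ref{lem.Hornformula.split} remains valid under $\modelsa$. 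Pinning down that the notion of model abstraction is set up so that these ingredients hold simultaneously --- in particular that ``pre-fixed point of the abstract operator'' and ``abstract satisfaction of $D$'' coincide, and that the abstract order matches the meaning of $\mu_j \impl R_j$ --- is the point where the argument must be handled with care.
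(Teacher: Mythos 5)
Your skeleton --- split $\psi$ via Lemma~\ref{lem.Hornformula.split} into a definite part $D$ and an end-clause part $N$, characterise satisfaction of $D$ as being a pre-fixed point, invoke Knaster--Tarski to get minimality, and push satisfaction of $N$ downward using the negative-occurrence case of Lemma~\ref{lem.ImplPositiv} --- is exactly the structure of the paper's proof; the paper merely packages the pre-fixed-point/Knaster--Tarski/monotonicity core as a separate statement (Lemma~\ref{lem.abstractNonnengart}, the abstract Nonnengart--Sza{\l}as/Ackermann lemma) and derives the theorem from it together with Lemma~\ref{lem.Hornformula.split}.

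There is, however, a genuine gap, and it sits precisely at the point you postpone as ``to be handled with care''. Your three ``crucial observations'' are stated for the operator $F_{\Phi_\psi}$ of Definition~\ref{def.SimFixedPointOperator}, i.e.\ the concrete operator on $\powset(M^{k_1}) \times \cdots \times \powset(M^{k_n})$. Under $\modelsa$ this is the wrong object: the atom $[\lfp_{X_j}~\Phi_\psi]$ is interpreted not as $\lfp(F_{\Phi_\psi})_j$ but as $\gamma_{k_j}(\lfp(F_{\Phi_\psi}^{\#})_j)$, where $F_{\Phi_\psi}^{\#} = \alpha \circ F_{\Phi_\psi} \circ \gamma$ is the abstract operator, and in general $\lfp(F_{\Phi_\psi})_j \subseteq \gamma_{k_j}(\lfp(F_{\Phi_\psi}^{\#})_j)$ with strict inclusion possible. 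Consequently, running your part~(2) with the concrete operator only yields $\lfp(F_{\Phi_\psi})_j \subseteq \gamma_{k_j}(R_j)$, which does not bound $\gamma_{k_j}(\lfp(F_{\Phi_\psi}^{\#})_j)$ and hence does not give $\modelsa \Land_{j}(\mu_j \impl R_j)$; likewise your justification of $D(\mybar{\mu})$ --- that ``$\mybar{\mu}$ is a fixed point of $F_{\Phi_\psi}$'' --- is false in general, since $\gamma(\lfp(F^{\#}))$ is only a pre-fixed point of the concrete operator. The missing content is the Galois-connection bridging that the paper carries out inside the proof of Lemma~\ref{lem.abstractNonnengart}: $(\calM,G) \modelsa D(\mybar{R})$ means $F_{\Phi_\psi}(\gamma(\mybar{R})) \subseteq \gamma(\mybar{R})$, which by the Galois property $X \subseteq \gamma(Y) \Leftrightarrow \alpha(X) \sqsubseteq Y$ is equivalent to $F_{\Phi_\psi}^{\#}(\mybar{R}) \sqsubseteq \mybar{R}$; Knaster--Tarski is then applied in the abstract lattice to get $\lfp(F_{\Phi_\psi}^{\#}) \sqsubseteq \mybar{R}$, and monotonicity of $\gamma$ translates this into $\modelsa \Land_j(\mu_j \impl R_j)$; reading the same equivalence at the fixed point $\lfp(F_{\Phi_\psi}^{\#})$ yields $D(\mybar{\mu})$. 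You correctly list these ingredients in your closing paragraph, but they are not routine bookkeeping: they are the actual proof, and without them the argument as written establishes statements about the wrong fixed point.
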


Here $\Phi_\psi$ is the tuple of first-order formulas induced by
 $\exists \mybar{X}\, \psi$ as in Definition~\ref{def.HornformulaSplit}.
The theorem is formulated for abstract semantics $\modelsa$
 which is a generalization of standard semantics $\models$ that corresponds
 to abstract interpretation.
It will be defined in detail in the upcoming
 Section~\ref{sec.model_abstractions}.
As a special case the fixed-point theorem also applies to standard
 semantics: Just replace $\modelsa$ by $\models$, ``model abstraction $(\calM,G)$'' by ``model $\calM$'' and ``abstract relation'' by ``relation''.

\begin{example}
Continuing the running example,
%
% Let $\exists X\, \psi$ be the running example,  where
% \[
% \exists X\,\psi \equiv \exists X\, ( X(2) \land \forall u (X(u) \impl X(u+u)) \land \neg X(3)),
% \]
% In Example \ref{ex.runningPhi} we saw that $\Phi_\psi$ is a $1$-tuple consisting of
% \[
% \phi(X,x) \equiv \exists u (x=2 \lor (X(u) \land x = u+u)).
% \]
% Thus $\mu(y) := [\lfp_X\, \Phi_\psi](y)$ is equivalent\fxnote{FIXME: theory axioms!?}
%  to $\exists z>0 (2^z = y)$.
Theorem \ref{thm.abstractFP} states that, for $\mu(y) := [\lfp_X\, \Phi_\psi](y)$,
\[
\exists X\, \psi \equiv (T\impl \mu(2)) \land \forall u (T\land \mu(u) \impl \mu(u+u)) \land (T\impl \neg \mu(3)).
\]
The first two clauses are true by the definition of $\mu$, hence $\exists X\, \psi$ is
 further equivalent to $T\impl \neg \mu(3)$.
\end{example}

\subsection{Model abstractions}\label{sec.model_abstractions}

% what do we want
In this section we define the semantics of {\em model abstractions}.
This is a generalization of the usual semantics of SOL[LFP] obtained by
 restricting the scope of second-order quantifiers to range only over
 certain relations and the domain in which the least fixed point
 of the lfp operator is computed.
The primary motivation for this semantics is that in many applications
 one is interested in speaking about the relations definable
 by certain formulas, see, e.g., Sections~\ref{sectionAffineSolutionProblem}
 and~\ref{sec.app_prog_verif}.
For example, we want to prove that a formula equation has a solution
 in a certain class of formulas.
We will cover this situation with the notion of {\em definable model abstractions}.
Since we also want to interpret the lfp operators, the relations need
 to form a complete lattice.

Model abstractions are defined via a Galois connection to the
 powerset lattice and thus they are the analogue of abstract
 interpretation for logical formulas.
Abstract interpretation has been introduced in~\cite{Cousot77Abstract} and is nowadays one of the most important techniques
in static analysis and software verification.
 For more background on Galois connections see \cite{Davey2002}.
The standard semantics is the special case where the Galois connection is the identity.

The idea of restricting the range of second-order quantifiers is reminiscent
 of Henkin semantics of second-order logic~\cite{Henkin50Completeness}.
Our abstract semantics differs from Henkin semantics in two
 respects: on the one hand we do not require, as Henkin semantics does, closure under definability which has the effect that there are
 model abstractions which are not Henkin structures.
On the other hand we require the lattice of sets to be complete which
 is necessary for interpreting lfp formulas in a suitable way.

\begin{definition}
	Let $\calA = (A,\subseteq)$ and $\calB = (B,\sqsubseteq)$ be two partially ordered sets.
	A \emph{Galois connection} between $\calA$ and $\calB$ consists of two functions
	$\alpha: A \rightarrow B$ and $\gamma: B \rightarrow A$, such that\ for all $X \in A$
	and $Y \in B$:
	\begin{align*}
		X \subseteq \gamma(Y) \quad \Leftrightarrow \quad \alpha(X) \sqsubseteq Y.
	\end{align*}
\end{definition}
From this condition it follows that $\alpha$ and $\gamma$ are monotone.

\begin{definition}
	Let $\calL$ be a language.
	A \emph{model abstraction} is a pair $(\calM,G)$, where $\calM$ is an $\calL$-structure and
	$G = (\mathcal{V}_k,\alpha_k,\gamma_k)_{k \in \Nat}$ is a sequence of triples, such that\ for all
	$k \in \Nat:$ $\mathcal{V}_k =(V_k,\sqsubseteq)$ is a lattice and
	$\alpha_k: \powset(M^k) \rightarrow V_k$ and $\gamma_k: V_k \rightarrow \powset(M^k)$ form a Galois-connection between $(\powset(M^k),\subseteq)$ and $\mathcal{V}_k$. We call $G$ the \emph{set domain} of $(\calM,G)$ and sets $R \sqsubseteq V_k$ \emph{abstract relations} of arity $k$.
\end{definition}

\begin{remark}
	Note that the completeness  of $(M^k,\subseteq)$ and the Galois-connection imply the completeness of the lattices $(V_k,\sqsubseteq)$.%\fxnote{add short proof or reference?}
\end{remark}

Model abstractions which are defined by a set of first-order formulas are of particular interest for
 applications in program verification.

\begin{definition}\label{def.definableModelabstr}
	Let $\calL$ be a language and $\calM$ be a model. Let $\calC$ be a set of first-order formulas such that\ the sets definable by formulas from $\calC$, i.e. $V_k = \{R \subseteq M^k ~|~ \exists \varphi \in \calC: \calM \models R(\mybar{x}) \liff \varphi(\mybar{x}) \}$, form a complete lattice $\calV_k = (V_k,\subseteq)$ for all $k \in \Nat$. Then there is a unique $\alpha_k$ such that $\alpha_k: \powset(M^k) \rightarrow V_k$ and $\id_k: V_k \rightarrow \powset(M^k)$ form a Galois-connection. We define $G_{\calC} = (\calV_k,\alpha_k,\id_k)_{k \in \Nat}$ and call $(\calM,G_{\calC})$ a \emph{definable model abstraction}.
\end{definition}

\begin{example}\label{ex.affone}
	Consider the language $\Laff = (0,1,+,(c)_{c \in \Rat})$ where the intended interpretation of the
	unary function symbol $c$ for $c\in \Rat$ is multiplication with $c$.
	Define the set domain $G_{\aff} = ((\Aff \Rat^k,\subseteq),\aff_k,\id_k)_{k \in \Nat}$, where $\Aff \Rat^k$ is the set of all affine
	subspaces of $\Rat^k$, $\aff_k$ maps every subset of $\Rat^k$ to its affine hull and $\id_k$ is the
	embedding of $\Aff \Rat^k$ in $\powset(\Rat^k)$.
	Then $(\Rat, G_{\aff})$ is a model abstraction. 
	Moreover, letting $\calC$ be the set of conjunctions of equations in $\Laff$, $(\Rat, G_{\aff})$
	turns out to be the definable model abstraction $(\Rat,G_\calC)$.
\end{example}

The intention of the notion of definable model abstraction is seen most clearly in
 Theorem~\ref{thm.hoareVcEquiv}, which shows that the truth of the verification condition
 of a partial correctness assertion in $(\calM, G_\calC)$ is equivalent to its provability in
 the Hoare calculus with invariants restricted to formulas from $\calC$.

We will now introduce the abstract semantics of SO[LFP] formulas by defining a satisfaction relation
 $(\calM, G) \modelsa \varphi$.
The crucial difference between $\modelsa$ and standard Tarski semantics $\models$ will be that
second-order quantifiers and the least fixed-point operator will not be interpreted in the power set
of the domain but in $G$ instead (for the appropriate arity).
\begin{definition}
	The defining clauses for first-order atoms, propositional connectives, and first-order
	quantifiers for $\modelsa$ are identical to those for $\models$. 
	For formulas of the form $\exists X\, \psi$, where $X$ is a $k$-ary predicate variable, we define
	\[
	(\calM,G),\theta \modelsa \exists X \psi \quad \Leftrightarrow \quad \exists R \in V_k: (\calM,G),\theta[X := \gamma_k(R)] \modelsa \psi,
	\]
	and analogously for formulas of the form $\forall X \psi$.
	The semantics of the $\lfp$-operator is defined as follows. Let $(\calM,G)$ be a model abstraction and $X_1,\ldots,X_n$ be predicate variables with $X_i$ having arity $k_i$. Let
	$\Phi = (\varphi_i(X_1,\ldots,X_n,\mybar{u_i}))_{i=1}^n$ be an $n$-tuple of first-order formulas
	such that\ $|\mybar{u_i}| = k_i$ and $X_1,\ldots,X_n$ occur only positively in $\Phi$.
	Define
	\begin{align*}
		F_i^{\#}: ~V_{k_1} \times \cdots \times V_{k_n} &\rightarrow V_{k_i}\\
		(Y_1, \ldots ,Y_n) &\mapsto \alpha_{k_i} \circ (F_{\Phi}^{\calM})_i(\gamma_{k_1}(Y_1), \ldots ,\gamma_{k_n}(Y_n)),
	\end{align*}
	and $F_{\Phi}^{\#} = (F_1^{\#},\ldots,F_n^{\#})$.
	Let $\mybar{a} \in \calM$, then
	\[
	(\calM,G) \modelsa [\lfp_{X_i} ~\Phi](\mybar{a}) \quad \Leftrightarrow \quad \mybar{a} \in \gamma_{k_i}( \lfp(F_{\Phi}^{\#})_i).
	\]

	Let $R$ be an abstract relation of arity $k$. Then we write $(\calM,G),\theta \modelsa \varphi(R)$ as abbreviation for $(\calM,G),\theta[X := \gamma_k(R)] \modelsa \varphi(X).$
\end{definition}
Note that the semantics of the least fixed-point operator is well-defined:
We already know that $F_{\Phi}$ is a monotone $n$-ary operator and $\alpha_{k_i}$ and $\gamma_{k_i}$ are monotone
as they form a Galois connection for $i \in \{1,\ldots,n\}$.
Thus $F_i^{\#}$ is monotone for all $i \in \{1,\ldots,n\}$ and therefore $F_{\Phi}^{\#}$ is monotone as
well. As $\mathcal{V}_k$ is a complete lattice for all $k \in \Nat$, we can use the Knaster-Tarski theorem
to obtain the least fixed point of $F_{\Phi}^{\#}$.

Note that every classical model defines, by choosing the identity for
 $\alpha_k$ and $\gamma_k$, a model abstraction, that satisfies the
 same formulas.
Hence $\modelsa \varphi \Rightarrow~ \models \varphi$ for every SO[LFP]
 formula $\varphi$, and for a formula equation $\exists \mybar{X}\, \varphi$ we even have
 $(\calM, G) \modelsa \exists \mybar{X}\, \varphi~ \Rightarrow~ \calM \models \exists \mybar{X}\, \varphi$.
 The converse is not true as the following example shows,  yet for first-order
 formulas $\varphi$ we have $\modelsa \varphi \Leftrightarrow~ \models \varphi$.
In particular, if $\varphi_1$ and $\varphi_2$ are logically equivalent
 first-order formulas, we also have
 $\modelsa \varphi_1 \leftrightarrow \varphi_2$.
\begin{example}
We have observed in Example~\ref{ex.feq} that the formula equation $\Phi$ defined there
 is valid, i.e., $\models \Phi$.
However, we do not have $\modelsa \Phi$.
To see this, consider the model abstraction $(\Nat, G)$ where $\Nat$ is considered
 as structure in the language $\calL = \{ 0, s \}$, $G = (\mathcal{V}_k,\alpha_k,\gamma_k)_{k \in \Nat}$ with $\mathcal{V}_k = (V_k,\sqsubseteq)$ where $V_k = \{ \emptyset, \Nat^k \}$,
 $\alpha_k = \id \restriction V_k$ and $\gamma_k = \id$.
Then $(\Nat,G) \nmodelsa \Phi$ because neither $\emptyset$ nor $\Nat$ is a solution
 for the formula equation $\Phi$.
\end{example}

\subsection{Proof of the fixed-point theorem}
The proof of the abstract fixed-point theorem relies on a generalisation of a result by  Nonnengart
and Sza{\l}as~\cite{Nonnengart98Fixpoint}, which is in turn a generalisation of Ackermann's
Lemma~\cite{Ackermann35Untersuchungen}.
Ackermann`s Lemma states that for every formula $\varphi$ of the form
$\exists X(\forall \mybar{x}(\alpha(\mybar{x})\rightarrow X(\mybar{x}))\wedge \beta(X))$,
where $\alpha$ and $\beta$ are first-order and $X$ occurs only negatively in $\beta$, the formula $\beta(\alpha(\mybar{x}))$ is equivalent to $\varphi$.
Nonnengart and Sza{\l}as' generalisation allows positive occurrences of $X$ in $\alpha(X,\mybar{x})$.
In order to deal with this more general case, they use a fixed-point operator.
The equivalent formula becomes the FO[LFP] formula $\beta([\lfp_X\, \alpha(X,\mybar{x})])$,
 i.e., they prove
\[\models \exists X(\forall \mybar{x}(\alpha(X,\mybar{x})\rightarrow X(\mybar{x}))\wedge \beta(X)) \leftrightarrow \beta([\lfp_X\, \alpha(X,\mybar{x})]).\]
We further generalise that allowing an arbitrary number of predicate variables by using a simultaneous least fixed-point, extending the scope of the result from standard semantics to abstract semantics, and adding property (2) below on the minimality of the solution.
This lemma will be the main tool in the proof of the abstract fixed-point theorems.
\begin{lemma}\label{lem.abstractNonnengart}
	Let $X_1,...,X_n$ be predicate variables. Let $\beta(X_1,...,X_n)$ be a first-order formula and $\Phi = (\alpha_i(X_1,...,X_n,\mybar{x_i},))_{i=1}^n$ be an $n$-tuple of first-order formulas such that $|\mybar{x_i}|$ equals the arity of $X_i$ for $i \in \{1,...,n\}$. \footnote{Note that there could also be free variables in $\alpha_1,...,\alpha_n,\beta$, that are not stated explicitly.} If $X_1,...,X_n$ occur only positively in $\alpha_1,...,\alpha_n$ and only negatively in $\beta$, then 
	\begin{enumerate}
		\item $\modelsa \exists \mybar{X} \left(\bigwedge_{i=1}^n \forall \mybar{x_i} \left(\alpha_i(X_1,...,X_n,\mybar{x_i}) \impl X_i(\mybar{x_i})\right) \wedge \beta(X_1,...,X_n) \right) \leftrightarrow \beta([\lfp_{X_1} ~\Phi],...,[\lfp_{X_n} ~\Phi])$,
		\item If  $(\calM,G) \modelsa \bigwedge_{i=1}^n \forall \mybar{x_i} \left(\alpha_i( R_1,...,R_n,\mybar{x_i}) \impl R_i(\mybar{x_i})\right)$ for some model abstraction $(\calM,G)$ and abstract relations $R_1,...,R_n$, then $(\calM,G) \modelsa \bigwedge_{i=1}^n \forall \mybar{x_i} ([\lfp_{X_i} ~\Phi](\mybar{x_i}) \impl R_i(\mybar{x_i}))$.
	\end{enumerate}
\end{lemma}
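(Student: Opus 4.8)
The plan is to prove both parts by relating the abstract least fixed point $F_\Phi^\#$ to the prefixed-point characterisation given by Knaster--Tarski (Theorem~\ref{Knaster-Tarski}), working entirely inside the set domain $G$. The key objects are $\mu_i := [\lfp_{X_i}\,\Phi]$, whose abstract semantics by definition puts $\mybar a \in \gamma_{k_i}(\lfp(F_\Phi^\#)_i)$. I would first establish part~(2), since it is really a statement that the abstract least fixed point is below any abstract prefixed point, and then derive part~(1) from it together with the monotonicity Lemma~\ref{lem.ImplPositiv}.

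For part~(2), suppose $(\calM,G) \modelsa \bigwedge_{i=1}^n \forall \mybar{x_i}(\alpha_i(R_1,\ldots,R_n,\mybar{x_i}) \impl R_i(\mybar{x_i}))$ for abstract relations $R_1,\ldots,R_n$. Writing $\mybar R = (R_1,\ldots,R_n) \in V_{k_1}\times\cdots\times V_{k_n}$, I would show that $\mybar R$ is a prefixed point of $F_\Phi^\#$, i.e. $F_\Phi^\#(\mybar R) \sqsubseteq \mybar R$. Unfolding the definition, $(F_\Phi^\#)_i(\mybar R) = \alpha_{k_i}((F_\Phi^\calM)_i(\gamma_{k_1}(R_1),\ldots,\gamma_{k_n}(R_n)))$, and the hypothesis says exactly that the standard relation $(F_\Phi^\calM)_i(\gamma(R_1),\ldots,\gamma(R_n)) = \{\mybar a \mid \calM \models \alpha_i(\gamma(R_1),\ldots,\gamma(R_n),\mybar a)\}$ is contained in $\gamma_{k_i}(R_i)$. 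By the Galois connection, $S \subseteq \gamma_{k_i}(R_i)$ is equivalent to $\alpha_{k_i}(S) \sqsubseteq R_i$, which is precisely $(F_\Phi^\#)_i(\mybar R) \sqsubseteq R_i$. Hence $\mybar R$ is a prefixed point, so by Knaster--Tarski $\lfp(F_\Phi^\#) \sqsubseteq \mybar R$ componentwise, giving $\lfp(F_\Phi^\#)_i \sqsubseteq R_i$. Applying the monotone $\gamma_{k_i}$ yields $\gamma_{k_i}(\lfp(F_\Phi^\#)_i) \subseteq \gamma_{k_i}(R_i)$, which unwinds to $(\calM,G) \modelsa \forall \mybar{x_i}(\mu_i(\mybar{x_i}) \impl R_i(\mybar{x_i}))$ for each $i$, as desired.

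For part~(1), the right-to-left direction of the biconditional is the witnessing direction: I would show that $\mybar\mu$ itself satisfies the antecedent, i.e. $(\calM,G)\modelsa \bigwedge_i \forall\mybar{x_i}(\alpha_i(\mybar\mu,\mybar{x_i}) \impl \mu_i(\mybar{x_i}))$. This holds because $\lfp(F_\Phi^\#)$ is an actual fixed point, so $(F_\Phi^\#)_i(\lfp(F_\Phi^\#)) = \lfp(F_\Phi^\#)_i$; translating through $\alpha,\gamma$ as above (using that $\gamma(\alpha(S))\supseteq S$ is not needed here, only the fixed-point equality) shows the $i$-th implication is valid in $(\calM,G)$. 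Given that $\mybar\mu$ witnesses the existential quantifier $\exists\mybar X$ with $X_i$ interpreted by $\gamma_{k_i}(\lfp(F_\Phi^\#)_i)$, and since $\beta$ is first-order so that substituting equivalent formulas preserves $\modelsa$-truth, the existential is satisfied whenever $\beta(\mybar\mu)$ holds. For the left-to-right direction, suppose some abstract relations $\mybar R$ witness the existential, so in particular $(\calM,G)\modelsa\bigwedge_i\forall\mybar{x_i}(\alpha_i(\mybar R,\mybar{x_i})\impl R_i(\mybar{x_i}))$ and $\beta(\mybar R)$ holds. Part~(2) gives $\mu_i \impl R_i$ for all $i$; since $X_1,\ldots,X_n$ occur only negatively in $\beta$, the contrapositive monotonicity in Lemma~\ref{lem.ImplPositiv} yields $\beta(\mybar R) \impl \beta(\mybar\mu)$, so $\beta(\mybar\mu)$ holds.

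The main obstacle I anticipate is the careful bookkeeping in translating between the three layers --- abstract relations in $V_k$, their concretisations $\gamma_k(\cdot)\subseteq M^k$, and the $\modelsa$ satisfaction relation --- while correctly invoking the Galois connection in the right direction at each step. In particular, part~(1)'s right-to-left direction requires that $F_\Phi^\#$ genuinely has $\lfp(F_\Phi^\#)$ as a fixed point (not merely a prefixed point) so that the abstract relations $\gamma_{k_i}(\lfp(F_\Phi^\#)_i)$ actually satisfy the induction clauses; one must confirm the abstract operator's fixed-point equation transfers to an equivalence $\modelsa \alpha_i(\mybar\mu,\mybar{x_i}) \liff \mu_i(\mybar{x_i})$, of which only the $\impl$ direction is needed, and verify that the abstract interpretation of $\alpha_i(\mybar\mu)$ coincides with $\alpha_{k_i}$ applied to its standard concretised value --- this is where the only-positive-occurrence hypothesis on the $\alpha_i$ and the well-definedness argument for $F_\Phi^\#$ preceding the theorem are used.
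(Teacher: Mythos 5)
Your proof is correct and takes essentially the same route as the paper's: prefixed points transfer across the Galois connection, Knaster--Tarski places the abstract least fixed point below every prefixed point, monotonicity of $\gamma_k$ carries this to the concretisations, and Lemma~\ref{lem.ImplPositiv} handles the negative occurrences in $\beta$. The only difference is organisational --- you prove part~(2) first and derive part~(1) from it, whereas the paper proves part~(1) first and obtains part~(2) as a byproduct of the same intermediate inequality.
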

\begin{proof}
	1. $``\rightarrow"$: Let $(\calM,G)$ be a model abstraction and $\theta$ an environment such that
	\[
	(\calM,G),\theta \modelsa \exists \mybar{X} \bigwedge_{i=1}^n \forall \mybar{x_i} \left(\alpha_i(X_1,...,X_n,\mybar{x_i}) \impl X_i(\mybar{x_i})\right) \wedge \beta(X_1,...,X_n).
	\]
	Hence there exist abstract relations $S_1,...,S_n$ and an environment $\theta' := \theta[X_1 := \gamma_{k_1}(S_1),...,X_n := \gamma_{k_n}(S_n)]$  such that 
	\begin{align}\label{abstractNonnengartproof}
		(\calM,G),\theta' \modelsa \bigwedge_{i=1}^n \forall \mybar{x_i} \left(\alpha_i(X_1,...,X_n,\mybar{x_i}) \impl X_i(\mybar{x_i})\right) \wedge \beta(X_1,...,X_n).
	\end{align}
Then $(F_{\Phi})_i(\gamma_{k_1}(S_1),...,\gamma_{k_n}(S_n)) \subseteq \gamma_{k_i}(S_i)$ for all $i \in \{1,...,n\}$ and as $\alpha_k,\gamma_k$ form a Galois connection for every $k$ we have $\alpha_{k_i} \circ (F_{\Phi})_i(\gamma_{k_1}(S_1),...,\gamma_{k_n}(S_n)) \sqsubseteq S_i$ for all $i \in \{1,...,n\}$. Thus $(S_1,...,S_n)$ is a fixed point of $F_{\Phi}^{\#}$. Using the monotonicity of $\gamma_k$ for every $k$, this implies
\begin{align}\label{abstactNonnengartproofPart2}
	(\calM,G),\theta' \modelsa \bigwedge_{i=1}^n \forall \mybar{x_i} ([\lfp_{X_i} ~\Phi](\mybar{x_i}) \impl X_i(\mybar{x_i})).
\end{align}

	Note that the FO[LFP]-formulas are well defined, as $X_1,...,X_n$ occur only positively in $\alpha_1,...,\alpha_n$. From (\ref{abstractNonnengartproof}) we also obtain $(\calM,G),\theta' \modelsa \beta(X_1,...,X_n)$ and as $X_1,...,X_n$ occur only negatively in $\beta$ we may use Lemma \ref{lem.ImplPositiv} to obtain
	\[
	(\calM,G),\theta' \modelsa \beta([\lfp_{X_1} ~\Phi],...,[\lfp_{X_n} ~\Phi]).
	\]
	This formula does not depend on $X_1,...,X_n$, hence$
	(\calM,G),\theta \modelsa \beta([\lfp_{X_1} ~\Phi],...,[\lfp_{X_n} ~\Phi]).
	$\\
	$``\leftarrow"$:  For the other direction let $(\calM,G)$ be a model abstraction and $\theta$ an environment such that
	\[
	(\calM,G),\theta \modelsa \beta([\lfp_{X_1} ~\Phi],...,[\lfp_{X_n} ~\Phi]).
	\]
	Let $S_i \in V_{k_i}$ be the abstract relations defined by $[\lfp_{X_i} ~\Phi]$ for $i \in \{1,...,n\}$. $(S_1,...,S_n)$ is a fixed point of $F_{\Phi}^{\#}$, therefore by the definition of $F_{\Phi}^{\#}$ we have 
	\[
	\alpha_{k_i} \circ (F_{\Phi})_i(\gamma_{k_1}(S_1),...,\gamma_{k_n}(S_n)) \sqsubseteq S_i, \quad \text{for every}~ i \in \{1,...,n\},
	\]
	and as $\alpha_k,\gamma_k$ form a Galois connection for all $k$ we get 
	\[
	(F_{\Phi})_i(\gamma_{k_1}(S_1),...,\gamma_{k_n}(S_n)) \subseteq \gamma_{k_i}(S_i), \quad \text{for every}~ i \in \{1,...,n\}.
	\]
	Thus $(\gamma_{k_1}(S_1),...,\gamma_{k_n}(S_n))$ is a fixed point of $F_{\Phi}$ and by defining $\theta' := \theta[X_1 := \gamma_{k_1}(S_1),...,X_n := \gamma_{k_n}(S_n)]$, we obtain
	\[
	(\calM,G),\theta' \modelsa \bigwedge_{i=1}^n \forall \mybar{x_i} \left(\alpha_i(X_1,...,X_n,\mybar{x_i}) \impl X_i(\mybar{x_i})\right).
	\]
	By assumption $(\calM,G),\theta' \modelsa \beta(X_1,...,X_n)$ and therefore we conclude
	\[
	(\calM,G),\theta \modelsa \exists \mybar{X} \bigwedge_{i=1}^n \forall \mybar{x_i} \left(\alpha_i(X_1,...,X_n,\mybar{x_i}) \impl X_i(\mybar{x_i})\right) \wedge \beta(X_1,...,X_n).
	\]
	We get 2. directly from (\ref{abstactNonnengartproofPart2}), if we choose $S_1,...,S_n$  to be the abstract relations $R_1,...,R_n$.
\end{proof}

We are now able to prove the abstract fixed-point theorem, the main result of this paper. For clarity, we restate the theorem.
\newcounter{theoremRem}
\setcounter{theoremRem}{\thetheorem}
\setcounter{theorem}{\getrefnumber{thm.abstractFP}}
\addtocounter{theorem}{-1}
\begin{theorem}[Abstract Horn fixed-point theorem]
	Let $\exists \mybar{X} \psi$ be a Horn formula equation and $\mu_j := [\lfp_{X_j} ~\Phi_{\psi}]$ for $j \in \{1,...,n\}$, then:
	\begin{enumerate}
		\item $\modelsa \exists \mybar{X}  ~\psi \leftrightarrow \psi[\mybar{X}\backslash \mybar{\mu}]$ and
		\item if $(\calM,G) \modelsa \psi[\mybar{X}\backslash \mybar{R}]$ for some model abstraction $(\calM,G)$ and abstract relations $R_1,...,R_n$, then $(\calM,G) \modelsa \bigwedge_{j=1}^n (\mu_j \rightarrow R_j)$.
	\end{enumerate}
\end{theorem}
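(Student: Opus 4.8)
The plan is to reduce the statement to Lemma~\ref{lem.abstractNonnengart} using the syntactic decomposition of $\psi$ supplied by Lemma~\ref{lem.Hornformula.split}. First I would write $\psi \equiv D \wedge \beta$, where $D := \bigwedge_{j=1}^n \forall \mybar{x_j}\,(\varphi_j(\mybar{X},\mybar{x_j}) \impl X_j(\mybar{x_j}))$ is the definite part assembled from the base and induction clauses and $\beta := \forall \mybar{y}\, \Land_{C \in E} C$ collects the end clauses. The point of this decomposition is that $\Phi_\psi = (\varphi_1,\dots,\varphi_n)$ is exactly the tuple occurring in $\mu_j = [\lfp_{X_j}\,\Phi_\psi]$, that $X_1,\dots,X_n$ occur only positively in the $\varphi_j$ (Definition~\ref{def.HornformulaSplit}), and that they occur only negatively in $\beta$ (as remarked after Lemma~\ref{lem.Hornformula.split}). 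Hence $D \wedge \beta$ has precisely the shape required by Lemma~\ref{lem.abstractNonnengart} with $\alpha_i := \varphi_i$.

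For part~(1), I would apply Lemma~\ref{lem.abstractNonnengart}(1), which yields $\modelsa \exists \mybar{X}\,(D \wedge \beta) \leftrightarrow \beta[\mybar{X}\backslash\mybar{\mu}]$; since $\exists \mybar{X}\,\psi \equiv \exists \mybar{X}\,(D \wedge \beta)$, the two left-hand sides coincide. It then remains to identify $\psi[\mybar{X}\backslash\mybar{\mu}]$ with $\beta[\mybar{X}\backslash\mybar{\mu}]$ under $\modelsa$. Substituting $\mybar{\mu}$ into $\psi \equiv D \wedge \beta$ gives $\psi[\mybar{X}\backslash\mybar{\mu}] \equiv D[\mybar{X}\backslash\mybar{\mu}] \wedge \beta[\mybar{X}\backslash\mybar{\mu}]$, so the only thing to check is $\modelsa D[\mybar{X}\backslash\mybar{\mu}]$, i.e.\ that the interpretation of $\mybar{\mu}$ is a pre-fixed point of $F_{\Phi_\psi}$. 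This is the key computation: writing $T = \lfp(F_{\Phi_\psi}^{\#})$ and letting $\mu_i$ denote $\gamma_{k_i}(T_i)$, the fixed-point equality $\alpha_{k_i} \circ (F_{\Phi_\psi})_i(\gamma_{k_1}(T_1),\dots,\gamma_{k_n}(T_n)) = T_i \sqsubseteq T_i$ together with the Galois connection gives $(F_{\Phi_\psi})_i(\gamma_{k_1}(T_1),\dots,\gamma_{k_n}(T_n)) \subseteq \gamma_{k_i}(T_i)$, which is exactly $\modelsa \varphi_i(\mybar{\mu},\mybar{x_i}) \impl \mu_i(\mybar{x_i})$. Chaining the two equivalences then produces $\modelsa \exists \mybar{X}\,\psi \leftrightarrow \psi[\mybar{X}\backslash\mybar{\mu}]$.

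For part~(2), I would observe that if $(\calM,G) \modelsa \psi[\mybar{X}\backslash\mybar{R}]$ then, by the same decomposition, $(\calM,G) \modelsa D[\mybar{X}\backslash\mybar{R}]$, that is, $(\calM,G) \modelsa \bigwedge_{i=1}^n \forall \mybar{x_i}\,(\varphi_i(\mybar{R},\mybar{x_i}) \impl R_i(\mybar{x_i}))$. This is precisely the hypothesis of Lemma~\ref{lem.abstractNonnengart}(2) with $\alpha_i := \varphi_i$, whose conclusion $(\calM,G) \modelsa \bigwedge_{i=1}^n \forall \mybar{x_i}\,(\mu_i(\mybar{x_i}) \impl R_i(\mybar{x_i}))$ is the desired $(\calM,G) \modelsa \bigwedge_{j=1}^n (\mu_j \rightarrow R_j)$.

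The routine bookkeeping — propagating $\psi \equiv D \wedge \beta$ through the predicate substitution $[\mybar{X}\backslash\mybar{\mu}]$ by a substitution lemma, and noting that first-order equivalences transfer verbatim from $\models$ to $\modelsa$ — is unremarkable. The one genuinely load-bearing step is the pre-fixed-point verification $\modelsa D[\mybar{X}\backslash\mybar{\mu}]$ in part~(1): it is where the abstract interpretation enters, since it is the Galois connection, rather than plain containment, that turns the fixed-point equality $\alpha_{k_i}\circ(F_{\Phi_\psi})_i(\gamma_{k_1}(T_1),\dots,\gamma_{k_n}(T_n)) = T_i$ in the abstract lattice into the concrete containment $(F_{\Phi_\psi})_i(\gamma_{k_1}(T_1),\dots,\gamma_{k_n}(T_n)) \subseteq \gamma_{k_i}(T_i)$. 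Conveniently this is the same manipulation already performed in the backward direction of Lemma~\ref{lem.abstractNonnengart}, so it can be quoted rather than redone.
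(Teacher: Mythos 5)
Your proof is correct and follows essentially the same route as the paper: decompose $\psi$ via Lemma~\ref{lem.Hornformula.split} into the definite part $D$ and the end clauses $\beta$, then apply both parts of Lemma~\ref{lem.abstractNonnengart}. The only difference is presentational: where the paper asserts ``by construction'' that $\mybar{\mu}$ satisfies the base and induction clauses, you spell this claim out as the pre-fixed-point verification via the Galois connection, which, as you note yourself, is the same manipulation already carried out in the backward direction of Lemma~\ref{lem.abstractNonnengart}.
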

\begin{proof}%[Proof of Theorem \ref{thm.abstractFP}]
	$``\rightarrow"$: Assume that $(\calM,G) \modelsa \exists \mybar{X} \psi$. We first note that in the transformation of Lemma \ref{lem.Hornformula.split} there does not occur any second order quantifiers nor fixed point operators. Thus it also holds for abstract models and we obtain
	\[
	(\calM,G) \modelsa \exists \mybar{X} \bigwedge_{j=1}^n \forall \mybar{x_j} (\varphi_j(X_1,...,X_n,\mybar{x_j}) \impl X_j(\mybar{x_j})) \wedge \forall \mybar{y} \bigwedge_{C \in E} C(X_1,...,X_n,\mybar{y}),
	\]
	where $X_1,...,X_n$ occur only positively in $\varphi_1,...,\varphi_n$ and only negatively in the clauses $C$. Hence we can apply Lemma \ref{lem.abstractNonnengart}/1 to obtain
	\[
	(\calM,G) \modelsa \forall \mybar{y} \bigwedge_{C \in E} C(\mu_1,...,\mu_n,\mybar{y})
	\]
	By construction, the FO[LFP] formulas $\mu_1,...,\mu_n$ also satisfy the base and induction clauses, i.e. for any base or induction clause $C(X_1,...,X_n,\mybar{y})$ it holds $(\calM,G) \modelsa \forall \mybar{y}\, C(\mu_1,...,\mu_n,\mybar{y})$. Therefore it follows that
	\[
	(\calM,G) \modelsa \psi[X_1\backslash \mu_1,...,X_n\backslash\mu_n]
	\]
	The other direction ~$``\leftarrow"$~ is immediate.
	The second part of the theorem follows directly from Lemma \ref{lem.abstractNonnengart}/2.
\end{proof}
\setcounter{theorem}{\thetheoremRem}

We also present the fixed-point theorem for classical semantics. This has been proven directly in \cite{Kloibhofer20Fixed}, here it is a corollary of the abstract fixed-point theorem as every classical model is also an abstract model if the abstraction and concretisation functions are chosen as the identity functions.
\begin{corollary}\label{cor.HornFP}
	Let $\exists \mybar{X} \psi$ be a Horn formula equation and $\mu_j := [\lfp_{X_j} ~\Phi_{\psi}]$
	for $j \in \{1,\ldots,n\}$, then
	\begin{enumerate}
		\item $\models \exists \mybar{X}\,\psi \liff \psi\unsubst{\mybar{X}}{\mybar{\mu}}$ and
		\item if  $\calM \models \psi \unsubst{\mybar{X}}{\mybar{R}}$
		for some structure $\calM$ and relations $R_1,\ldots,R_n$ in $\calM$, then
		$\calM \models \Land_{j=1}^n ( \mu_j \impl R_j )$.
	\end{enumerate}
\end{corollary}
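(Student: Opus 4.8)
The final statement is Corollary~\ref{cor.HornFP}, the fixed-point theorem specialised to classical semantics. The plan is to derive it as a direct instance of the Abstract Horn fixed-point theorem (Theorem~\ref{thm.abstractFP}), exploiting the remark made earlier in the excerpt that every classical structure $\calM$ induces a model abstraction whose abstraction and concretisation functions are the identity, and that this abstraction satisfies exactly the same formulas as $\calM$ under $\models$. Concretely, I would fix the model abstraction $(\calM, G_{\id})$ where for each $k \in \Nat$ we take $\calV_k = (\powset(M^k), \subseteq)$ with $\alpha_k = \gamma_k = \id$; this is plainly a Galois connection, the lattices are complete, and the abstract relations of arity $k$ are simply all relations $R \subseteq M^k$.

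First I would check that under this choice the abstract semantics $\modelsa$ collapses to $\models$. For first-order atoms, connectives, and first-order quantifiers the defining clauses are identical by definition. For a second-order quantifier $\exists X\, \psi$, the abstract clause ranges $R$ over $V_k = \powset(M^k)$ and substitutes $\gamma_k(R) = R$, so it coincides with the standard clause; similarly for $\forall X$. For the $\lfp$-operator, since $\alpha_{k_i}$ and $\gamma_{k_i}$ are the identity, the abstract operator $F_\Phi^{\#}$ is literally $F_\Phi^{\calM}$, so $\lfp(F_\Phi^{\#})_i = \lfp(F_\Phi^{\calM})_i$ and $(\calM, G_{\id}) \modelsa [\lfp_{X_i}\, \Phi](\mybar{a})$ holds exactly when $\mybar{a} \in \lfp(F_\Phi)_i$, which is the standard semantics. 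An easy structural induction therefore gives $(\calM, G_{\id}) \modelsa \chi \Leftrightarrow \calM \models \chi$ for every SO[LFP] formula $\chi$, and likewise $\modelsa \chi \Leftrightarrow\, \models \chi$ when this holds for the abstraction attached to every structure.

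Having established this correspondence, I would transport both parts of Theorem~\ref{thm.abstractFP} across it. For part~(1), $\modelsa \exists \mybar{X}\, \psi \liff \psi[\mybar{X}\backslash\mybar{\mu}]$ holds for all model abstractions, in particular for the identity abstraction of every structure, which by the correspondence yields $\models \exists \mybar{X}\, \psi \liff \psi\unsubst{\mybar{X}}{\mybar{\mu}}$. For part~(2), given a structure $\calM$ and relations $R_1,\ldots,R_n$ with $\calM \models \psi\unsubst{\mybar{X}}{\mybar{R}}$, I read the $R_j$ as abstract relations of $(\calM, G_{\id})$; since $\gamma_{k_j}(R_j) = R_j$, the hypothesis $(\calM, G_{\id}) \modelsa \psi[\mybar{X}\backslash\mybar{R}]$ is just $\calM \models \psi\unsubst{\mybar{X}}{\mybar{R}}$, so Theorem~\ref{thm.abstractFP}/2 gives $(\calM, G_{\id}) \modelsa \Land_{j=1}^n(\mu_j \impl R_j)$, and the correspondence converts this back to $\calM \models \Land_{j=1}^n(\mu_j \impl R_j)$.

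There is no genuine obstacle here, since the corollary is deliberately engineered as the $\alpha_k = \gamma_k = \id$ instance of the abstract theorem; the only substantive point is verifying the semantic collapse $\modelsa\, =\, \models$ for the identity abstraction, and the mildly delicate case in that induction is the $\lfp$-clause, where one must observe that $F_\Phi^{\#}$ reduces to $F_\Phi^{\calM}$ so that the two least fixed points literally agree. Everything else is bookkeeping: matching the abstract substitution convention (writing $\varphi(R)$ for $\varphi[X\backslash\gamma_k(R)]$) with ordinary substitution once $\gamma_k$ is the identity. Accordingly I would keep the proof short, citing the remark in Section~\ref{sec.model_abstractions} that every classical model yields an identity model abstraction satisfying the same formulas, and then simply instantiate Theorem~\ref{thm.abstractFP} at that abstraction for both parts.
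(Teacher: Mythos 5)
Your proposal is correct and matches the paper's own argument exactly: the paper derives Corollary~\ref{cor.HornFP} from Theorem~\ref{thm.abstractFP} precisely by observing that every classical model becomes a model abstraction under the identity Galois connection and satisfies the same formulas. Your write-up merely spells out the semantic-collapse induction (including the $\lfp$ case, where $F_\Phi^{\#}$ reduces to $F_\Phi^{\calM}$) that the paper leaves implicit, which is fine but not a different route.
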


\begin{remark}
Note that Corollary~\ref{cor.HornFP} applies equally to constraints being FO[LFP]-formulas.
It therefore shows that FO[LFP], in contrast to first-order logic, has the property of being
 closed under solving Horn formula equations.
It thus shows that in FO[LFP] validity and solvability of Horn formula equations coincide.
This is in contrast to formula equations in first-order logic, cf.~Example~\ref{ex.feq}.
\end{remark}

\begin{remark}
One of the most prominent approaches to solving a SOQE problem $\exists \mybar{X}\, \varphi$ is to
 solve the corresponding WSOQE problem to obtain formulas $\mybar{\chi}$
 such that\ $\models \exists \mybar{X}\, \varphi \liff \varphi\unsubst{\mybar{X}}{\mybar{\chi}}$.
For example, it underlies the family of DLS algorithms: the original DLS~\cite{Doherty97Computing},
 $\text{DLS}^*$~\cite{Doherty98General,Nonnengart98Fixpoint}, and $\text{DLS}'$~\cite{Eberhard17Boolean}.
In this context part (1) of Corollary~\ref{cor.HornFP} can be understood as yielding a solution
 to the WSOQE-, and hence the SOQE-problem for Horn formula equations in FO[LFP].
\end{remark}

\subsection{The Dual and Linear-Horn fixed-point theorems}\label{sec.dualLinearfpthm}

We now turn our interest to dual-Horn and linear-Horn formula equations. Recall that a constrained dual-Horn clause consists of at most one positive predicate variable, in contrast to constrained Horn clauses, where at most one negative predicate variables occurs. Thus dual-Horn formula equations are dual to Horn formula equations.

For a formula $\psi$ we define $\psi^D$ as $\psi\sop \sel{X_1}{\neg X_1}, \ldots, \sel{X_n}{\neg X_n} \scl$,
 where $X_1,\ldots,X_n$ are all predicate variables occurring in $\psi$.
Note that $\psi \equiv (\psi^D)^D$ for all formulas $\psi$.
Moreover, note that $\models \exists \mybar{X}\, \psi \liff \exists \mybar{X}\, \psi^D$.
If $\exists \mybar{X}\, \psi$ is a Horn formula equation, then $\exists \mybar{X}\, \psi^D$ is logically
 equivalent to a dual-Horn formula equation and if $\exists \mybar{X}\, \varphi$ is a dual-Horn formula
 equation, then $\exists \mybar{X}\, \varphi^D$ is logically equivalent to a Horn formula equation.
Note that dualisation of a (dual) Horn formula equation interchanges (B)- and (E)-clauses.
\begin{example}
Consider the constrained dual-Horn clauses in the language of arithmetic
\[
\psi \quad\equiv\quad X(1) \land ( X(n) \impl X(n+1) \lor X(n+2)) \land \neg X(4).
\]
The dualisation of $\psi$ is
\[
\psi^D \quad\equiv\quad \neg X(1) \land ( \neg X(n) \impl \neg X(n+1) \lor \neg X(n+2)) \land X(4)
\]
which is logically equivalent to the constrained Horn clauses
\[
\neg X(1) \land ( X(n+1) \land X(n+2) \impl  X(n)) \land X(4)
\]
\end{example}

If $\exists \mybar{X} \psi$ is a dual-Horn formula equation, then $\exists \mybar{X} \psi^D$ is equivalent to a Horn formula equation. Now let $\mybar{\mu}$ be the least solution of $\exists \mybar{X} \psi^D$, then $\mybar{\neg \mu}$ is the greatest solution of $\exists \mybar{X} \psi$. Thus the dual-Horn fixed-point theorem follows from the Horn case in
 Theorem~\ref{thm.abstractFP}.

\begin{theorem}[Abstract dual-Horn fixed-point theorem]\label{thm.abstractDualFP}
	Let $\exists \mybar{X} \psi$ be a dual-Horn formula equation and $\nu_j := \neg [\lfp_{X_j} ~\Phi_{\psi^D}]$ for $j \in \{1,...,n\}$, then:
	\begin{enumerate}
		\item $\modelsa \exists \mybar{X}  ~\psi \leftrightarrow \psi[\mybar{X}\backslash \mybar{\nu}]$ and
		\item if $(\calM,G) \modelsa \psi[\mybar{X}\backslash \mybar{R}]$ for some model abstraction $(\calM,G)$ and abstract relations $R_1,...,R_n$, then $(\calM,G) \modelsa \bigwedge_{j=1}^n (R_j \rightarrow \nu_j)$.
	\end{enumerate}
\end{theorem}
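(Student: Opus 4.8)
The plan is to reduce both parts to the already-established Horn fixed-point theorem (Theorem~\ref{thm.abstractFP}) applied to the dualised equation $\exists\mybar{X}\,\psi^D$, which is logically equivalent to a Horn formula equation. Writing $\mu_j := [\lfp_{X_j}~\Phi_{\psi^D}]$, so that $\nu_j = \neg\mu_j$, the first thing I would record is the purely syntactic identity
\[
\psi[\mybar{X}\backslash\mybar{\nu}] = \psi[X_1\backslash\neg\mu_1,\ldots,X_n\backslash\neg\mu_n] = \psi^D[\mybar{X}\backslash\mybar{\mu}],
\]
which holds because $\psi^D$ is by definition $\psi$ with every occurrence of $X_i$ negated, so that substituting $\mu_i$ into $\psi^D$ produces the same formula as substituting $\neg\mu_i$ into $\psi$. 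The semantic counterpart I would rely on is the evaluation principle that, since $\psi$ contains only first-order quantifiers and the $X_i$ as its only predicate variables, complementing every component of an interpretation turns a model of $\psi^D$ into a model of $\psi$ and conversely; this is what links truth of the two equations.

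Granting this, part (1) would follow by chaining equivalences: by the identity above, $\modelsa\psi[\mybar{X}\backslash\mybar{\nu}]$ is the same as $\modelsa\psi^D[\mybar{X}\backslash\mybar{\mu}]$, and Theorem~\ref{thm.abstractFP}/1 applied to the Horn equation $\exists\mybar{X}\,\psi^D$ gives $\modelsa\exists\mybar{X}\,\psi^D\leftrightarrow\psi^D[\mybar{X}\backslash\mybar{\mu}]$; it then remains to identify $\modelsa\exists\mybar{X}\,\psi^D$ with $\modelsa\exists\mybar{X}\,\psi$. For part (2) I would start from a witness $\mybar R$ with $(\calM,G)\modelsa\psi[\mybar{X}\backslash\mybar R]$, pass to the complemented interpretation to obtain a model of $\psi^D$, apply Theorem~\ref{thm.abstractFP}/2 to get $\mu_j\impl R'_j$ for the relations $R'_j$ representing the complements $\gamma_{k_j}(R_j)\compl$, and finally re-complement to read this off as $R_j\impl\nu_j$, using Lemma~\ref{lem.ImplPositiv} to move the negation through the polarity reversal.

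The step I expect to be the genuine obstacle is the interaction of complementation with the abstract existential quantifier and the abstract fixed point. Under $\modelsa$ the second-order quantifiers range only over sets of the form $\gamma_k(R)$, and the complement of such a set need not itself be of this form; likewise $\nu_j$ denotes the complement $\gamma_{k_j}(\lfp(F^{\#}_{\Phi_{\psi^D}})_j)\compl$, which must be a legitimate (greatest) abstract witness for $\exists\mybar{X}\,\psi$ for the reduction to close. The clean way to handle this is to run the argument in the order-dual set domain obtained by replacing each $\gamma_k$ with the complemented map $R\mapsto\gamma_k(R)\compl$ and dualising the lattice order, so that the $\gamma$-images of the dual abstraction are exactly the complements needed; one then checks that this dual pair is again a Galois connection and that the abstract least fixed point of $\Phi_{\psi^D}$ computed there corresponds, after complementation, to the greatest solution of $\exists\mybar{X}\,\psi$. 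Verifying that this dualisation is compatible with the semantics of the $\lfp$-operator is the only part that requires real care; everything else is bookkeeping on top of Theorem~\ref{thm.abstractFP}.
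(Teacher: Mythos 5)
Your skeleton coincides with the paper's proof: dualise, observe that $\psi^D\unsubst{\mybar{X}}{\mybar{\mu}}$ is syntactically $\psi\unsubst{\mybar{X}}{\mybar{\nu}}$, and invoke Theorem~\ref{thm.abstractFP} for both parts --- the paper's argument consists of exactly these steps and nothing more. The genuine difference is that the paper takes for granted precisely the two points you refuse to treat as bookkeeping: it asserts $\modelsa \exists\mybar{X}\,\psi \liff \exists\mybar{X}\,\psi^D$ without comment, and in part~(2) it applies Theorem~\ref{thm.abstractFP}/2 to the ``relations'' $\neg R_1,\ldots,\neg R_n$, even though a complement $\gamma_k(R)\compl$ of a $\gamma$-image is in general not a $\gamma$-image, so these are not abstract relations and the hypothesis of that theorem is not met. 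Your diagnosis of where the difficulty sits is therefore exactly right; in standard semantics (Corollary~\ref{cor.HornFP}) these steps are harmless, which is presumably why the paper elides them.

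However, your proposed repair does not close the gap, and in fact no repair of this kind can. First, the complemented map $R\mapsto\gamma_k(R)\compl$ on the order-dual lattice is the concretisation half of a Galois connection if and only if $\gamma_k$ maps joins to unions, i.e. $\gamma_k(\bigvee_i Y_i)=\bigcup_i\gamma_k(Y_i)$; this fails already for the paper's own affine abstraction $G_{\aff}$ of Example~\ref{ex.affone}, where the join of two points concretises to the whole line through them. Second, even where such a dual domain $G'$ exists, the atoms $[\lfp_{X_j}\,\Phi_{\psi^D}]$ occurring in $\nu_j$ are by definition interpreted relative to the given $G$, so an argument run inside $G'$ proves an equivalence for $(\calM,G')$, not the claimed one for $(\calM,G)$. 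Third, and decisively, the obstacle you identified is fatal rather than technical, because the statement fails for general model abstractions: over $\Nat$ with $\calL=\{0,s\}$ take $V_1=\{\emptyset,E,\Nat\}$ with $E$ the even numbers, $\gamma_1=\id$, $\alpha_1(S)$ the least element of $V_1$ containing $S$ (and $V_k=\powset(\Nat^k)$ with identities for $k\neq 1$), and let $\psi$ be the dual-Horn clause set $X(s(0))\wedge\neg X(0)$. Then $\Phi_{\psi^D}$ induces the constant concrete operator $S\mapsto\{0\}$, whose abstract least fixed point is $E$, so $\nu$ denotes the odd numbers; hence $(\Nat,G)\modelsa\psi\unsubst{X}{\nu}$, while $(\Nat,G)\nmodelsa\exists X\,\psi$ because no element of $V_1$ contains $s(0)$ but not $0$ --- contradicting part~(1). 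So both your argument and the paper's require an additional hypothesis, for instance that each family $\{\gamma_k(R) : R\in V_k\}$ be closed under complementation (under which your complementation steps, and the paper's proof, go through verbatim), or a retreat to standard semantics, where Corollary~\ref{cor.HornFP} makes everything you wrote legitimate.
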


\begin{proof}
Let $\mu_j := [\lfp_{X_j} ~\Phi_{\psi^D}]$ for $j = 1,\ldots,n$.
For 1. note that, since $\exists \mybar{X}\, \psi$ is a dual Horn formula
 equation, $\exists \mybar{X}\, \psi^D$ is logically equivalent to a Horn formula equation. % FIXME: do we identify formula equations modulo logical equivalence?
An application of Theorem \ref{thm.abstractFP}.1. yields
 $\modelsa \exists \mybar{X}\, \psi^D \liff \psi^D \unsubst{\mybar{X}}{\mybar{\mu}}$.
Since $\psi^D\unsubst{\mybar{X}}{\mybar{\mu}}$ is syntactically equal to $\psi\unsubst{\mybar{X}}{\mybar{\nu}}$
 we obtain $\modelsa \exists \mybar{X}\, \psi \liff \exists \mybar{X}\psi^D \liff \psi^D\unsubst{\mybar{X}}{\mybar{\mu}} \liff \psi\unsubst{\mybar{X}}{\mybar{\nu}}$.
 
For 2.\ assume that $(\calM,G) \modelsa \psi[X_1\backslash R_1,...,X_n\backslash R_n]$ for some model abstraction $(\calM,G)$ and abstract relations $R_1,...,R_n$.
Then  $(\calM,G) \modelsa \psi^D[X_1\backslash \neg R_1,...,X_n\backslash\neg R_n]$, so, by Theorem \ref{thm.abstractFP}/2., $\calM \models \Land_{j=1}^n ( \mu_j \impl \neg R_j)$
 which yields $(\calM,G) \modelsa \Land_{j=1}^n ( R_j \rightarrow \nu_j )$ by contraposition.
\end{proof}

Note that the operator induced by $\Phi_{\psi^D}$ is not the dual operator of the one induced by $\Phi_\psi$
 in the sense of~\cite{Fritz01Fixed} because $\Phi_{\psi^D}$ is not the (pointwise) negation of
 $\Phi_\psi$.
Therefore $\nu$ is not the greatest fixed point of $\Phi_\psi$.

As a constrained linear-Horn clause is both a constrained Horn and a constrained dual-Horn clause we can combine the two fixed-point theorems and we obtain for the case of linear-Horn formula equations:
\begin{theorem}[Abstract linear-Horn fixed-point theorem]\label{thm.abstractLinearFP}
	Let $\exists \mybar{X} \psi$ be a linear-Horn formula equation, $\mu_j := [\lfp_{X_j} ~\Phi_\psi]$ and $\nu_j := \neg [\lfp_{X_j} ~\Phi_{\psi^D}]$ for $j \in \{1,...,n\}$, then:
	\begin{enumerate}
		\item $\modelsa \exists \mybar{X}  ~\psi \leftrightarrow \psi[\mybar{X}\backslash \mybar{\mu}]$ and $\modelsa \exists \mybar{X}  ~\psi \leftrightarrow \psi[\mybar{X}\backslash \mybar{\nu}]$ and
		\item if $(\calM,G) \modelsa \psi[\mybar{X}\backslash \mybar{R}]$ for some model abstraction $(\calM,G)$ and abstract relations $R_1,...,R_n$, then $(\calM,G) \modelsa \bigwedge_{j=1}^n (\mu_j \rightarrow R_j \land R_j \rightarrow \nu_j)$.
	\end{enumerate}
\end{theorem}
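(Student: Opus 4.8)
The plan is to derive this theorem as a direct consequence of the two previously established fixed-point theorems, exploiting the defining feature of linear-Horn formula equations: since every constrained linear-Horn clause satisfies both $n \leq 1$ and $m \leq 1$, it is simultaneously a constrained Horn clause and a constrained dual-Horn clause. Consequently, a linear-Horn formula equation $\exists \mybar{X}\, \psi$ is at once a Horn formula equation and a dual-Horn formula equation, so Theorem~\ref{thm.abstractFP} and Theorem~\ref{thm.abstractDualFP} both apply to it verbatim. The whole argument is therefore a matter of invoking each theorem on the same $\psi$ and conjoining the conclusions; no new construction is needed.

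For part~(1), I would first apply Theorem~\ref{thm.abstractFP}.1 to $\exists \mybar{X}\, \psi$, viewing it as a Horn formula equation, which yields $\modelsa \exists \mybar{X}\, \psi \liff \psi[\mybar{X}\backslash \mybar{\mu}]$ with $\mu_j = [\lfp_{X_j}\, \Phi_\psi]$. Then I would apply Theorem~\ref{thm.abstractDualFP}.1 to the same formula equation, now viewed as a dual-Horn formula equation, which yields $\modelsa \exists \mybar{X}\, \psi \liff \psi[\mybar{X}\backslash \mybar{\nu}]$ with $\nu_j = \neg[\lfp_{X_j}\, \Phi_{\psi^D}]$. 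Both equivalences hold under abstract semantics and together give exactly the two displayed equivalences of part~(1). It is worth noting that this immediately exhibits $\mybar{\mu}$ as the least solution and $\mybar{\nu}$ as the greatest solution of the same equation, which is the characteristic phenomenon of the linear-Horn case.

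For part~(2), I would assume $(\calM,G) \modelsa \psi[\mybar{X}\backslash \mybar{R}]$ for a model abstraction $(\calM,G)$ and abstract relations $R_1,\dots,R_n$. Applying Theorem~\ref{thm.abstractFP}.2 to this hypothesis (Horn view) gives $(\calM,G) \modelsa \bigwedge_{j=1}^n (\mu_j \rightarrow R_j)$, i.e.\ $\mybar{R}$ lies above the least solution. Applying Theorem~\ref{thm.abstractDualFP}.2 to the same hypothesis (dual-Horn view) gives $(\calM,G) \modelsa \bigwedge_{j=1}^n (R_j \rightarrow \nu_j)$, i.e.\ $\mybar{R}$ lies below the greatest solution. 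Conjoining the two conclusions yields precisely $(\calM,G) \modelsa \bigwedge_{j=1}^n (\mu_j \rightarrow R_j \land R_j \rightarrow \nu_j)$, completing part~(2).

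The proof is essentially bookkeeping, so there is no genuine mathematical obstacle; the only point demanding care is the dualisation machinery underlying Theorem~\ref{thm.abstractDualFP}. One must check that applying the dual-Horn theorem to $\psi$ really does produce the stated $\nu_j = \neg[\lfp_{X_j}\, \Phi_{\psi^D}]$, which in turn relies on the fact that $\exists \mybar{X}\, \psi^D$ is logically equivalent to a Horn formula equation and that $\psi^D \unsubst{\mybar{X}}{\mybar{\mu}}$ is syntactically $\psi \unsubst{\mybar{X}}{\mybar{\nu}}$ — the same syntactic identity already exploited in the proof of Theorem~\ref{thm.abstractDualFP}. As long as these dualisation bookkeeping steps are granted from the preceding results, the linear-Horn theorem falls out immediately by instantiating both prior theorems on a single formula equation and taking the conjunction.
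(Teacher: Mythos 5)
Your proposal is correct and matches the paper's approach exactly: the paper states this theorem without a written-out proof, noting only that a linear-Horn formula equation is both Horn and dual-Horn, so Theorems~\ref{thm.abstractFP} and~\ref{thm.abstractDualFP} can be combined, which is precisely the bookkeeping you carry out. Your proposal simply makes explicit what the paper leaves implicit, including the correct observation that the dualisation details are already absorbed into the proof of Theorem~\ref{thm.abstractDualFP}.
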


An \emph{interpolant} of two variable-free tuples of FO[LFP] formulas $\mybar{\mu}$ and
 $\mybar{\nu}$ in the same first-order language is a tuple of first-order formulas
 $\mybar{\chi}$ such that
 $\models \bigwedge_{j=1}^n (\mu_j \rightarrow \chi_j \land \chi_j \rightarrow \nu_j)$.
Theorem \ref{thm.abstractLinearFP}/2 states that every solution of the linear-Horn formula equation is an interpolant of $\mybar{\mu}$ and $\mybar{\nu}$. The converse is only true if we add another assumption:
\begin{corollary}\label{cor.linHorn_interpol}
	Let $\exists \mybar{X} \psi$ be a linear-Horn formula equation, $\mu_j := [\lfp_{X_j} ~\Phi_\psi]$ and $\nu_j := \neg [\lfp_{X_j} ~\Phi_{\psi^D}]$ for $j \in \{1,...,n\}$, then:
	\begin{enumerate}
		\item Every solution of $\exists \mybar{X} \psi$ is an interpolant of $\mybar{\mu}$ and $\mybar{\nu}$,
		\item an interpolant $\mybar{\chi}$ is a solution of $\exists \mybar{X} \psi$ iff $\mybar{\chi}$ satisfies all induction clauses.
	\end{enumerate}
\end{corollary}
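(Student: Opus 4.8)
The plan is to handle the two parts separately: part~(1) is essentially a restatement of the linear-Horn theorem, while the content lies in the $(\Leftarrow)$-direction of part~(2). For part~(1), suppose $\mybar{\chi}$ is a solution, so $\models\psi\unsubst{\mybar{X}}{\mybar{\chi}}$. Fix any structure $\calM$ and let $R_j$ be the relation defined by $\chi_j$ in $\calM$; then $\calM\models\psi\unsubst{\mybar{X}}{\mybar{R}}$. Specialising Theorem~\ref{thm.abstractLinearFP}/2 to standard semantics (identity Galois connection, exactly as in Corollary~\ref{cor.HornFP}) gives $\calM\models\Land_{j=1}^n(\mu_j\impl R_j\wedge R_j\impl\nu_j)$, which, since $R_j$ is defined by $\chi_j$, is $\calM\models\Land_{j=1}^n(\mu_j\impl\chi_j\wedge\chi_j\impl\nu_j)$. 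As $\calM$ was arbitrary, $\mybar{\chi}$ is an interpolant.

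In part~(2) the left-to-right direction is immediate, since a solution satisfies every clause of $\psi$ and in particular every induction clause. For the converse I would first record where the two bounds come from. Writing $\mu_j^D := [\lfp_{X_j}~\Phi_{\psi^D}]$, the tuple $\mybar{\mu}$ is a fixed point of $F_{\Phi_\psi}$, hence a pre-fixed point, so $\varphi_j(\mybar{\mu})\impl\mu_j$ for all $j$; by Lemma~\ref{lem.Hornformula.split} this is precisely the statement that $\mybar{\mu}$ satisfies all base and induction clauses of $\psi$. Dually, $\mybar{\mu}^D$ satisfies all base and induction clauses of the Horn formula equation $\exists\mybar{X}\,\psi^D$; since dualisation interchanges base and end clauses and $\psi^D\unsubst{\mybar{X}}{\mybar{\mu}^D}$ is syntactically $\psi\unsubst{\mybar{X}}{\mybar{\nu}}$, it follows that $\mybar{\nu}$ satisfies all end clauses of $\psi$. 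Both facts hold unconditionally, independently of whether $\exists\mybar{X}\,\psi$ is solvable.

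Now let $\mybar{\chi}$ be an interpolant satisfying all induction clauses; the interpolant condition gives $\models\mu_j\impl\chi_j$ and $\models\chi_j\impl\nu_j$ for all $j$. For a base clause $C\equiv\gamma\impl X_j(\mybar{s})$ the variable $X_j$ occurs only positively, so Lemma~\ref{lem.ImplPositiv} together with $\models\mu_j\impl\chi_j$ turns $\models C\unsubst{X_j}{\mu_j}$ into $\models C\unsubst{X_j}{\chi_j}$; hence $\mybar{\chi}$ satisfies every base clause. For an end clause $C\equiv\gamma\wedge X_j(\mybar{t})\impl\bot$ the variable $X_j$ occurs only negatively, so the converse half of Lemma~\ref{lem.ImplPositiv} together with $\models\chi_j\impl\nu_j$ turns $\models C\unsubst{X_j}{\nu_j}$ into $\models C\unsubst{X_j}{\chi_j}$; hence $\mybar{\chi}$ satisfies every end clause. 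Combined with the hypothesis on induction clauses, $\mybar{\chi}$ satisfies every clause of $\psi$, so $\models\psi\unsubst{\mybar{X}}{\mybar{\chi}}$ and $\mybar{\chi}$ is a solution.

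I expect the main obstacle to be the dualisation step asserting that $\mybar{\nu}$ satisfies the end clauses: one has to verify carefully, using $\psi\equiv(\psi^D)^D$ and the polarity conventions, that substituting $\neg X_j$ sends the base clauses of $\psi^D$ (which $\mybar{\mu}^D$ satisfies) to the end clauses of $\psi$ evaluated at $\mybar{\nu}$. Once the correct monotonicity direction is fixed for each clause type — upward via $\mybar{\mu}$ for base clauses, downward via $\mybar{\nu}$ for end clauses — the remaining manipulations are routine applications of Lemma~\ref{lem.ImplPositiv}.
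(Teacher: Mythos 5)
Your proof is correct and follows essentially the same route as the paper: part (1) by specialising Theorem~\ref{thm.abstractLinearFP}/2 to standard semantics, and part (2) by the polarity argument with Lemma~\ref{lem.ImplPositiv} applied to base clauses (via $\models \mu_j \impl \chi_j$) and end clauses (via $\models \chi_j \impl \nu_j$). The only difference is that you spell out what the paper's two-line proof leaves implicit — that $\mybar{\mu}$ satisfies the base clauses as a pre-fixed point and that $\mybar{\nu}$ satisfies the end clauses via the dualisation interchanging (B)- and (E)-clauses — which is a clarification, not a divergence.
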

\begin{proof}
	1. is an immediate consequence of Theorem \ref{thm.abstractLinearFP}/2. For 2. we first note that $\mybar{X}$ occur only positively in the base clauses and only negatively in the end clauses. Thus we can use Lemma \ref{lem.ImplPositiv} to obtain that the interpolant $\mybar{\chi}$ satisfies all base and end clauses, hence $\mybar{\chi}$ is a solution iff it also satisfies all induction clauses.
\end{proof}
Interpolation is an important technique for solving constrained Horn clauses, see,
 e.g.,~\cite{McMillan12Solving}.
The above result provides a theoretical connection between interpolation and verification.
The relationship between interpolation and Horn clauses has also been
 studied by encoding interpolation problems with a language condition on the constant
 symbols as Horn clause sets~\cite{Ruemmer13Classifying,Gupta14Generalised}.

\section{Decidability of the affine solution problem}\label{sectionAffineSolutionProblem}

As an application of the abstract fixed-point theorem we take a look at the affine
 solution problem, which is shown to be decidable in \cite{Hetzl20Decidability}.
This result is based on a generalisation of Karr's algorithm~\cite{Karr76Affine}
 to non-Horn formula equations.
It is proved by computing a fixed point similarly to how we did in Section~\ref{sec.afpthm}.
The main difference is that the fixed point is not computed on the logical level, but in
 the lattice of affine subspaces of $\Rat^n$.
The abstract fixed point theorem shown in Section~\ref{sec.afpthm} applies to this problem
 and drastically shortens the proof of decidability.

The setting of the affine solution problem is the language $\Laff = (0,1,+,(c)_{c \in \Rat})$
 as discussed in Example~\ref{ex.affone}.
As we are only working in $\Rat$ we can assume without loss of generality, that every term $t(x_1,...,x_n)$ is of the form $c_0 + \sum_{i=1}^m c_i x_i$ and every atomic formula $A(x_1,...,x_n)$ is of the form $c_0 + \sum_{i=1}^m c_i x_i = 0$. We call such atomic formulas \emph{linear equations} and conjunctions of linear equations \emph{linear equation systems}.
It is well known that linear equation systems define affine subspaces of $\Rat^n$.

\begin{definition}
	An \emph{affine formula equation} is a formula equation of the form $\exists \mybar{X} ~\psi$, where $\psi$ is a quantifier-free first-order formula in $\Laff\cup \{X_1,...,X_n\}$. 
\end{definition}

\begin{definition}
	The affine solution problem is the set of affine formula equations that have solutions in the class of linear equation systems in $\Rat$, i.e. the set 
	\[
	\{\exists \mybar{X} \psi ~|~ \text{There exist linear equation systems}~ F_1,...,F_n ~\text{such that}~ \Rat \models \psi[X_1\backslash F_1,...,X_n\backslash F_n]\}.
	\]
\end{definition}

Recall from Example \ref{ex.affone} the set domain  $G_{\aff} = ((\Aff \Rat^k,\subseteq),\aff_k,\id_k)_{k \in \Nat}$, where $\Aff \Rat^k$ is the set of all affine subspaces of $\Rat^k$, $\aff_k$ maps every subset of $\Rat^k$ to its affine hull and $\id_k$ is the embedding of $\Aff \Rat^k$ in $\powset(\Rat^k)$. Then $(\Rat, G_{\aff})$ is a model abstraction. $G_{\aff}$ is defined such that an affine formula equation $\exists \mybar{X} ~\psi$ is in the affine solution problem iff $(\Rat,G_{\aff}) \modelsa \exists \mybar{X} ~\psi$. Thus we can use the abstract fixed-point theorem.

\begin{theorem}\label{thm.affinesolprob}
	The affine solution problem is decidable.
\end{theorem}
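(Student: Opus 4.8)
The plan is to reduce the affine solution problem to the decidability of a quantifier-free validity problem over $\Rat$ via the abstract fixed-point theorem, specialised to the definable model abstraction $(\Rat, G_{\aff})$ introduced in Example~\ref{ex.affone}. The key observation, already noted in the text, is that an affine formula equation $\exists \mybar{X}\,\psi$ lies in the affine solution problem if and only if $(\Rat, G_{\aff}) \modelsa \exists \mybar{X}\,\psi$. By Theorem~\ref{thm.abstractFP}/1 applied to the abstract semantics $\modelsa$, this holds if and only if $(\Rat, G_{\aff}) \modelsa \psi[\mybar{X}\backslash\mybar{\mu}]$, where $\mu_j := [\lfp_{X_j}\,\Phi_\psi]$. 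So deciding membership in the affine solution problem reduces to deciding the truth of a single closed FO[LFP] statement $\psi[\mybar{X}\backslash\mybar{\mu}]$ in the model abstraction $(\Rat, G_{\aff})$.

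**Next I would** make the fixed point effectively computable. Since $\psi$ is quantifier-free in $\Laff$, the formulas $\varphi_j$ in $\Phi_\psi$ are existential, so the abstract operator $F^{\#}_\Phi$ on the product of lattices $\Aff\Rat^{k_1}\times\cdots\times\Aff\Rat^{k_n}$ is built from the concrete operator $F_\Phi$ followed by taking affine hulls. The crucial finiteness fact is that $(\Aff\Rat^k,\subseteq)$ has no infinite strictly ascending chains: the dimension of an affine subspace of $\Rat^k$ is bounded by $k$, so any strictly increasing sequence stabilises after at most $k+1$ steps. Iterating $F^{\#}_\Phi$ from the bottom element $(\bot,\ldots,\bot)$ therefore reaches the least fixed point $\lfp(F^{\#}_\Phi)$ in finitely many steps, and this gives explicit linear equation systems defining each $\gamma_{k_j}(\lfp(F^{\#}_\Phi)_j)$, i.e.\ the solutions $\mu_j$. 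The one computational subtlety is showing that each single application of $F^{\#}_\Phi$ is effective: evaluating $F_\Phi$ on given affine subspaces amounts to eliminating the existentially quantified variables $\mybar{y}$ from a quantifier-free $\Laff$-formula and then taking the affine hull of the resulting set, both of which are standard linear-algebra operations over $\Rat$ (projection of an affine set is affine, and the union inside the disjunction is handled by taking the affine hull of the union).

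**Finally I would** decide the truth of $\psi[\mybar{X}\backslash\mybar{\mu}]$ in $(\Rat, G_{\aff})$. Once the $\mu_j$ have been replaced by concrete linear equation systems $F_j$, the statement becomes a closed quantifier-free $\Laff$-sentence, and since first-order truth and abstract truth coincide on first-order formulas (as remarked after the definition of $\modelsa$), this is just the truth of a Boolean combination of linear equations over $\Rat$, which is decidable. Composing these steps yields a decision procedure, proving Theorem~\ref{thm.affinesolprob}.

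**The main obstacle** I expect is the termination-and-effectiveness argument for the fixed-point iteration: one must verify both that the ascending chain condition on $\Aff\Rat^k$ guarantees termination (relying on the dimension bound) and that each step of computing $F^{\#}_\Phi$ — in particular the projection eliminating $\mybar{y}$ and the passage to the affine hull of a finite union of affine sets — is genuinely algorithmic and stays within the class of affine subspaces. The rest is a matter of assembling the abstract fixed-point theorem with these effectivity facts, so the real content lies in controlling the abstract operator on the lattice of affine subspaces.
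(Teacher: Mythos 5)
There is a genuine gap at the very first step: you apply Theorem~\ref{thm.abstractFP} directly to the given affine formula equation $\exists \mybar{X}\,\psi$, but that theorem is stated (and is only true) for \emph{Horn} formula equations, whereas an affine formula equation is an arbitrary quantifier-free $\psi$ in $\Laff\cup\{X_1,\ldots,X_n\}$ --- its clauses may contain several positive occurrences of predicate variables. For such $\psi$ the tuple $\Phi_\psi$ of Definition~\ref{def.HornformulaSplit} is not even defined (it presupposes the classification of clauses into base, induction and end clauses, which is specific to Horn clause sets), and the conclusion of the theorem genuinely fails in the non-Horn case: for instance the single clause $X(0)\vee X(1)$ is solved by $X=\{0\}$ and by $X=\{1\}$ but by no common lower bound of the two, so there is no least solution and no $\lfp$ formula can play the role of $\mu$. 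This is exactly why the paper's proof begins with a reduction, taken from the earlier decidability paper, of the solvability of $\exists \mybar{X}\,\psi$ to the solvability of at least one of its finitely many \emph{projections}, each of which \emph{is} a Horn formula equation; Theorem~\ref{thm.abstractFP} is applied only to those projections. This omission is not cosmetic: handling non-Horn equations is the whole point of the affine solution problem (the underlying result is described as a generalisation of Karr's algorithm to non-Horn formula equations), so without the projection step your argument does not apply to the problem as stated.

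The remainder of your proposal --- iterating $F_{\Phi}^{\#}$ from the bottom of the finite-height lattices $\Aff \Rat^k$, representing each iterate by a linear equation system, observing that intersections, affine hulls of unions, and images and preimages under affine maps are effectively computable, and finally deciding the resulting quantifier-free $\Laff$-sentence over $\Rat$ by a decision procedure for linear arithmetic --- coincides with the paper's proof and is correct once one works with a Horn formula equation. So the fix is local: first reduce $\exists \mybar{X}\,\psi$ to its finitely many Horn projections, run your fixed-point computation on each projection, and accept if and only if at least one projection succeeds.
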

\begin{proof}
	As in~\cite{Hetzl20Decidability} we reduce the solvability of a formula equation $\exists \mybar{X} \psi$ to the
	solvability of one of its finitely many projections, which are Horn formula equations.
	Let $\exists \mybar{X} \varphi$ be one of them, then we may apply Theorem~\ref{thm.abstractFP}. This yields a tuple of FO[LFP]-formulas $\mybar{\mu}$ such that\ 
	$(\Rat,G_{\aff}) \modelsa \exists \mybar{X}\, \varphi \leftrightarrow \varphi\unsubst{\mybar{X}}{\mybar{\mu}}$.
	Since all lattices $\Aff \Rat^k$ have finite height we can compute fixed-point free formulas $\mybar{\chi}$ equivalent to $\mybar{\mu}$ from $\mybar{\mu}$. 
	
	For notational simplicity we describe how this is achieved in the case of one predicate variable $X$. Let $\mu = [\lfp_X~ \Phi_\varphi]$ be a FO[LFP]-formula. As in Section \ref{sec.prelim.approximation}, $\mu$ can be approximated by the sequence of sets in $V_k$
\begin{align*}
	S^0 &:= \varnothing,\\
	S^{l+1} &:= F_{\Phi}^{\#}(S^l).
\end{align*}
These sets describe increasing affine subspaces of $M^k$, thus, as the lattice $\Aff \Rat^k$ has height $k$, it holds $F_{\Phi}^{\#}(S^k) = S^k$ and therefore $\mybar{a} \in S^k \Leftrightarrow (\Rat,G_{\aff}) \modelsa \mu(\mybar{a})$.
In the actual computation, the affine subspace $S^i$ is represented by a linear equation
 system, i.e., a formula $\sigma_i$.
The computation of $F_{\Phi}^{\#}$ is based on the fact that, given finite representations
 of affine spaces $\calA$ and $\calB$ and of an affine transformation
 $\mathcal{T}$, it is possible to compute finite representations of $\calA \cap \calB$,
 $\calA \cup \calB$, $\calT(\calA)$, and $\calT^{-1}(\calA)$.
This is an elementary result of affine geometry and linear algebra,
 see~\cite{Snapper71Metric}.
% It remains to compute a first-order formula $\sigma_k$, which describes $S^l$. Similarly to Subsection \ref{sec.prelim.approximation} we define the formulas
% \begin{align*}
% 	\sigma_0 &:= \bot,\\
% 	\sigma_{l+1} &:= f_{\aff}(\Phi_\psi(\sigma_l)),
% \end{align*}
% where $f_{\aff}$ computes, given as an input a first-order formula, a first-order formula describing its affine hull. It is easy to see that $f_{\aff}$ is computable.
% By the definition of $F_{\Phi}^{\#}$ it holds $\mybar{a} \in S^l \Leftrightarrow \Rat \models \sigma_l$.
Hence $\sigma_k$ is the desired first-order formula such that $(\Rat,G_{\aff}) \modelsa \mu(\mybar{a}) \Leftrightarrow \Rat \models \sigma_k(\mybar{x})$.
	
	Therefore $(\Rat,G_{\aff}) \modelsa \exists \mybar{X}\, \varphi \leftrightarrow \varphi\unsubst{\mybar{X}}{\mybar{\chi}}$.
	Now $\varphi\unsubst{\mybar{X}}{\mybar{\chi}}$ is a first-order formula and hence $(\Rat,G_{\aff}) \modelsa \varphi\unsubst{\mybar{X}}{\mybar{\chi}}$
	iff $\Rat \models \varphi\unsubst{\mybar{X}}{\mybar{\chi}}$.
	The latter statement can be checked by a decision procedure for linear arithmetic.
\end{proof}

Karr's algorithm was extended to the computation of polynomial invariants
 in~\cite{RodriguezCarbonell07Automatic,RodriguezCarbonell07Generating}, see
 also~\cite{Hrushovski18Polynomial,Hrushovski23Strongest}.
The authors do no know whether the analogue of Theorem~\ref{thm.affinesolprob} for the case of
 polynomials is true.
If it is, a different proof strategy will be necessary because the reduction to projections
 is not possible in the case of polynomials.

\section{Applications to program verification}\label{sec.app_prog_verif}

In this section we will describe some direct applications of our fixed-point theorems to
 the foundations of program verification.
As an exemplary framework we will consider the Hoare calculus for a simple
 imperative programming language as in~\cite{Winskel93Formal}.
The two main goals of this section are:
\begin{enumerate}
\item We show that proving a partial correctness assertion amounts to
 solving a linear Horn formula equation in Theorem~\ref{thm.vcCompleteness}.
\begin{enumerate}
\item This allows to express the partial correctness of a program as an FO[LFP] formula,
 see Corollary~\ref{HoareVCLinearCorollary}.
\item We show that the canonical solutions of this linear Horn formula equation correspond
 to the least precondition and strongest postcondition in Theorems~\ref{thm.wpLfpNu}
 and~\ref{thm.spLfpMu}.
\end{enumerate}

\item We give an alternative proof of completeness of Hoare logic based on our abstract
 fixed-point theorem in Theorem~\ref{thm.hoareVcEquiv} which has the following advantages:
\begin{enumerate}
\item It does not need the expressivity hypothesis of the standard proof for $\Int$.
\item It accommodates abstract interpretation in the form of a restriction on the syntactic
 form of the loop invariants.
\end{enumerate}
\end{enumerate}

\subsection{Hoare triples}

In the study of program verification it is quite common to only work in the language of arithmetic and the structure $\Int$. This may disguise the importance of the expressivity of $\Int$. As we are aiming for a framework where the usage of FO[LFP]-formulas replaces the encoding of finite sequences we work in a more general setting.

\begin{definition}
	Let $\calL$ be a language. The set of programs in $\calL$ is defined by
	\begin{align*}
		p ::= \textbf{skip} ~|~ x := t ~|~ p_0;p_1 ~|~ \textbf{if}~ B~\textbf{then}~ p_0 ~\textbf{else} ~p_1 ~|~ \textbf{while}~ B ~\textbf{do}~ p_0,
	\end{align*}
	where $t$ is an $\calL$-term, $B$ a quantifier-free first-order formula in $\calL$
	and $x$ is a program variable.
\end{definition}

\begin{definition}
	A \emph{Hoare triple} is a triple $(\varphi,p,\psi)$ consisting of a program $p$ and two first-order formulas $\varphi$ and $\psi$. This is traditionally denoted as $\hoare{\varphi}{p}{\psi}$. 
\end{definition}

\begin{definition}
	Let $\calL$ be a language and $\calM$ be a structure with domain $M$. Define $\mathrm{Var} = \{x_0,x_1,...\}$ to be the set of variables which may occur in a program. A function $\sigma: \mathrm{Var} \rightarrow M$ is called a \emph{state}\footnote{Later we will talk of formulas, in which the variables of a program occur. In this sense a state may also be seen as an environment.}. We denote the set of all states with $\Sigma$. For $m \in M$ we write $\sigma[x_j \rightarrow m]$ for the unique state $\sigma'$ such that $\sigma'(x_j) = m$ and $\sigma'(x_i)= \sigma(x_i)$ for $i \neq j$.
\end{definition}

\begin{definition}[Denotational semantics]
	For every program $p$ we define a relation $C(p)$ on $\Sigma \times \Sigma$ by structural induction. 
	\begin{alignat*}{2}
		C(\hskipit) =&~ \{(\sigma,\sigma) ~|~ \sigma \in \Sigma \} \\
		\\
		C(x_j := t) =&~ \{(\sigma, \sigma[x_j \rightarrow m]) ~|~ \sigma \in \Sigma,~ m \in M ~\text{and}~ \calM,\sigma \models t = m\} \\ 
		\\
		C(p_0;p_1) =&~ C(p_1) \circ C(p_0) \\
		\\
		C(\hif B \hthen p_0 \helse p_1) =&~\{(\sigma,\sigma') ~|~ \calM,\sigma \models B ~\text{and}~ (\sigma,\sigma') \in C(p_0) \} ~\cup \\
		&~\{(\sigma,\sigma') ~|~ \calM,\sigma \models \neg B~\text{and}~ (\sigma,\sigma') \in C(p_1) \}\\
		\\
		C(\hwhile B \hdo p_0) =&~ \lfp(\Gamma),
	\end{alignat*}
	where $ \Gamma$ is an operator on $\Sigma \times \Sigma$ defined as
	\begin{alignat*}{2}
		\Gamma(X) =~ &\{(\sigma,\sigma') ~|~ \calM,\sigma \models B ~\text{and}~ (\sigma,\sigma') \in X \circ C(p_0)\} ~\cup \\
		&\{(\sigma,\sigma) ~|~ \calM,\sigma \models \neg B\}.
	\end{alignat*}

We see that $\Gamma$ is a monotone operator and thus $\lfp(\Gamma)$ is well-defined.
Moreover we can convince ourselves that $C(p)$ is actually a partial function from $\Sigma \rightarrow \Sigma$. If $C(p)(\sigma)$ is not defined, it means that a while-loop is not terminating. To make it a total function, we extend the set of states $\Sigma$ with the state $\bot$, which is associated with a non-terminating computation, i.e. we define $\Sigma_{\bot} := \Sigma \cup \{\bot\}$. For every $\sigma$ such that $C(p)(\sigma)$ is not defined yet, we define $C(p)(\sigma) := \bot$ and in that way $C(p)$ is a total function from $\Sigma \rightarrow \Sigma_{\bot}$.

\end{definition}

Now we define the meaning of partial correctness of a Hoare triple:

\begin{definition}[Semantics of Hoare triples]
	Let $\calL$ be a language and $\calM$ be a structure. Let $\sigma$ be a state, for a Hoare triple $\hoare{\varphi}{p}{\psi}$ define 
	\begin{align*}
		\calM, \sigma \models \hoare{\varphi}{p}{\psi} \quad \text{if} \quad  \calM,\sigma \models \varphi ~\Rightarrow~ \calM,C(p)(\sigma) \models \psi.
	\end{align*}
	Here $\calM,\bot \models \psi$ is defined to be true. In doing so we only check validity for states, in which the computation terminates. Now define
	\begin{align*}
		\calM \models \hoare{\varphi}{p}{\psi} \quad \text{if} \quad
		\forall \sigma \in \Sigma:\ \calM,\sigma \models \hoare{\varphi}{p}{\psi}.
	\end{align*}

Note that we can define $\calM \models \hoare{R}{p}{S}$ analogously for relations $R$ and $S$. This will be of use in some proofs.

\end{definition}

\subsection{Verification condition}

We define a different semantics of Hoare triples using the verification condition, which turns out to be equivalent to the usual semantics. As the verification condition is a linear-Horn formula equation, we are able to apply our results from Section \ref{sec.afpthm}. In particular we are able to express partial correctness of a Hoare triple as an FO[LFP]-formula.

\begin{definition}
	The \emph{verification condition} of a Hoare triple $\hoare{\varphi}{p}{\psi}$, written $\vc(\hoare{\varphi}{p}{\psi})$, is a formula equation $\exists \mybar{I} \forall \mybar{x}~ \vct(\hoare{\varphi}{p}{\psi})$, where $\vct(\hoare{\varphi}{p}{\psi})$ is defined by structural induction on $p$. Here $I$ is a fresh new formula variable, which does not appear in $\varphi$ nor $\psi$.
	\begin{align*}
		\vct(\hoare{\varphi}{\hskipit}{\psi}) &= (\varphi \rightarrow \psi)\\
		\vct(\hoare{\varphi}{x_j := t}{\psi}) &= (\varphi \rightarrow \psi[x_j\backslash t])\\
		\vct(\hoare{\varphi}{p_0;p_1}{\psi}) &= \vct(\hoare{\varphi}{p_0}{I}) \wedge \vct(\hoare{I}{p_1}{\psi}), \\
		\vct(\hoare{\varphi}{\hif B \hthen p_0 \helse p_1}{\psi}) &= \vct(\hoare{\varphi\wedge B}{p_0}{\psi}) \wedge \vct(\hoare{\varphi\wedge \neg B}{p_1}{\psi})\\
		\vct(\hoare{\varphi}{\hwhile B \hdo p_0}{\psi}) &= \vct(\hoare{I \wedge B}{p_0}{I}) \wedge (\varphi \rightarrow I) \wedge (I \wedge \neg B \rightarrow \psi)
	\end{align*}
	Then $\vc(\hoare{\varphi}{p}{\psi}) = \exists \mybar{I} \forall \mybar{x}~ \vct(\hoare{\varphi}{p}{\psi})$ is defined by universal quantification of every individual variable occurring in $\vct(\hoare{\varphi}{p}{\psi})$ and existential quantification of every predicate variable in $\vct(\hoare{\varphi}{p}{\psi})$.
\end{definition}
Note that there are translations which are more efficient in practice, for example in
 dealing with composition.
However, since we aim for theoretical results in this paper, we have opted for this simple
 translation.
Also note that this is a purely syntactic definition, thus we can define
 $\vc(\hoare{R}{p}{S})$ analogously for a program $p$ and predicate variables $R,S$.

\begin{lemma}
	Let $\hoare{\varphi}{p}{\psi}$ be Hoare triple. Then $\vc(\hoare{\varphi}{p}{\psi})$ is a linear-Horn formula equation.
\end{lemma}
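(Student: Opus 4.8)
The plan is to prove the statement by structural induction on the program $p$, showing that the verification condition $\vc(\hoare{\varphi}{p}{\psi})$ is (after the implicit logical normalisation of the abbreviations $\impl$ and $\liff$) a set of constrained linear-Horn clauses. Recall that a constrained clause is linear-Horn if it contains at most one positive and at most one negative occurrence of a predicate variable. The key observation driving the whole argument is that the predicate (formula) variables $\mybar{I}$ introduced by $\vct$ appear in $\vct(\hoare{\varphi}{p}{\psi})$ only in very controlled positions: each occurrence sits either as a precondition (on the left of an implication, hence negatively) or as a postcondition (on the right, hence positively) of an elementary triple, and never in the constraint formulas $\varphi$, $\psi$, $B$, $t$ themselves.

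First I would treat the two base cases $\hskipit$ and $x_j := t$. For these, $\vct$ produces a single implication $\varphi \impl \psi$ or $\varphi \impl \psi[x_j\backslash t]$; since $\varphi$ and $\psi$ are the \emph{given} first-order formulas of the triple, these contain no predicate variables at all, so the resulting clause is vacuously linear-Horn. The interesting part is the inductive step, where the crucial point is to track, for each conjunct produced, the polarity of every $I$-occurrence. For composition $p_0;p_1$ the fresh variable $I$ occurs as the postcondition of $\hoare{\varphi}{p_0}{I}$ and as the precondition of $\hoare{I}{p_1}{\psi}$; by the induction hypothesis the two sub-verification-conditions are already linear-Horn clause sets, and $I$ contributes exactly one positive occurrence in the first block and one negative occurrence in the second. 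For the conditional, the formula $B$ is merged into the first-order constraint $\varphi \wedge B$ (resp.\ $\varphi \wedge \neg B$), which introduces no predicate variables, so the two conjuncts remain linear-Horn by induction. For the while loop, the three conjuncts $\vct(\hoare{I\wedge B}{p_0}{I})$, $\varphi\impl I$, and $I\wedge\neg B \impl \psi$ must be inspected: $I$ occurs positively as the postcondition and negatively as the precondition in the first block (again linear-Horn by induction, with $B$ absorbed into the constraint), positively in $\varphi \impl I$, and negatively in $I\wedge\neg B\impl\psi$.

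The step I expect to be the main obstacle is making the bookkeeping on occurrences of the fresh variables fully rigorous, in particular ensuring that the freshly introduced $I$ in each inductive step does not collide with predicate variables arising deeper in the recursion and that the positive/negative-occurrence counts are genuinely bounded by one \emph{per clause} rather than merely per conjunct. The cleanest way to handle this is to strengthen the induction hypothesis: I would prove simultaneously that, for every sub-triple, each individual clause of $\vct$ has at most one positive and at most one negative predicate-variable occurrence, and that the constraint part is always a pure first-order formula in $\calL$. With that strengthened hypothesis, each inductive case reduces to checking the handful of conjuncts displayed in the definition of $\vct$, observing that conjunction of linear-Horn clause sets is again a linear-Horn clause set, and that the universal closure $\forall\mybar{x}$ together with the existential prefix $\exists\mybar{I}$ produces precisely a linear-Horn formula equation in the sense of the earlier definition. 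The remaining routine verification — that the abbreviations unfold to clauses of the required shape — I would leave to the reader.
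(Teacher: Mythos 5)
Your proposal is correct and follows essentially the same route as the paper: the paper's proof is a terse inspection asserting that every clause of $\vct(\hoare{\varphi}{p}{\psi})$ has the form $\gamma \vee \neg C \vee D$ with $\gamma$ first-order and $C,D$ either predicate variables or the formulas $\varphi,\psi$, which is exactly the invariant you establish. Your explicit structural induction with the strengthened hypothesis (at most one positive and one negative predicate-variable occurrence per clause, constraints purely first-order) is just the rigorous form of the induction the paper leaves implicit.
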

\begin{proof}
	The verification condition is a formula equation $\exists \mybar{I} \forall \mybar{x}~ \vct(\hoare{\varphi}{p}{\psi})$, where $\vct(\hoare{\varphi}{p}{\psi})$ is a conjunction of clauses. Each clause has the form $\gamma \vee \neg C \vee D$, where $\gamma$ is a first-order formula (it is a disjunction of the first-order formulas $B$ or $\neg B$) and $C$ and $D$ are either predicate variables or the first-order formulas $\varphi$ or $\psi$.
\end{proof}

\begin{lemma}%\label{HoareVcPositively}
	Let $\hoare{\varphi}{p}{\psi}$ be a Hoare triple. Then $\varphi$ occurs only negatively and $\psi$ occurs only positively in $\vc(\hoare{\varphi}{p}{\psi})$, respectively.
\end{lemma}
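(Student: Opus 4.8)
The plan is to prove the claim by structural induction on the program $p$, with $\varphi$ and $\psi$ universally quantified (so that the induction hypothesis may later be instantiated with the auxiliary pre- and postconditions that the construction introduces). First I would reduce from $\vc$ to $\vct$: the verification condition $\vc(\hoare{\varphi}{p}{\psi})$ arises from $\vct(\hoare{\varphi}{p}{\psi})$ by prefixing the quantifier block $\exists \mybar{I}\, \forall \mybar{x}$, and since quantifiers introduce no negations, every subformula occurrence has the same polarity in $\vc$ as in $\vct$. Hence it suffices to establish the statement for $\vct(\hoare{\varphi}{p}{\psi})$. Before the induction I would record the polarity bookkeeping used throughout: in $A \impl B = \neg A \vee B$ the antecedent $A$ occurs negatively and the consequent $B$ positively; $\wedge$ and $\vee$ preserve polarity; and polarities compose, so if a formula $\chi$ occurs only negatively in some $\Theta$, then every subformula of $\chi$ that occurs positively within $\chi$ — in particular each conjunct of a conjunction — occurs only negatively in $\Theta$. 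For the assignment case I would note that the postcondition appears as the instance $\psi\unsubst{x_j}{t}$, and replacing an individual variable by a term leaves all polarities unchanged, so this does not affect the claim.

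For the base cases, $\vct(\hoare{\varphi}{\hskipit}{\psi}) = \varphi \impl \psi$ and $\vct(\hoare{\varphi}{x_j := t}{\psi}) = \varphi \impl \psi\unsubst{x_j}{t}$; in both, $\varphi$ is an antecedent, hence negative, and $\psi$ (or its substitution instance) is a consequent, hence positive. For sequencing, $\vct(\hoare{\varphi}{p_0;p_1}{\psi}) = \vct(\hoare{\varphi}{p_0}{I}) \wedge \vct(\hoare{I}{p_1}{\psi})$; here $\varphi$ occurs only in the first conjunct and $\psi$ only in the second, so applying the induction hypothesis to the triples $\hoare{\varphi}{p_0}{I}$ and $\hoare{I}{p_1}{\psi}$, together with the polarity-preservation of $\wedge$, yields the claim (the behaviour of the auxiliary $I$ being irrelevant). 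The while case $\vct(\hoare{I \wedge B}{p_0}{I}) \wedge (\varphi \impl I) \wedge (I \wedge \neg B \impl \psi)$ is analogous: neither $\varphi$ nor $\psi$ occurs in the first conjunct, $\varphi$ is an antecedent in the second, and $\psi$ is a consequent in the third.

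The only case requiring the subformula observation is the conditional, where $\vct(\hoare{\varphi}{\hif B \hthen p_0 \helse p_1}{\psi}) = \vct(\hoare{\varphi \wedge B}{p_0}{\psi}) \wedge \vct(\hoare{\varphi \wedge \neg B}{p_1}{\psi})$. By the induction hypothesis (instantiated with the strengthened preconditions), $\varphi \wedge B$ occurs only negatively in the first conjunct and $\varphi \wedge \neg B$ only negatively in the second, while $\psi$ occurs only positively in both. Since $\varphi$ sits in positive position inside each of $\varphi \wedge B$ and $\varphi \wedge \neg B$, the composition of polarities makes it negative in the whole, so $\varphi$ occurs only negatively and $\psi$ only positively overall. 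I do not anticipate any serious obstacle: the argument is a routine structural induction, and the only two points that genuinely demand care are the strengthening of the precondition by $B$ in the conditional (resolved by the composition-of-polarities remark) and the term substitution in the assignment case (which preserves all polarities).
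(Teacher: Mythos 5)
Your proof is correct. The paper states this lemma without any proof at all — it is treated as immediate from the inductive definition of $\vct$ together with the convention that $A \impl B$ abbreviates $\neg A \vee B$ — and your structural induction (the reduction from $\vc$ to $\vct$, the universally quantified induction hypothesis so it can be instantiated at the auxiliary invariants $I$ and the strengthened preconditions $\varphi \wedge B$, the polarity-composition step for the conditional case, and the observation that term substitution in the assignment case preserves polarities) is precisely the routine argument the paper leaves implicit.
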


\begin{theorem}\label{thm.vcSoundness}
	Let $\calL$ be a language and $\calM$ be a structure. Let $\hoare{\varphi}{p}{\psi}$ be a Hoare triple. Then 
	\[
	\calM \models \vc(\hoare{\varphi}{p}{\psi}) \quad \Rightarrow \quad  \calM \models \hoare{\varphi}{p}{\psi}.
	\]
\end{theorem}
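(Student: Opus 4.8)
The plan is to prove the implication
\[
\calM \models \vc(\hoare{\varphi}{p}{\psi}) \quad \Rightarrow \quad \calM \models \hoare{\varphi}{p}{\psi}
\]
by structural induction on the program $p$, matching the structure of the inductive definition of $\vct$. For each case I would unfold the definition of $\vc$, extract from the assumption $\calM \models \vc(\hoare{\varphi}{p}{\psi})$ a suitable interpretation of the fresh predicate variables $\mybar I$, and then verify the denotational semantics condition: for every state $\sigma$, if $\calM,\sigma \models \varphi$ then $\calM, C(p)(\sigma) \models \psi$ (recalling that $\calM,\bot\models\psi$ holds by convention, so non-terminating runs are vacuously fine).

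The base cases $\hskipit$ and $x_j := t$ are direct. For $\hskipit$, the assumption reduces to $\calM \models \varphi \rightarrow \psi$, and since $C(\hskipit)(\sigma)=\sigma$ this gives exactly $\calM,\sigma\models\varphi \Rightarrow \calM,\sigma\models\psi$. For assignment, the assumption is $\calM \models \varphi \rightarrow \psi[x_j\backslash t]$; here $C(x_j:=t)(\sigma) = \sigma[x_j\rightarrow m]$ where $m$ is the value of $t$ in $\sigma$, so the substitution lemma (that $\calM,\sigma \models \psi[x_j\backslash t]$ iff $\calM,\sigma[x_j\rightarrow m]\models\psi$) closes the case. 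The sequential, conditional cases follow by combining the two induction hypotheses: for composition one witnesses $I$ by the formula or relation certifying correctness of the first half, and chains the two implications through the intermediate state $C(p_0)(\sigma)$; for the conditional one splits on whether $\calM,\sigma\models B$ and applies the appropriate hypothesis, noting that $\varphi\wedge B$ (resp.\ $\varphi\wedge\neg B$) holds in exactly that branch.

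The main obstacle, as usual for Hoare-logic soundness, will be the \hwhile-case, since $C(\hwhile B \hdo p_0)=\lfp(\Gamma)$ is defined as a least fixed point rather than by a finite composition. Here the assumption $\calM\models\vc$ provides an interpretation of the loop invariant $I$ — call the witnessing relation $\mathcal I$ — satisfying $\calM\models\hoare{\mathcal I\wedge B}{p_0}{\mathcal I}$ (via the induction hypothesis applied to the subformula $\vct(\hoare{I\wedge B}{p_0}{I})$), together with $\calM\models\varphi\rightarrow\mathcal I$ and $\calM\models \mathcal I\wedge\neg B\rightarrow\psi$. The plan is to show that $\mathcal I$ is preserved along the unfolding of $\lfp(\Gamma)$: I would prove by induction on the stages $\Gamma^k(\varnothing)$ (equivalently, by fixed-point induction) that whenever $(\sigma,\sigma')\in\lfp(\Gamma)$ and $\calM,\sigma\models\mathcal I$, then $\calM,\sigma'\models\mathcal I\wedge\neg B$. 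The preservation step uses the inductive correctness of $p_0$ on states satisfying $\mathcal I\wedge B$, and the base of $\Gamma$ contributes the pairs $(\sigma,\sigma)$ with $\calM,\sigma\models\neg B$. Combining this with $\varphi\rightarrow\mathcal I$ at the entry and $\mathcal I\wedge\neg B\rightarrow\psi$ at the exit yields $\calM,\sigma\models\varphi\Rightarrow\calM,C(\hwhile B\hdo p_0)(\sigma)\models\psi$, completing the case. The delicate point is handling the interpretation of $I$ uniformly: since $\vc$ existentially quantifies the $\mybar I$ and the semantics $\models$ ranges over relations, I would carry the witnesses as relations $\mathcal I$ (using the remark that $\calM\models\hoare{R}{p}{S}$ is defined for relations) so that the argument goes through without an expressivity hypothesis on $\calM$.
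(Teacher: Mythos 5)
Your proposal is correct and follows exactly the approach the paper takes: the paper's proof of Theorem~\ref{thm.vcSoundness} consists precisely of the statement that it is shown by structural induction on $p$, deferring the details to~\cite{Kloibhofer20Fixed}. Your filled-in details are sound, including the two points that actually require care --- carrying the existentially quantified invariants as relations $\mathcal{I}$ (so the induction hypothesis applies to Hoare triples with relational pre/postconditions, as the paper's remark on $\calM \models \hoare{R}{p}{S}$ anticipates) and handling the \textbf{while}-case by fixed-point induction on $\lfp(\Gamma)$.
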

\begin{proof}
This theorem is shown by structural induction on $p$. For a detailed proof see \cite{Kloibhofer20Fixed}.
\end{proof}

The next aim is to show the converse direction of Theorem \ref{thm.vcSoundness}, i.e. that $\calM \models \hoare{\varphi}{p}{\psi}$ implies  $\calM \models \vc(\hoare{\varphi}{p}{\psi})$. To prove this we need the concept of the weakest precondition.

\begin{definition}\label{def.weakPre}
	Let $\calL$ be a language and $\calM$ be a structure. Let $p$ be a program and $\varphi, \psi$ be first-order formulas in $\calL$. The \emph{weakest precondition}\footnote{In the literature this is mostly called weakest liberal precondition and the term weakest precondition is reserved for the context of total correctness. As we only talk about relative correctness of programs there is no need for us to do so.} of $p$ and $\psi$, written $\wp(p,\psi)$, is defined as
	\begin{align*}
		\wp(p,\psi) = \{\sigma \in \Sigma ~|~ \calM,C(p)(\sigma) \models \psi \}.
	\end{align*}
	For technical reasons we also define the relation $\Rwp(p,\psi)$ defined by the weakest precondition, i.e.
	\[
	\calM,\sigma \models \Rwp(p,\psi) ~\Leftrightarrow~ \calM,C(p)(\sigma) \models \psi.
	\]

	The \emph{strongest postcondition} of $p$ and $\varphi$, written $\sp(p,\varphi)$, is defined as
	\begin{align*}
		\sp(p,\varphi) = \{\sigma \in \Sigma ~|~ \exists \sigma' \in \Sigma: \calM,\sigma' \models \varphi ~\text{and}~ C(p)(\sigma') = \sigma \}.
	\end{align*}
\end{definition}

\begin{definition}
	Let $\calL$ be a language and $\calM$ be a structure. For every formula $\varphi$ in $\calL$ define a subset of $\Sigma$:
	\begin{align*}
		[\varphi] := \{\sigma \in \Sigma ~|~ \calM,\sigma \models \varphi \}.
	\end{align*}
	Similarly we define for a relation $R$ in $\calM$
		\begin{align*}
		[R] := \{\sigma \in \Sigma ~|~ \calM,\sigma \models R \}.
	\end{align*}
	In particular $[\Rwp(p,\psi)] = \wp(p,\psi)$.
\end{definition}

The following properties of the weakest precondition and the strongest postcondition
justify the terminology and are of fundamental importance.

\begin{lemma}\label{lem.weakPreEquiv} 
	Let $\calL$ be a language and $\calM$ be a structure. Let $\hoare{\varphi}{p}{\psi}$ be a Hoare triple. Then
	\begin{enumerate}
		\item $\calM \models \hoare{\varphi}{p}{\psi}$ iff $[\varphi] \subseteq \wp(p,\psi)$,
		\item $\calM \models \hoare{\varphi}{p}{\psi}$ iff $ [\psi] \supseteq \sp(p,\varphi)$.
	\end{enumerate}
\end{lemma}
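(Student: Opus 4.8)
The plan is to prove both biconditionals by directly unfolding the definitions of the Hoare-triple semantics together with those of $[\cdot]$, $\wp$, and $\sp$, with the only genuine subtlety arising in the non-terminating case for part~2.

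For part~1, I would simply observe that $\calM \models \hoare{\varphi}{p}{\psi}$ unfolds to the statement that for every state $\sigma$, if $\calM,\sigma \models \varphi$ then $\calM, C(p)(\sigma) \models \psi$. By the definitions of $[\varphi]$ and $\wp(p,\psi)$, the antecedent says exactly $\sigma \in [\varphi]$ and the consequent says exactly $\sigma \in \wp(p,\psi)$. Hence the whole statement is the universally quantified implication $\sigma \in [\varphi] \Rightarrow \sigma \in \wp(p,\psi)$, which is precisely $[\varphi] \subseteq \wp(p,\psi)$. No case distinction on termination is needed here, because $\wp$ is defined directly in terms of $C(p)(\sigma)$, and the convention that $\calM, \bot \models \psi$ is true is already absorbed into $\wp$.

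For part~2, the two directions require a little more care. For the forward direction, assuming $\calM \models \hoare{\varphi}{p}{\psi}$ I would take an arbitrary $\sigma \in \sp(p,\varphi)$, unfold the definition of $\sp$ to obtain a witness $\sigma'$ with $\calM,\sigma' \models \varphi$ and $C(p)(\sigma') = \sigma$, and then apply the Hoare triple at $\sigma'$ to get $\calM, C(p)(\sigma') \models \psi$, that is $\calM,\sigma \models \psi$, so $\sigma \in [\psi]$. For the backward direction, assuming $\sp(p,\varphi) \subseteq [\psi]$, I would take any $\sigma$ with $\calM,\sigma \models \varphi$ and aim to show $\calM, C(p)(\sigma) \models \psi$.

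The main obstacle, and the only real content beyond symbol-pushing, is this backward direction of part~2: here $C(p)(\sigma)$ lives in $\Sigma_\bot$, whereas the definition of $\sp(p,\varphi)$ quantifies only over genuine states $\sigma \in \Sigma$. I would handle this by splitting on whether the computation terminates. If $C(p)(\sigma) = \bot$, then $\calM, C(p)(\sigma) \models \psi$ holds by the convention that $\calM, \bot \models \psi$ is true. Otherwise $C(p)(\sigma) \in \Sigma$, and taking $\sigma$ itself as the existential witness shows $C(p)(\sigma) \in \sp(p,\varphi)$; the inclusion $\sp(p,\varphi) \subseteq [\psi]$ then yields $\calM, C(p)(\sigma) \models \psi$, as required. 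This asymmetry between $\wp$, which bakes in the $\bot$-convention, and $\sp$, which excludes $\bot$ by construction, is exactly why part~1 needs no case analysis while part~2 does.
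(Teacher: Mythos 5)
Your proof is correct: the paper states this lemma without proof, treating it as immediate from the definitions, and your argument is precisely that routine unfolding of the semantics of $\hoare{\varphi}{p}{\psi}$, $[\cdot]$, $\wp$, and $\sp$. You also correctly isolate and handle the one genuine subtlety — the case $C(p)(\sigma) = \bot$ in the backward direction of part~2, where the convention $\calM,\bot \models \psi$ closes the gap left by $\sp(p,\varphi)$ ranging only over $\Sigma$ — so nothing is missing.
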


\begin{lemma}\label{lem.wpHoareTripel}
	Let $\calL$ be a language and $\calM$ be a structure. Let $p$ be a program and $\psi$ be a first-order formula. Then $\calM \models \vc(\hoare{R_{\wp}(p,\psi)}{p}{\psi})$
\end{lemma}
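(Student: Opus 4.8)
The plan is to prove a slightly more general statement, which is legitimate by the remark following the definition of $\vc$ allowing relations as pre- and postconditions: for every program $p$ and every relation $S$ on states, $\calM \models \vc(\hoare{\Rwp(p,S)}{p}{S})$. The stated lemma is the special case $S = [\psi]$. I would argue by structural induction on $p$, in each case choosing the witnesses for the intermediate predicate variables $\mybar{I}$ to be the weakest preconditions of the relevant residual programs, and then verifying each clause of $\vct$ semantically. Note that the generalization to relation postconditions is genuinely needed, since the witnesses produced in the composition and loop cases (e.g. $\Rwp(p_1,S)$ or the loop invariant) need not be first-order definable.

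Before the induction I would record two ingredients. First, a \emph{precondition-weakening principle}: since the precondition occurs only negatively in $\vc(\hoare{\varphi}{p}{\psi})$ (by the lemma preceding Theorem~\ref{thm.vcSoundness}), Lemma~\ref{lem.ImplPositiv} yields that whenever $\calM \models \varphi' \impl \varphi$ and $\calM \models \vct(\hoare{\varphi}{p}{\psi})$ holds under some instantiation $\mybar{J}$ of $\mybar{I}$, then $\calM \models \vct(\hoare{\varphi'}{p}{\psi})$ holds under the \emph{same} $\mybar{J}$. Second, the standard commutation identities for the weakest precondition, read off the denotational semantics: $\Rwp(\hskipit,S) \equiv S$; $\Rwp(x_j := t, S)$ holds at $\sigma$ iff $S$ holds at the updated state $\sigma[x_j \rightarrow t]$; $\Rwp(p_0;p_1,S) \equiv \Rwp(p_0, \Rwp(p_1,S))$ from $C(p_0;p_1) = C(p_1)\circ C(p_0)$; on states satisfying $B$ the weakest precondition of $\hif B \hthen p_0 \helse p_1$ agrees with $\Rwp(p_0,S)$ and on $\neg B$ with $\Rwp(p_1,S)$; and, crucially, the loop-unfolding identity, which on states satisfying $B$ gives $\Rwp(\hwhile B \hdo p_0, S) \equiv \Rwp(p_0, \Rwp(\hwhile B \hdo p_0, S))$ and on states satisfying $\neg B$ gives $\Rwp(\hwhile B \hdo p_0, S) \equiv S$.

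With these in hand the induction is routine for the first four constructors. For $\hskipit$ and $x_j := t$ the single clause of $\vct$ is exactly one of the first two identities and holds (in fact as an equivalence), with no $\mybar{I}$ to instantiate. For $p_0;p_1$ I set the intermediate predicate $I := \Rwp(p_1,S)$: the induction hypothesis for $p_1$ discharges $\vct(\hoare{I}{p_1}{S})$, the induction hypothesis for $p_0$ with postcondition $\Rwp(p_1,S)$ discharges $\vct(\hoare{\Rwp(p_0,\Rwp(p_1,S))}{p_0}{\Rwp(p_1,S)})$, the composition identity rewrites that precondition as $\Rwp(p_0;p_1,S)$, and I merge the two witness tuples. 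For the conditional I apply the hypotheses to $p_0$ and $p_1$ with postcondition $S$, then use the $B$/$\neg B$ identities together with the weakening principle to replace the preconditions $\Rwp(p_0,S)$ and $\Rwp(p_1,S)$ by $\Rwp(\hif B \hthen p_0 \helse p_1, S)\wedge B$ and $\Rwp(\hif B \hthen p_0 \helse p_1, S)\wedge\neg B$ respectively.

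The while case is the heart of the argument and the step I expect to be the main obstacle. I would take the loop invariant to be $I := \Rwp(\hwhile B \hdo p_0, S)$. The clause $\varphi \impl I$ holds because $\varphi$ is exactly this weakest precondition; the clause $I \wedge \neg B \impl S$ follows from the $\neg B$ half of the loop-unfolding identity (a loop whose guard fails acts as $\hskipit$). For the remaining clause $\vct(\hoare{I \wedge B}{p_0}{I})$ I invoke the induction hypothesis for $p_0$ with postcondition $I$, obtaining witnesses for $\vct(\hoare{\Rwp(p_0,I)}{p_0}{I})$, and then use the $B$ half of the loop-unfolding identity, namely $I \equiv \Rwp(p_0,I)$ on states satisfying $B$, hence $I \wedge B \impl \Rwp(p_0,I)$, to weaken the precondition to $I \wedge B$ via the weakening principle. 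The delicate points are establishing the loop-unfolding identity from the least-fixed-point definition $C(\hwhile B \hdo p_0) = \lfp(\Gamma)$ and handling the non-terminating state $\bot$ uniformly using the convention that $\calM, \bot \models S$ is true; these bookkeeping facts about $\lfp(\Gamma)$ regarded as a partial function are exactly where the care is required.
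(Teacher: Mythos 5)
Your proposal is correct and follows essentially the same route as the paper's proof: structural induction on $p$, with the weakest precondition of the residual program chosen as witness for each intermediate predicate variable, the identity $\Rwp(p_0;p_1,S)=\Rwp(p_0,\Rwp(p_1,S))$ for sequencing, the loop-unfolding identity for the while case (with invariant $\Rwp(\hwhile B \hdo p_0,S)$), and precondition weakening justified by the negative occurrence of the precondition together with Lemma~\ref{lem.ImplPositiv}. Your explicit generalization to relation postconditions and the explicitly stated weakening principle merely spell out steps the paper uses implicitly, since its induction hypothesis is likewise applied to relation postconditions such as $\Rwp(p_1,\psi)$ and $\Rwp(p,\psi)$.
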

\begin{proof}
	This is proven by structural induction on the program $p$. We prove the cases $p \equiv p_0;p_1$ and $p \equiv \hwhile B \hdo p_0$, the others are routine.

	 Let $p \equiv p_0;p_1$. We ought to show that $\calM \models \exists I ~ \vc(\hoare{\Rwp(p,\psi)}{p_0}{I}) \wedge \vc(\hoare{I}{p_1}{\psi})$. By the induction hypothesis it holds
	 \begin{align*}
	 	\calM \models& \vc(\hoare{\Rwp(p_1,\psi)}{p_1}{\psi}) \quad \text{and}\\
	 	\calM \models& \vc(\hoare{\Rwp(p_0,\Rwp(p_1,\psi))}{p_0}{\Rwp(p_1,\psi)}).
	 \end{align*}
 	We are finished if we can show that $\Rwp(p,\psi) = \Rwp(p_0,\Rwp(p_1,\psi))$, yet this follows from
 	\begin{align*}
 		\calM, \sigma \models \Rwp(p,\psi) \quad 
 		\Leftrightarrow \quad&  \calM, C(p_0;p_1)(\sigma) \models \psi \\
 		\Leftrightarrow \quad& \calM, C(p_1) \circ C(p_0)(\sigma) \models \psi \\
 		\Leftrightarrow \quad& \calM, C(p_0)(\sigma) \models \Rwp(p_1,\psi) \\
 		\Leftrightarrow \quad& \calM, \sigma \models \Rwp(p_0,\Rwp(p_1,\psi)).
 	\end{align*}

	Now consider the case $p \equiv \hwhile B \hdo p_0$. For shorter notation we define $R = \Rwp(p,\psi)$. We have to show that $\calM \models \exists I ~\vc(\hoare{I\wedge B}{p_0}{I}) \wedge (R \impl I) \wedge (I\wedge \neg B \impl \psi)$. It suffices to show that
	\begin{enumerate}
		\item $\calM \models \vc(\hoare{R \wedge B}{p_0}{R})$ and
		\item $\calM \models (R \wedge \neg B) \impl  \psi$.
	\end{enumerate}

To prove these two claims we use the following fact, cf. \cite[Proposition 5.1.]{Winskel93Formal}, for a program of the form $p \equiv \hwhile B \hdo p_0$:
\begin{equation}\label{propWinskelWhile}
	C(p) \equiv C(\hif B \hthen p_0 ; p \helse \hskipit)
\end{equation}

We now prove (1). The induction hypothesis states $\calM \models \vc(\hoare{\Rwp(p_0,R)}{p_0}{R})$, hence it remains to show that $\calM \models (R \wedge B) \impl \Rwp(p_0,R)$. This follows as 

 \begin{align*}
	\calM, \sigma \models \Rwp(p,\psi) \wedge B \quad 
	\Leftrightarrow \quad&  \calM, C(p)(\sigma) \models \psi \wedge \calM,\sigma \models B \\
	\overset{(\ref{propWinskelWhile})}{\Rightarrow} \quad& \calM,  C(p_0,p)(\sigma) \models \psi \\
	\Leftrightarrow \quad& \calM, C(p) \circ C(p_0)(\sigma) \models \psi \\
	\Leftrightarrow \quad& \calM, C(p_0)(\sigma) \models \Rwp(p,\psi) \\
	\Leftrightarrow \quad& \calM, \sigma \models \Rwp(p_0,\Rwp(p,\psi)).
\end{align*}

For the second claim let $\sigma$ be a state such that $\calM,\sigma \models R \wedge \neg B$. By the definition of $R$ we have $\calM,C(p)(\sigma) \models \psi$, thus (\ref{propWinskelWhile}) yields $\calM, \hskipit(\sigma) \models \psi$, which translates to $\calM, \sigma \models \psi$.

\end{proof}

\begin{theorem}\label{thm.vcCompleteness}
	Let $\calL$ be a language and $\calM$ be a structure. Let $\hoare{\varphi}{p}{\psi}$ be a Hoare triple. Then 
	\[
	\calM \models \hoare{\varphi}{p}{\psi} \quad \Rightarrow \quad \calM \models \vc(\hoare{\varphi}{p}{\psi}).
	\]
\end{theorem}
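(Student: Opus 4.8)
The plan is to transfer the verification condition from the canonical precondition $\Rwp(p,\psi)$, for which it is already known to hold by Lemma~\ref{lem.wpHoareTripel}, to the given precondition $\varphi$, exploiting that $\varphi$ occupies a negatively-occurring position in the verification condition.

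First I would record, from the hypothesis $\calM \models \hoare{\varphi}{p}{\psi}$ together with Lemma~\ref{lem.weakPreEquiv}/1, that $[\varphi] \subseteq \wp(p,\psi) = [\Rwp(p,\psi)]$, which is precisely $\calM \models \varphi \impl \Rwp(p,\psi)$ once $\varphi$ and $\Rwp(p,\psi)$ are read as relations on states. Next, Lemma~\ref{lem.wpHoareTripel} supplies $\calM \models \vc(\hoare{\Rwp(p,\psi)}{p}{\psi})$, so there is a tuple of relations $\mybar{J}$ witnessing the existentially quantified invariant variables, i.e.\ $\calM \models \forall \mybar{x}\ \vct(\hoare{\Rwp(p,\psi)}{p}{\psi})[\mybar{I}\backslash \mybar{J}]$.

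Now I would treat the precondition slot as a fresh predicate variable $P$ and consider the first-order formula $\chi(P) := \forall \mybar{x}\ \vct(\hoare{P}{p}{\psi})[\mybar{I}\backslash \mybar{J}]$. By the polarity lemma preceding Theorem~\ref{thm.vcSoundness}, $P$ occurs only negatively in $\chi(P)$, and substituting the fixed witnesses $\mybar{J}$ for the invariant variables does not affect this polarity. Applying the negative case of Lemma~\ref{lem.ImplPositiv} to $\calM \models \varphi \impl \Rwp(p,\psi)$ then yields $\calM \models \chi(\Rwp(p,\psi)) \impl \chi(\varphi)$. Since $\chi(\Rwp(p,\psi))$ holds by the previous step, we obtain $\calM \models \forall \mybar{x}\ \vct(\hoare{\varphi}{p}{\psi})[\mybar{I}\backslash \mybar{J}]$, and reinstating the existential quantifiers over $\mybar{I}$ gives $\calM \models \vc(\hoare{\varphi}{p}{\psi})$.

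The main obstacle is conceptual rather than computational: one must fix a concrete solution $\mybar{J}$ for the existentially quantified invariant variables \emph{before} invoking Lemma~\ref{lem.ImplPositiv}, since that lemma speaks only about first-order formulas and cannot be applied underneath the second-order existential quantifier directly. Everything else is bookkeeping about polarities, for which the two polarity lemmas about $\vc$ have prepared exactly the right facts.
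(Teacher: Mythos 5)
Your proof is correct and follows essentially the same route as the paper's: Lemma~\ref{lem.weakPreEquiv} turns the hypothesis into $\calM \models \varphi \impl \Rwp(p,\psi)$, Lemma~\ref{lem.wpHoareTripel} supplies $\calM \models \vc(\hoare{\Rwp(p,\psi)}{p}{\psi})$, and the negative polarity of the precondition slot together with Lemma~\ref{lem.ImplPositiv} transfers this to $\varphi$. If anything, your version is slightly more careful than the paper's, which applies the (first-order) Lemma~\ref{lem.ImplPositiv} directly to the second-order formula $\vc(\hoare{\Rwp(p,\psi)}{p}{\psi})$; your step of first fixing witnesses $\mybar{J}$ for the existentially quantified invariant variables, applying the lemma to the resulting matrix, and re-quantifying afterwards makes that application fully rigorous.
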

\begin{proof}
	By Lemma \ref{lem.weakPreEquiv} the assumption $\calM \models \hoare{\varphi}{p}{\psi}$ is equivalent to $\calM \models \varphi \impl R_{\wp}(p,\psi)$. Lemma \ref{lem.wpHoareTripel} states that $\calM \models \vc(\hoare{R_{\wp}(p,\psi)}{p}{\psi})$ and, as $R_{\wp}(p,\psi)$ occurs only negatively in $\vc(\hoare{R_{\wp}(p,\psi)}{p}{\psi})$, Lemma \ref{lem.ImplPositiv} concludes $\calM \models \vc(\hoare{\varphi}{p}{\psi})$.
\end{proof}

As $\vc(\hoare{\varphi}{p}{\psi})$ is a linear-Horn formula equation we can now obtain the statement, that a partial correctness assertion of an imperative program
is expressible as a formula in FO[LFP], a point also made in~\cite{Blass87Existential} for existential least fixed-point logic.
\begin{corollary}\label{HoareVCLinearCorollary}
	Let $\calL$ be a language and $\calM$ be a structure. Let $\hoare{\varphi}{p}{\psi}$ be a Hoare triple. Consider the linear-Horn formula equation $\exists \mybar{I} ~\pi \equiv \vc(\hoare{\varphi}{p}{\psi})$ with predicate variables $I_1,...,I_n$. Define $\mu_j := [\lfp_{I_j} ~\Phi_{\pi}]$ and $\nu_j := \neg [\lfp_{I_j} ~\Phi_{\pi^D}]$ for $j \in \{1,...,n\}$. Then 
	\begin{enumerate}
		\item $\calM \models \vc(\hoare{\varphi}{p}{\psi})$ iff $\calM \models \vct(\hoare{\varphi}{p}{\psi})[\mybar{I}\backslash \mybar{\mu}]$ iff $\calM \models \vct(\hoare{\varphi}{p}{\psi})[\mybar{I}\backslash \mybar{\nu}]$.
		\item If $\calM \models \vct(\hoare{\varphi}{p}{\psi})[\mybar{I}\backslash \mybar{R}]$ for relations $R_1,...,R_n$, then $\calM \models \bigwedge_{j=1}^n \mu_j \rightarrow R_j \wedge R_j \rightarrow \nu_j$. 
	\end{enumerate}
\end{corollary}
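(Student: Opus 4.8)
The plan is to obtain this corollary as a direct instance of the abstract linear-Horn fixed-point theorem (Theorem~\ref{thm.abstractLinearFP}), read under standard semantics. We have already seen that $\vc(\hoare{\varphi}{p}{\psi})$ is a linear-Horn formula equation, so writing it as $\exists \mybar{I}\, \pi$ with $\pi \equiv \forall \mybar{x}\, \vct(\hoare{\varphi}{p}{\psi})$, the tuples $\mybar{\mu}$ and $\mybar{\nu}$ defined in the corollary are precisely those that Theorem~\ref{thm.abstractLinearFP} associates with $\exists \mybar{I}\, \pi$ (with its $\psi$ instantiated to $\pi$ and its $\mybar{X}$ to $\mybar{I}$). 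As noted after Theorem~\ref{thm.abstractFP} and in the discussion of model abstractions, every classical model $\calM$ becomes a model abstraction once $\alpha_k$ and $\gamma_k$ are taken to be the identity, and on such an abstraction $\modelsa$ coincides with $\models$. Thus the abstract statements of Theorem~\ref{thm.abstractLinearFP} specialise to statements about ordinary satisfaction in $\calM$.

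First I would handle part~(1). Theorem~\ref{thm.abstractLinearFP}/1, read under standard semantics, gives $\models \exists \mybar{I}\, \pi \leftrightarrow \pi[\mybar{I}\backslash\mybar{\mu}]$ and $\models \exists \mybar{I}\, \pi \leftrightarrow \pi[\mybar{I}\backslash\mybar{\nu}]$. Specialising these validities to the fixed structure $\calM$, the left-hand side $\calM \models \exists \mybar{I}\, \pi$ is by definition $\calM \models \vc(\hoare{\varphi}{p}{\psi})$. For the right-hand side I would use that $\calM \models \pi[\mybar{I}\backslash\mybar{\mu}]$ unfolds to $\calM \models \forall \mybar{x}\, \vct(\hoare{\varphi}{p}{\psi})[\mybar{I}\backslash\mybar{\mu}]$, which, since $\calM \models \cdot$ already quantifies universally over all environments, is the same as $\calM \models \vct(\hoare{\varphi}{p}{\psi})[\mybar{I}\backslash\mybar{\mu}]$; the analogous remark applies to $\mybar{\nu}$. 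This yields the chain of equivalences in part~(1).

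For part~(2) I would apply Theorem~\ref{thm.abstractLinearFP}/2 directly. Given relations $R_1,\ldots,R_n$ with $\calM \models \vct(\hoare{\varphi}{p}{\psi})[\mybar{I}\backslash\mybar{R}]$, the same unfolding shows $\calM \models \pi[\mybar{I}\backslash\mybar{R}]$, that is, $(\calM,G) \modelsa \pi[\mybar{I}\backslash\mybar{R}]$ for the identity model abstraction $G$. Part~2 of the theorem then yields $(\calM,G) \modelsa \bigwedge_{j=1}^n (\mu_j \rightarrow R_j \land R_j \rightarrow \nu_j)$, which under standard semantics is exactly $\calM \models \bigwedge_{j=1}^n (\mu_j \rightarrow R_j \land R_j \rightarrow \nu_j)$, the conclusion of part~(2).

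Since the result is essentially a transcription of Theorem~\ref{thm.abstractLinearFP} into the notation of Hoare triples, there is no genuine mathematical obstacle here. The only point requiring care is the bookkeeping: checking that $\Phi_\pi$, $\Phi_{\pi^D}$ and the associated least fixed points used to define $\mybar{\mu}$ and $\mybar{\nu}$ really coincide with those produced by the theorem, and that the passage between the universally quantified $\pi$ and its matrix $\vct(\hoare{\varphi}{p}{\psi})$ (quantified over all environments) is carried out consistently in both the hypotheses and the conclusions.
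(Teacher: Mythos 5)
Your proposal is correct and matches the paper's own treatment: the paper states this corollary without a separate proof, presenting it as an immediate specialisation of Theorem~\ref{thm.abstractLinearFP} to standard semantics via the identity model abstraction (exactly the reduction justified after Theorem~\ref{thm.abstractFP} and used for Corollary~\ref{cor.HornFP}). Your handling of the bookkeeping between $\pi$, its matrix $\vct(\hoare{\varphi}{p}{\psi})$, and quantification over environments is the right and only point of care, and it is done correctly.
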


\subsection{Weakest Precondition and Strongest Postcondition}

By defining the equivalent semantics based on the verification condition, which is a linear-Horn formula equation, we now get some corollaries of our linear-Horn fixed-point theorem. We will see that the canonical solutions correspond to the weakest precondition and strongest postcondition. 

In the integers it is well-known that for any program $p$ and any formula $\psi$ there is a first-order formula
$\varphi_\wpc$ which defines the weakest precondition, i.e., $[\varphi_\wpc] = \wpc(p,\psi)$ and, symmetrically,
for any program $p$ and any formula $\varphi$ there is a first-order formula $\psi_\spc$ which
defines the strongest postcondition, i.e. $[\psi_\spc] = \spc(p,\varphi)$.
Note that these formulas rely on the expressivity of the assertion language, i.e, in this setting, 
on an encoding of finite sequences in $\Int$. By allowing FO[LFP] formulas we are not dependent on
 the expressivity of the structure and moreover the FO[LFP] formulas more clearly describe the runs of the program. 
This point was made prominently in~\cite{Blass87Existential}.

Consider the formula $\exists X ~\vc(\hoare{\varphi}{p}{X})$, which asks for a relation $X$ such that all states satisfying $\varphi$ fulfil $X$ after running the program $p$. Similarly we are interested in solutions of the formula $\exists Y ~\vc(\hoare{Y}{p}{\psi})$. Note that these are linear-Horn formula equations and therefore we can apply the results from Section \ref{sec.afpthm}. In general there also occur predicate variables in $\vc(\hoare{\varphi}{p}{\psi})$, yet here we are only interested in the predicate variables that are stated specifically.

\begin{theorem}\label{thm.wpLfpNu}
	Let $\calL$ be a language and $\calM$ be a structure. Let $p$ be a program and $\psi$ be a first-order formula. Consider the linear-Horn formula equation $\exists Y ~\pi \equiv \exists Y~ \vc(\hoare{Y}{p}{\psi})$. Let $\nu := \neg [\lfp_{Y} ~\Phi_{\pi^D}]$, then 
	\begin{align*}
		[\nu] = \wp(p,\psi).
	\end{align*}
\end{theorem}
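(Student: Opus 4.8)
The plan is to prove the set equality $[\nu] = \wp(p,\psi)$ by two inclusions, exploiting that $\nu$ is, by the linear-Horn fixed-point theorem, the canonical \emph{greatest} $Y$-value in a solution of the formula equation $\exists Y\, \pi$, while $\wp(p,\psi)$ is precisely the extension of the weakest (i.e.\ largest) valid precondition $\Rwp(p,\psi)$. The bridge between the two viewpoints is provided by Lemma~\ref{lem.wpHoareTripel}, which records that $\Rwp(p,\psi)$ always solves the verification condition, so that $\exists Y\, \pi$ is valid in every structure: indeed $\Rwp(p,\psi)$, taken for $Y$ together with suitable loop-invariant witnesses for the remaining predicate variables introduced inside $\vc$, satisfies $\pi$.

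For the inclusion $[\nu] \subseteq \wp(p,\psi)$ I would argue as follows. Since $\exists Y\, \pi$ is valid, the equivalence supplied by Theorem~\ref{thm.abstractLinearFP}/1 in its standard-semantics form, $\models \exists Y\, \pi \liff \pi\unsubst{Y}{\nu}$, collapses to the statement that $\bar\nu$ is an actual solution, i.e.\ $\calM \models \vc(\hoare{\nu}{p}{\psi})$. Applying the soundness of the verification condition (Theorem~\ref{thm.vcSoundness}, read for FO[LFP] constraints---its proof is by structural induction on $p$ and is purely semantic, so it is insensitive to the complexity of the pre- and postcondition) yields $\calM \models \hoare{\nu}{p}{\psi}$, and then Lemma~\ref{lem.weakPreEquiv}/1 gives $[\nu] \subseteq \wp(p,\psi)$.

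For the reverse inclusion $\wp(p,\psi) \subseteq [\nu]$ I would invoke the maximality half of the theorem. Having observed that $\Rwp(p,\psi)$ is (the $Y$-component of) a solution of $\exists Y\, \pi$, Theorem~\ref{thm.abstractLinearFP}/2 restricted to the distinguished component $Y$ tells us that every solution value $R$ satisfies $\calM \models R \impl \nu$. Specialising to $R = \Rwp(p,\psi)$ gives $\calM \models \Rwp(p,\psi) \impl \nu$, that is $\wp(p,\psi) = [\Rwp(p,\psi)] \subseteq [\nu]$. Combining the two inclusions establishes $[\nu] = \wp(p,\psi)$.

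The points requiring care are, I expect, the following. First, although the theorem writes only ``$\exists Y\, \pi$'', the formula equation genuinely carries several existentially quantified predicate variables---$Y$ together with the loop-invariant variables created by $\vc$---so one must track the full solution tuple and correctly project the conclusions of the fixed-point theorem onto the $Y$-component. Second, $\nu$ is an FO[LFP] formula rather than a first-order one, so the soundness theorem and the equivalences of Lemma~\ref{lem.weakPreEquiv} must be used in their relational / FO[LFP] form (as already sanctioned by the remark following Corollary~\ref{cor.HornFP}). The genuine conceptual content, and the step I expect to be the main obstacle, is justifying that the greatest admissible $Y$-value really is the weakest precondition: this rests on the fact that, by soundness together with completeness (Theorems~\ref{thm.vcSoundness} and~\ref{thm.vcCompleteness}), the values of $Y$ that extend to a solution are \emph{exactly} the valid preconditions $R$ with $[R] \subseteq \wp(p,\psi)$, whose maximum is $\Rwp(p,\psi)$.
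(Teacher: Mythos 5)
Your proof is correct and follows essentially the same route as the paper's: both inclusions come from parts (1) and (2) of the (dual-/linear-)Horn fixed-point theorem combined with Theorem~\ref{thm.vcSoundness}, Lemma~\ref{lem.weakPreEquiv}, and Lemma~\ref{lem.wpHoareTripel}, with the solution tuple projected onto the $Y$-component exactly as the paper does. The only cosmetic differences are that the paper invokes Theorem~\ref{thm.abstractDualFP} directly rather than Theorem~\ref{thm.abstractLinearFP} (whose $\nu$-clauses are the same statement), witnesses the validity of $\exists Y\,\pi$ by the trivial solution $Y=\bot$ rather than by $\Rwp(p,\psi)$, and never needs completeness (Theorem~\ref{thm.vcCompleteness}), which your closing paragraph mentions but your actual argument also does not use.
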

\begin{proof}
	As $\calM \models \vc(\hoare{\bot}{p}{\psi})$ we know that $\calM \models \exists Y~ \vc(\hoare{Y}{p}{\psi})$.  Hence with Theorem \ref{thm.abstractDualFP}/1 we obtain $\calM \models \vc(\hoare{\nu}{p}{\psi})$, which implies $\calM \models \hoare{\nu}{p}{\psi}$ due to Theorem \ref{thm.vcSoundness}. Now Lemma \ref{lem.weakPreEquiv} states $[\nu] \subseteq \wp(p,\psi)$.
	
	For the other inclusion let $\Rwp(p,\psi)$ be the relation defined by the weakest precondition, it holds $[\Rwp(p,\psi)] = \wp(p,\psi)$ . From Lemma \ref{lem.wpHoareTripel} we obtain $\calM \models \vc(\hoare{\Rwp(p,\psi)}{p}{\psi})$ and we can apply Theorem \ref{thm.abstractDualFP}/2. This yields $\calM \models \forall \mybar{x}~ (\Rwp(p,\psi)(\mybar{x}) \rightarrow \nu(\mybar{x}))$. In particular we got $\calM, \sigma \models \Rwp(p,\psi) \rightarrow \nu$ for all $\sigma \in \Sigma$. Hence $[\nu] \supseteq [\Rwp(p,\psi)] = \wp(p,\psi)$.
\end{proof}

\begin{theorem}\label{thm.spLfpMu}
	Let $\calL$ be a language and $\calM$ be a structure. Let $p$ be a program and $\varphi$ be a first-order formula. Consider the linear-Horn formula equation $\exists X ~\pi \equiv \exists X~ \vc(\hoare{\varphi}{p}{X})$. Let $\mu := [\lfp_{X} ~\Phi_{\pi}]$, then 
	\begin{align*}
		[\mu] = \sp(p,\varphi).
	\end{align*}
\end{theorem}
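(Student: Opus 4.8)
The proof is the symmetric counterpart of Theorem~\ref{thm.wpLfpNu}. Since the postcondition variable $X$ occurs only positively in $\pi$, it is now the strongest (i.e.\ smallest) solution that is relevant, so I would use the \emph{Horn} direction of the fixed-point theorem (Theorem~\ref{thm.abstractFP}, in the form of Corollary~\ref{cor.HornFP} for standard semantics) together with the least fixed point $\mu = [\lfp_X\, \Phi_\pi]$, rather than the dual-Horn direction that was used for $\wp$. I would establish the two inclusions $\sp(p,\varphi) \subseteq [\mu]$ and $[\mu] \subseteq \sp(p,\varphi)$ separately.

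For $\sp(p,\varphi) \subseteq [\mu]$ I would first check that $\exists X\, \vc(\hoare{\varphi}{p}{X})$ holds, by exhibiting the trivial solution $X = \top$: since $\calM, C(p)(\sigma) \models \top$ always, we have $\calM \models \hoare{\varphi}{p}{\top}$, and completeness (Theorem~\ref{thm.vcCompleteness}) gives $\calM \models \vc(\hoare{\varphi}{p}{\top})$, whence $\calM \models \exists X\, \vc(\hoare{\varphi}{p}{X})$. Corollary~\ref{cor.HornFP}/1 then yields $\calM \models \pi\unsubst{\mybar{X}}{\mybar{\mu}}$, i.e.\ $\calM \models \vc(\hoare{\varphi}{p}{\mu})$, the least-fixed-point components for the loop-invariant variables serving as the existential witnesses. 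By soundness (Theorem~\ref{thm.vcSoundness}) this gives $\calM \models \hoare{\varphi}{p}{\mu}$, and Lemma~\ref{lem.weakPreEquiv}/2 concludes $[\mu] \supseteq \sp(p,\varphi)$.

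For $[\mu] \subseteq \sp(p,\varphi)$ let $R_{\sp}$ be the relation with $[R_{\sp}] = \sp(p,\varphi)$, the analogue of $\Rwp$. The key step is to obtain $\calM \models \vc(\hoare{\varphi}{p}{R_{\sp}})$. In the $\wp$ case this was supplied directly by Lemma~\ref{lem.wpHoareTripel}, but there is no stated dual of that lemma; I would instead route through the semantics: since $[R_{\sp}] = \sp(p,\varphi) \supseteq \sp(p,\varphi)$, Lemma~\ref{lem.weakPreEquiv}/2 gives $\calM \models \hoare{\varphi}{p}{R_{\sp}}$, and completeness (Theorem~\ref{thm.vcCompleteness}, applied to relations) then yields $\calM \models \vc(\hoare{\varphi}{p}{R_{\sp}})$. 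This existential statement over the loop-invariant variables furnishes witnessing relations, so that $\calM \models \pi\unsubst{\mybar{X}}{\mybar{R}}$ with the $X$-component of $\mybar{R}$ equal to $R_{\sp}$. Applying Corollary~\ref{cor.HornFP}/2 gives $\calM \models \mu \rightarrow R_{\sp}$, that is $[\mu] \subseteq [R_{\sp}] = \sp(p,\varphi)$. Combining the two inclusions gives $[\mu] = \sp(p,\varphi)$.

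The main obstacle is precisely this second inclusion: the proof of Theorem~\ref{thm.wpLfpNu} rests on Lemma~\ref{lem.wpHoareTripel}, which has no counterpart for $\sp$ in the text. The cleanest fix is to replace that appeal by the round trip $\calM \models \hoare{\varphi}{p}{R_{\sp}}$ and back via completeness, which only requires the routine observation, already flagged after the definitions, that soundness, completeness, and Lemma~\ref{lem.weakPreEquiv} all apply to relations and to FO[LFP]-definable predicates such as $\mu$, not merely to first-order formulas.
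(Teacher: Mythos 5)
Your proof is correct, but it does not follow the paper's route exactly: the paper's entire proof is the single sentence ``Dual to the proof of Theorem~\ref{thm.wpLfpNu}'', i.e., it implicitly dualises every step, including Lemma~\ref{lem.wpHoareTripel} --- so the intended argument establishes the $\sp$-analogue $\calM \models \vc(\hoare{\varphi}{p}{R})$, for $R$ the relation with $[R] = \sp(p,\varphi)$, by a structural induction on $p$ mirroring the one given for $\Rwp$, and then applies Corollary~\ref{cor.HornFP}/2. You spot precisely this asymmetry (no $\sp$ counterpart of Lemma~\ref{lem.wpHoareTripel} is stated in the text) and repair it by a different device: you derive $\calM \models \vc(\hoare{\varphi}{p}{R})$ from the trivial semantic fact $\calM \models \hoare{\varphi}{p}{R}$ via completeness (Theorem~\ref{thm.vcCompleteness}) applied to a relation postcondition. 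That is sound: completeness and its ingredients (Lemmas~\ref{lem.weakPreEquiv}, \ref{lem.wpHoareTripel} and \ref{lem.ImplPositiv}) are proved by arguments indifferent to whether the pre-/postconditions are first-order formulas, FO[LFP] formulas or relations --- an extension the paper flags explicitly after the definitions and already exploits in the proof of Theorem~\ref{thm.wpLfpNu} itself. The trade-off between the two routes: the paper's duality keeps $\wp$ and $\sp$ perfectly symmetric and self-contained, but spelling it out would require writing a second structural induction; your route is shorter given what has already been proved, at the cost of making the $\sp$ result rest on the $\wp$ machinery (completeness is proved via $\Rwp$), which is harmless since Theorem~\ref{thm.vcCompleteness} precedes this theorem. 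Your first inclusion (solvability witnessed by $\top$, then Corollary~\ref{cor.HornFP}/1, soundness, and Lemma~\ref{lem.weakPreEquiv}/2) is exactly the dual of the paper's corresponding step, so that half coincides with the intended proof.
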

\begin{proof}
	Dual to the proof of Theorem \ref{thm.wpLfpNu}
\end{proof}

Note that the formulas $\mu$ and $\nu$ thus obtained do not rely on an expressivity hypothesis anymore.
The encoding of sequences is replaced by the least fixed-point operator.

\subsection{Hoare logic}\label{subsect.HoareLogic}

It is well known that Hoare logic, as introduced in \cite{Hoare1969}, is a sound and relative complete proof system for Hoare triples. Hoare logic is usually formulated in the language of arithmetic $\calL_A$ and for the structure $\Int$. We generalise this definition for an arbitrary structure and show its equivalence to the abstract semantics of the verification condition of a Hoare triple. In the classical setting of $\Int$ we obtain the usual soundness and completeness result, where the use of the expressivity hypothesis is pointed out explicitly.

Since many practical algorithms for verification generate invariants of a particular form, we introduce
 the Hoare calculus parameterised by a set of first-order formulas $\calC$ which denotes the formulas
 allowed as invariants.
In Theorem~\ref{thm.hoareVcEquiv} below we show that provability in the Hoare calculus restricted to $\calC$
 is equivalent to the truth of the verification condition in the definable model abstraction of $\calC$.

\begin{comment}

Let $\calL$ be a language and $\calM$ be a model. In general a set of first-order formulas $\calC$ does not define a model abstraction of $\calM$, as the definable sets by $\calC$ do not have to form a lattice. Yet if we are only interested in the interpretation of second-order quantifiers and disregard least fixed-point atoms, we can state a similar definition. This leads to the notion of second-order model abstractions, which is reminiscent of Henkin semantics \cite{Henkin50Completeness}. 

\begin{definition}
	Let $\calL$ be a language and $\calM$ be a structure. A \emph{second-order model abstraction} is a pair $(\calM,\calC)$, where $\calC$ is a set of first-order formulas in $\calL$. The semantics of second-order model abstraction is defined similarly as for model abstractions. First-order formulas are interpreted in $\calM$ as usual. For formulas of the form $\exists X\, \psi$, where $X$ is a predicate variable, we define
	\[
	(\calM,{\calC}),\theta \modelsa \exists X \psi \quad \Leftrightarrow \quad \exists \chi \in \calC: (\calM,\calC),\theta \modelsa \psi\unsubst{X}{\chi},
	\]
	and analogously for formulas of the form $\forall X \psi$.
\end{definition}

\end{comment}

\begin{definition}[Hoare rules]
	Let $\calL$ be a language, $\calM$ be a model and $\calC$ be a set of first-order formulas in $\calL$. Let $\hoare{\varphi}{p}{\psi}$ be a Hoare triple. We define the following rules, called Hoare rules. 
	We write $\calM \vdash_{\calC}\hoare{\varphi}{p}{\psi}$ if $\hoare{\varphi}{p}{\psi}$ is provable by the Hoare rules in $\calM$ and $\calC$. If $\calC$ is the set of all first-order formulas we often omit the subscript and write $\calM \vdash\hoare{\varphi}{p}{\psi}$. Here $B$ is a first-order formula in $\calL$, $p,p_0$ and $p_1$ are programs, $t$ is an $\calL$-term, $x_j$ is a variable and, most importantly, $I$ is in $\calC$.
	
	\begin{center}
		\begin{prooftree}
			\infer0[\text{\emph{(skip)}}]{\hoare{\varphi}{\hskipit}{\varphi}}
		\end{prooftree}
	\end{center}
	\smallskip
	\begin{center}
		\begin{prooftree}
			\infer0[\text{\emph{(assign)}}]{\hoare{\varphi[x_j \backslash t]}{x_j:=t}{\varphi}}
		\end{prooftree}
	\end{center}
	\smallskip
	\begin{center}
		\begin{prooftree}
			\hypo{\hoare{\varphi}{p_0}{I}}	
			\hypo{\hoare{I}{p_1}{\psi}}
			\infer2[\text{\emph{(sequence)}}]{\hoare{\varphi}{p_0;p_1}{\psi}}
		\end{prooftree}
	\end{center}
	\smallskip
	\begin{center}
		\begin{prooftree}
			\hypo{\hoare{\varphi \wedge B}{p_0}{\psi}}	
			\hypo{\hoare{\varphi \wedge \neg B}{p_1}{\psi}}
			\infer2[\text{\emph{(conditional)}}]{\hoare{\varphi}{\hif B \hthen p_0 \helse p_1}{\psi}}
		\end{prooftree}
	\end{center}
	\smallskip
	\begin{center}
		\begin{prooftree}
			\hypo{\calM \models \varphi \rightarrow I}
			\hypo{\hoare{I \wedge B}{p}{I}}	
			\hypo{\calM \models I \wedge \neg B \rightarrow \psi}
			\infer3[\text{\emph{(while)}}]{\hoare{\varphi}{\hwhile B \hdo p}{\psi}}
		\end{prooftree}
	\end{center}
	\smallskip
	\begin{center}
		\begin{prooftree}
			\hypo{\calM \models \varphi \rightarrow \varphi'}
			\hypo{\hoare{\varphi'}{p}{\psi'}}	
			\hypo{\calM \models \psi' \rightarrow \psi}
			\infer3[\text{\emph{(consequence)}}]{\hoare{\varphi}{p}{\psi}}
		\end{prooftree}
	\end{center}
	
\end{definition}

\begin{remark}
In most of the literature the (while)-rule is replaced by
\begin{center}
	\begin{prooftree}
		\hypo{\hoare{\varphi \wedge B}{p}{\varphi}}	
		\infer1[\text{\emph{(while')}}]{\hoare{\varphi}{\hwhile B \hdo p}{\varphi \wedge \neg B}}.
	\end{prooftree}
\end{center}
If $\calC$ is the set of all first-order formulas this leads to an equivalent proof calculus, as (while') is a special case of (while)  for $I = \varphi$ and $\psi = \varphi \wedge \neg B$. In the other direction (while) can be obtained by combining the (while') and (consequence) rule. Observe that this is not true in general as (while') is not a special case of (while) if $\varphi \notin \calC$.
\end{remark}

In the next theorem we relate Hoare logic with the validity of the verification condition of a Hoare triple. The relation will be fine-grained, depending on a set of first-order formulas $\calC$. We first define the semantic relation $\calM \models_{\calC} \psi$, where the second-order quantifiers in $\psi$ are interpreted over the formulas in $\calC$.

Note that, if $(\calM,G_{\calC})$ is a definable model abstraction, then $(\calM,G_{\calC}) \modelsa \psi$ iff $\calM \models_{\calC} \psi$ for all SO-formulas $\psi$. The difference is that for the relation $\models_{\calC}$ we do not demand that the sets defined by formulas in $\calC$ form a complete lattice, and thus fixed-point operators can not be interpreted.
\begin{definition}
	Let $\calC$ be a set of first-order formulas. Given a model $\calM$ and environment $\theta$, we define the relation $\calM, \theta \models_{\calC} \phi$ inductively on SO-formulas $\phi$. If $\phi$ is first-order, then $\models_{\calC}$ is identical to $\models$. For formulas of the form $\exists X \psi$, we define 
	\[
		\calM, \theta \models_{\calC} \exists \psi \quad \Leftrightarrow \quad \exists \chi \in \calC:~ \calM,\theta[X := \chi] \models_{\calC} \psi,
	\]
	and analogously for formulas of the form $\forall X \psi$. As usual, we define $\calM \models_{\calC} \psi$ if $\calM,\theta \models_{\calC} \psi$ for all environments $\theta$.
\end{definition}

\begin{theorem}\label{thm.hoareVcEquiv}
	Let $\calL$ be a language and $\calM$ be a model. Let $\hoare{\varphi}{p}{\psi}$ be a Hoare triple and let $\calC$ be a set of first-order formulas. % in $\calL$ such that\ $(\calM,G_\calC)$ is a definable model abstraction. 
	Then
	\[
		\calM \models_{\calC} \vc(\hoare{\varphi}{p}{\psi})\quad \Leftrightarrow \quad \calM \vdash_{\calC} \hoare{\varphi}{p}{\psi}.
	\]
\end{theorem}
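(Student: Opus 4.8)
The plan is to prove the two directions by separate inductions, exploiting that the inductive definition of $\vct$ mirrors the Hoare rules one-to-one, so that the second-order witnesses making $\vc$ true under $\models_{\calC}$ correspond exactly to the intermediate assertions and loop invariants (drawn from $\calC$) used in a Hoare derivation. Throughout I would use the following decomposition property of $\models_{\calC}$: whenever $\vct(\hoare{\varphi}{p}{\psi})$ splits as a conjunction according to its recursive definition, the existentially quantified predicate variables of the two conjuncts are disjoint by the freshness of the introduced $I$'s, so after fixing a witness $\chi \in \calC$ for any shared variable the existential quantifiers distribute over the conjunction. Concretely, $\calM \models_{\calC} \vc(\hoare{\varphi}{p_0;p_1}{\psi})$ iff there is $\chi \in \calC$ with $\calM \models_{\calC} \vc(\hoare{\varphi}{p_0}{\chi})$ and $\calM \models_{\calC} \vc(\hoare{\chi}{p_1}{\psi})$, and analogously for the conditional and while cases, using $\forall \mybar{x}(A \wedge B) \equiv \forall \mybar{x}\, A \wedge \forall \mybar{x}\, B$ for the individual quantifiers.

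For the direction $\Rightarrow$ I would induct on the structure of $p$. The cases $\hskipit$ and $x_j := t$ reduce, via the decomposition, to $\calM \models \varphi \impl \psi$ and $\calM \models \varphi \impl \psi\unsubst{x_j}{t}$ respectively; in each case one derives the triple from (skip)/(assign) followed by (consequence). For $p_0;p_1$ the decomposition yields an intermediate $\chi \in \calC$ and sub-statements to which the induction hypothesis applies, and (sequence) with intermediate assertion $I := \chi$ finishes, noting that $\chi \in \calC$ is exactly what the rule requires. The conditional case is identical but splits into the two branches $\varphi \wedge B$ and $\varphi \wedge \neg B$. For $\hwhile B \hdo p_0$ the decomposition produces an invariant $\chi \in \calC$ together with $\calM \models \varphi \impl \chi$, $\calM \models \chi \wedge \neg B \impl \psi$, and $\calM \models_{\calC} \vc(\hoare{\chi \wedge B}{p_0}{\chi})$; the induction hypothesis gives $\calM \vdash_{\calC} \hoare{\chi \wedge B}{p_0}{\chi}$, and (while) with $I := \chi$ closes the case.

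For the direction $\Leftarrow$ I would instead induct on the Hoare derivation, since (consequence) leaves the program unchanged and an induction on $p$ would not have access to it. The axioms (skip) and (assign) yield verification conditions of the form $\forall \mybar{x}\,(\alpha \impl \alpha)$, which are valid and hence hold under $\models_{\calC}$. For (sequence), (conditional) and (while) I combine the witnesses supplied by the induction hypotheses for the premises with the explicit invariant $I \in \calC$ named by the rule, reading the decomposition property backwards; the two semantic side conditions of (while) become precisely the two implication conjuncts of $\vct(\hoare{\varphi}{\hwhile B \hdo p_0}{\psi})$.

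The main obstacle is the (consequence) rule in this second direction: here the premise is $\hoare{\varphi'}{p}{\psi'}$ with $\calM \models \varphi \impl \varphi'$ and $\calM \models \psi' \impl \psi$, and I must transfer $\calM \models_{\calC} \vc(\hoare{\varphi'}{p}{\psi'})$ to $\calM \models_{\calC} \vc(\hoare{\varphi}{p}{\psi})$. The key is that $\varphi$ occurs only negatively and $\psi$ only positively in the verification condition, by the polarity lemma preceding the theorem. Fixing the witnesses from $\calC$ that make the matrix of $\vc(\hoare{\varphi'}{p}{\psi'})$ true turns the statement into a first-order one about those fixed relations, so Lemma~\ref{lem.ImplPositiv} applies: strengthening the negatively occurring $\varphi'$ to $\varphi$ and weakening the positively occurring $\psi'$ to $\psi$ preserves truth. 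Hence the same witnesses work for $\vc(\hoare{\varphi}{p}{\psi})$, which completes the induction.
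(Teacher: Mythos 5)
Your proposal is correct and follows essentially the same route as the paper: induction on the structure of $p$ for the forward direction (extracting witnesses in $\calC$ via the decomposition of $\models_{\calC}$ and closing each case with the corresponding Hoare rule plus (consequence)), and induction on the Hoare derivation for the converse, with the (consequence) case handled exactly as in the paper by the polarity of $\varphi$ and $\psi$ in the verification condition together with Lemma~\ref{lem.ImplPositiv}. Your treatment of that last step --- fixing the second-order witnesses first so that the first-order monotonicity lemma applies directly --- is a slightly more explicit rendering of the paper's remark that the lemma ``straightforwardly generalizes'' to $\models_{\calC}$, but it is the same argument.
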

\begin{proof}
	$``\Rightarrow":$ The proof is by induction on the program $p$. We only show the cases $p \equiv \hskipit$ and $p \equiv \hwhile B \hdo p_0$, the others are similar.
	\begin{itemize}
		\item $p \equiv \hskipit$: The assumption is $\calM \models \varphi \impl \psi$ and the (skip)-rule yields $\calM \vdash \hoare{\varphi}{\hskipit}{\varphi}$. Hence we obtain $\calM \vdash \hoare{\varphi}{\hskipit}{\psi}$ with the (consequence)-rule. Note that in the (consequence)-rule $\varphi'$ and $\psi'$ may be any first-order formulas and are not restricted to $\calC$, which is of importance here.
		\item $p \equiv \hwhile B \hdo p_0$: The assumption is $\calM \models_{\calC} \exists I ~\vc(\hoare{I \wedge B}{p_0}{I}) \wedge (\varphi \rightarrow I) \wedge (I \wedge \neg B \rightarrow \psi)$, which means that there exists $\chi \in \calC$ such that $\calM \models_{\calC} \vc(\hoare{\chi \wedge B}{p_0}{\chi})$ and $\calM \models (\varphi \rightarrow \chi) \wedge (\chi \wedge \neg B \rightarrow \psi)$. By the induction hypothesis we have $\calM \vdash_{\calC} \hoare{\chi \wedge B}{p_0}{\chi}$, now we can use the (while)-rule to obtain $\calM \vdash_{\calC} \hoare{\varphi}{p}{\psi}$.
	\end{itemize}

	$``\Leftarrow":$ This direction is shown by proving inductively, that every Hoare rule yields Hoare triples such that $\calM \models_{\calC} \vc(\hoare{\varphi}{p}{\psi})$. In other words, we show that the Hoare rules are sound with respect to the semantics $\calM \models_{\calC} \vc(\hoare{\varphi}{p}{\psi})$. We concentrate on the exemplary (sequence) and (consequence)-rule.
	\begin{itemize}
		\item (sequence): By assumption there exists $I \in \calC$ such that $\calM \vdash_{\calC} \hoare{\varphi}{p_0}{I}$ and $\calM \vdash_{\calC} \hoare{I}{p_1}{\psi}$. Using the induction hypothesis we obtain that there exists $I \in \calC$ such that $\calM \models_{\calC} \vc(\hoare{\varphi}{p_0}{I}) \wedge \vc(\hoare{I}{p_1}{\psi})$, which is the definition of $\calM \models_{\calC} \vc(\hoare{\varphi}{p_0;p_1}{\psi})$.
		\item (consequence): By assumption $\calM \models \varphi \impl \varphi'$, $\calM \models \psi' \impl \psi$ and $\calM \proves_{\calC} \vc(\hoare{\varphi'}{p}{\psi'})$. Using the induction hypothesis we have $\calM \models_{\calC} \vc(\hoare{\varphi'}{p}{\psi'})$. Now Lemma \ref{lem.ImplPositiv} yields $\calM \models_{\calC} \vc(\hoare{\varphi}{p}{\psi})$. Note that we stated Lemma \ref{lem.ImplPositiv} only for classical semantics, yet it straightforwardly generalizes to the relation $\models_{\calC}$.
	\end{itemize}
\end{proof}

\begin{lemma}\label{lem.abstractVc}
	Let $\calL$ be a language and $\calM$ be a model. Let $\hoare{\varphi}{p}{\psi}$ be a Hoare triple and let $\calC$ be a set of first-order formulas in $\calL$. Then
	\[
	\calM \models_{\calC} \vc(\hoare{\varphi}{p}{\psi}) \quad \Rightarrow \quad \calM \models \vc(\hoare{\varphi}{p}{\psi}).
	\]
\end{lemma}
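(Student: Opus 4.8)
The plan is to exploit the fact that $\vc(\hoare{\varphi}{p}{\psi})$ is a formula equation, that is, a closed formula $\exists \mybar{I}\, \forall \mybar{x}\ \vct(\hoare{\varphi}{p}{\psi})$ in which the only second-order quantifiers are the outermost existential ones binding $\mybar{I}$, and in which no predicate variable other than $\mybar{I}$ occurs. Consequently, once the $\mybar{I}$ are instantiated, the remaining formula is first-order. The difference between $\models_{\calC}$ and $\models$ lies solely in the range of these existential second-order quantifiers: $\models_{\calC}$ restricts the witnesses to sets definable by formulas in $\calC$, whereas $\models$ allows arbitrary relations. Since definable sets are in particular relations, any witness that works for $\models_{\calC}$ works a fortiori for $\models$.

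Concretely, I would first unfold the assumption. By the definition of $\models_{\calC}$ there exist formulas $\chi_1,\ldots,\chi_n \in \calC$ such that $\calM \models_{\calC} \forall \mybar{x}\ \vct(\hoare{\varphi}{p}{\psi})[\mybar{I}\backslash\mybar{\chi}]$, where each $I_j$ is interpreted by the relation $R_j := \{\mybar{a} \mid \calM \models \chi_j(\mybar{a})\}$ it defines. Next I would observe that the formula $\forall \mybar{x}\ \vct(\hoare{\varphi}{p}{\psi})[\mybar{I}\backslash\mybar{\chi}]$ is first-order, since $\varphi$, $\psi$ and all loop guards $B$ occurring in $\vct$ are first-order and the only predicate variables $\mybar{I}$ have now been replaced. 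On first-order formulas $\models_{\calC}$ coincides with $\models$ by definition, so $\calM \models \forall \mybar{x}\ \vct(\hoare{\varphi}{p}{\psi})[\mybar{I}\backslash\mybar{\chi}]$.

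Finally, interpreting each $I_j$ by the relation $R_j$ yields exactly the same truth value as substituting $\chi_j$, whence the tuple $(R_1,\ldots,R_n)$ witnesses the outer existential quantifiers under standard semantics. Therefore $\calM \models \exists \mybar{I}\, \forall \mybar{x}\ \vct(\hoare{\varphi}{p}{\psi})$, i.e.\ $\calM \models \vc(\hoare{\varphi}{p}{\psi})$, as required.

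There is no serious obstacle here; the argument is essentially a bookkeeping of definitions. The one point that must be handled with care is that the implication relies crucially on $\vc(\hoare{\varphi}{p}{\psi})$ being a purely existential second-order formula (a formula equation): were universal second-order quantifiers present, restricting their range to $\calC$ would make satisfaction \emph{easier} rather than harder, and the implication would fail. I would make this dependence explicit, so as to avoid the temptation of stating the lemma for arbitrary SO-formulas, where it is false.
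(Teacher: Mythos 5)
Your proof is correct and takes essentially the same approach as the paper: the paper's entire proof is the one-line observation that $\vc(\hoare{\varphi}{p}{\psi})$ is an existential second-order formula, and your argument simply spells out the bookkeeping behind that observation (a $\calC$-definable witness is in particular a relation, so it also witnesses the existential quantifiers under standard semantics). Your closing remark about why the implication would fail in the presence of universal second-order quantifiers is a nice additional clarification, consistent with the paper's intent.
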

\begin{proof}
	This follows as $\vc(\hoare{\varphi}{p}{\psi})$ is an existential second-order formula.
\end{proof}

Notably in the traditional case of the integers and the set of all first-order formulas $\calC$ the converse of Lemma \ref{lem.abstractVc} holds as well. Here the expressivity of $\Int$ is crucial.

\begin{lemma}\label{lem.abstractInt}
	Let $\calL_A = \{0,1,+,-,\cdot,\leq\}$ be the language of arithmetic and $\calC$ be the set of all first-order formulas. Then 
	\[
		\Int \models \vc(\hoare{\varphi}{p}{\psi}) \quad \Rightarrow \quad \Int \models_{\calC} \vc(\hoare{\varphi}{p}{\psi}).
	\]
\end{lemma}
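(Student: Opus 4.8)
The plan is to reduce the statement to the first-order definability, over $\Int$, of the canonical fixed-point solutions of the verification condition. Write $\exists \mybar{I}\,\pi \equiv \vc(\hoare{\varphi}{p}{\psi})$ with predicate variables $I_1,\dots,I_n$ and set $\mu_j := [\lfp_{I_j}~\Phi_\pi]$. By Corollary~\ref{HoareVCLinearCorollary} the hypothesis $\Int \models \vc(\hoare{\varphi}{p}{\psi})$ is equivalent to $\Int \models \vct(\hoare{\varphi}{p}{\psi})[\mybar{I}\backslash\mybar{\mu}]$. Since $\models_{\calC}$ differs from $\models$ only in constraining the second-order quantifiers to range over $\calC$-definable relations, and agrees with $\models$ on first-order formulas, it suffices to exhibit for each $j$ an ordinary first-order formula $\chi_j \in \calC$ with $\Int \models \mu_j(\mybar{x}) \liff \chi_j(\mybar{x})$. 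Indeed, the $\chi_j$ then define the same relations in $\Int$ as the $\mu_j$, so $\Int \models \forall\mybar{x}\,\vct(\hoare{\varphi}{p}{\psi})[\mybar{I}\backslash\mybar{\chi}]$, and this tuple $\mybar{\chi}\in\calC$ witnesses $\Int \models_{\calC} \vc(\hoare{\varphi}{p}{\psi})$.

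All content is thereby concentrated in the first-order definability of the least fixed points $\mu_j$ in $\Int$, and this is precisely where the expressivity of $\Int$ is used. The fixed-point variables occur in $\Phi_\pi$ only positively and only in finitely many atoms, none of them within the scope of a universal quantifier, so $F_{\Phi_\pi}$ is continuous; hence its closure ordinal is $\omega$ and, as in Theorem~\ref{thm.apprLfp}, $\mu_j$ is equivalent to the countable disjunction $\bigvee_{l\in\omega}\sigma_{j,\Phi_\pi}^{l}$ of first-order formulas. Thus $\Int \models \mu_j(\mybar{a})$ holds iff the tuple $\mybar{a}$ enters the approximation of the fixed point at some finite stage $l$. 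Equivalently, by Theorem~\ref{thm.spLfpMu} the $\mu_j$ compute strongest postconditions at the intermediate program points, and the claim is the classical fact that strongest postconditions over $\Int$ are first-order expressible.

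The crux is the coding step, which collapses the unbounded disjunction over stages into a single first-order formula. Entry of $\mybar{a}$ into $\lfp(F_{\Phi_\pi})$ is certified by a finite derivation: a sequence of tuples starting from an instance of a base clause and extending, at each step, by an induction clause, and terminating in $\mybar{a}$. Using G\"odel's $\beta$-function one codes such a finite sequence of integer tuples by a single integer $c$, and the predicate ``$c$ codes a valid derivation of $\mybar{a}$'' is first-order: decoding via $\beta$, checking each constraint $\gamma$, and verifying the equalities linking successive tuples are all first-order conditions. Prefixing an existential quantifier over $c$ yields the required $\chi_j$. The main obstacle is purely technical: setting up a uniform coding that handles the simultaneous fixed point across all the loop variables $I_1,\dots,I_n$ at once, and verifying that it faithfully captures $\bigvee_{l}\sigma_{j,\Phi_\pi}^{l}$; once this is done the rest is the bookkeeping of Corollary~\ref{HoareVCLinearCorollary} and the definition of $\models_{\calC}$.
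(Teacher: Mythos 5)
Your proposal is correct and takes essentially the same route as the paper's own proof: both replace the second-order quantifiers by the LFP-atoms $\mu_j$ via the fixed-point corollary, write each $\mu_j$ as the countable disjunction $\bigvee_{l\in\omega}\sigma^l$ of first-order stage formulas (Theorem~\ref{thm.apprLfp}), and then collapse this disjunction into a single first-order formula over $\Int$ by a G\"odel $\beta$-function coding of the finite witness, which yields $\Int \models_{\calC} \vc(\hoare{\varphi}{p}{\psi})$. If anything, your justification of the $\omega$-approximation step --- arguing continuity of $F_{\Phi_\pi}$ directly from the positive, purely existential occurrences of the predicate variables --- is slightly more careful than the paper's blanket claim that the defining formulas are existential, since the pre- and postconditions occurring as constraints need not be existential.
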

\begin{proof}
	The fixed-point theorem Corollary \ref{cor.HornFP} states, that $\Int \models \vc(\hoare{\varphi}{p}{\psi})$ is equivalent to $\Int \models \vct(\hoare{\varphi}{p}{\psi})\unsubst{\mybar{I}}{\mybar{\mu}}$, where $\mybar{\mu}$ is a tuple of LFP-atoms. Using Gödel's $\beta$-function we can encode those LFP-atoms and find first-order formulas $\lambda_1,...,\lambda_n$ such that $\Int \models \mu_j \leftrightarrow \lambda_j$ for $j = 1,...,n$. We explain how this can be done for an LFP-atom $\mu \equiv [\lfp_R ~ \phi]$, where $\phi$ is an existential first-order formula containing $R$. For $l \in \omega$ let $\sigma_\phi^l$ be defined as in Definition \ref{def.apprSigma}. Theorem \ref{thm.apprLfp} states that 
	\[[\lfp_R ~ \phi] \equiv \bigvee_{l\in \omega} \sigma_\phi^l. \]
	Now define $\lambda(\mybar{x})$ using Gödel's $\beta$-function as follows: ``There exists a sequence of formulas $\sigma_\phi^0,...,\sigma_\phi^n$ defined as above such that $\sigma_\phi^n(\mybar{x})$ holds''. Then $\Int \models \mu \liff \lambda$. 
	As $\mybar{\mu}$ is defined from existential first-order formulas this applies here. 	
	Hence $\Int \models \vct(\hoare{\varphi}{p}{\psi}) \unsubst{\mybar{I}}{\mybar{\lambda}}$, which implies $\Int \models_{\calC} \vc(\hoare{\varphi}{p}{\psi})$.
\end{proof}

\begin{remark}\label{rem.hoareVcEquiv}
	 Theorem \ref{thm.hoareVcEquiv} together with Lemma \ref{lem.abstractVc} and \ref{lem.abstractInt} yields
	\[
	\Int \models \vc(\hoare{\varphi}{p}{\psi})\quad \Leftrightarrow \quad \Int \vdash \hoare{\varphi}{p}{\psi}.
	\]
	This has also been shown in \cite{Kloibhofer20Fixed}, yet here we pointed out explicitly where the expressivity of $\Int$ is used and which parts of the proof can be generalised to different structures. Combining this equality with Theorem \ref{thm.vcSoundness} and \ref{thm.vcCompleteness} we obtain the classical statement
	\[
	\Int \models \hoare{\varphi}{p}{\psi}\quad \Leftrightarrow \quad \Int \vdash \hoare{\varphi}{p}{\psi}.
	\]
\end{remark}

\section{Fixed-point approximation}\label{sec.approximation}

The problem of finding first-order formulas which approximate a second-order formula is an
 intensively studied topic in the history of logic.
For second-order formulas of the form $\exists \mybar{X} \forall \mybar{y}~ \psi$, where $\psi$ is
 quantifier-free, it has already been investigated by Ackermann in 1935~\cite{Ackermann35Untersuchungen}.
Under the assumptions that the language $\calL$ is relational, i.e., $\calL$ contains no
 function symbols, and there is only one unary predicate variable $X$,
 \cite{Ackermann35Untersuchungen} shows in detail how to compute
 an infinite conjunction of first-order formulas that is equivalent to the second-order formula
 $\exists X \forall \mybar{y}~ \psi$.
This is achieved with a method similar to modern resolution.
This result has also been extended to any number of predicate variables of arbitrary
 arity~\cite{Ackermann35Untersuchungen,Wernhard17Approximating}, yet the assumption of
 a relational language remains.
In this section we show how to obtain a similar result for any language $\calL$, but with another assumption:
 we only consider Horn formula equations.
Moreover, this is attained with a completely different method as a straightforward corollary of our
 Horn fixed-point theorem.

Let $\exists \mybar{X} \psi$ be a Horn formula equation. In the last section we found least fixed-point logic formulas $\mu_1,...,\mu_n$ such that $\models \exists \mybar{X}  ~\psi \leftrightarrow \psi[X_1\backslash \mu_1,...,X_n\backslash\mu_n]$. We can now use the first-order formulas defined in Section \ref{sec.prelim.approximation}, which approximate $\mu_1,...,\mu_n$ and therefore lead to an approximation of the second-order formula $\exists \mybar{X} \psi$. 

\begin{theorem}\label{thm.approximation}
	Let $\exists \mybar{X} \psi$ be a Horn formula equation. Then there exists a, possibly infinite, set of first-order formulas $\Psi$ such that 
	\begin{align*}
		\exists \mybar{X} \psi \equiv \bigwedge_{\varphi \in \Psi} \varphi.
	\end{align*}	
\end{theorem}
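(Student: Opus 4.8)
The plan is to combine the classical Horn fixed-point theorem with the $\omega$-approximation of Section~\ref{sec.prelim.approximation}. First I would invoke Corollary~\ref{cor.HornFP} to replace the second-order quantifier by fixed-point atoms: $\exists \mybar{X}\, \psi \equiv \psi\unsubst{\mybar{X}}{\mybar{\mu}}$ with $\mu_j := [\lfp_{X_j}\, \Phi_\psi]$. Writing $\psi$ in the split form of Lemma~\ref{lem.Hornformula.split} and using that $\mybar{\mu}$ is a fixed point of $F_{\Phi_\psi}$---so that $\varphi_j(\mybar{\mu},\mybar{x_j}) \equiv \mu_j(\mybar{x_j})$ and every base and induction clause is trivially satisfied, exactly as in the proof of Theorem~\ref{thm.abstractFP}---the substituted formula collapses to the conjunction over the end clauses:
\[
\exists \mybar{X}\, \psi \;\equiv\; \forall \mybar{y} \Land_{C \in E} \neg\big(\gamma_C \land \mu_{j_1}(\mybar{t_1}) \land \cdots \land \mu_{j_m}(\mybar{t_m})\big),
\]
where for each end clause $C$ the constraint $\gamma_C$, the indices $j_k$ and the term tuples $\mybar{t_k}$ are read off from $C$.

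Next I would eliminate the fixed-point atoms by approximation. Assuming $\Phi_\psi$ is existential, Theorem~\ref{thm.apprLfp} yields $\mu_j \equiv \bigvee_{l \in \omega} \sigma_{j,\Phi_\psi}^l$, a countable disjunction of first-order formulas. Substituting this into a single end clause, expanding the negation of the finite conjunction by De Morgan, and pushing each remaining negation through an infinite disjunction by Lemma~\ref{lem.infiniteFormulas}(1), turns the clause into a finite disjunction of $\neg\gamma_C$ and of $m$ infinite conjunctions $\Land_{l \in \omega} \neg\sigma_{j_k,\Phi_\psi}^{l}(\mybar{t_k})$. Applying Lemma~\ref{lem.infiniteFormulas}(2) repeatedly rewrites this finite disjunction of infinite conjunctions as one infinite conjunction indexed by $(l_1,\ldots,l_m) \in \omega^{m}$, whose conjuncts are the first-order disjunctions $\neg\gamma_C \lor \neg\sigma_{j_1,\Phi_\psi}^{l_1}(\mybar{t_1}) \lor \cdots \lor \neg\sigma_{j_m,\Phi_\psi}^{l_m}(\mybar{t_m})$. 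Lemma~\ref{lem.infiniteFormulas}(4) then commutes $\forall \mybar{y}$ past the conjunction, leaving each conjunct first-order, and Lemma~\ref{lem.infiniteFormulas}(3) merges the finitely many infinite conjunctions (one per end clause) into a single one. The set $\Psi$ of all the resulting first-order conjuncts satisfies $\exists \mybar{X}\, \psi \equiv \Land_{\varphi \in \Psi} \varphi$.

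The step I expect to be the main obstacle is the exactness of the $\omega$-approximation. The equivalence $\mu_j \equiv \bigvee_{l}\sigma_{j,\Phi_\psi}^l$ furnished by Theorem~\ref{thm.apprLfp} rests on Lemma~\ref{lem.apprClosureOrdinalOmega}, which needs $\Phi_\psi$---equivalently, all the constraints $\gamma_C$---to be existential, so that $F_{\Phi_\psi}$ is continuous and its closure ordinal is $\omega$. Without this hypothesis the least fixed point may only be reached at an ordinal beyond $\omega$, and the approximants at limit stages are themselves infinite disjunctions and hence not first-order, so this route no longer delivers a set of \emph{finitary} first-order formulas; thus the existential form of the constraints is exactly what makes the corollary go through. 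By comparison, the polarity bookkeeping that lets $\mybar{\mu}$ trivialise the base and induction clauses, and the verification that each application of Lemma~\ref{lem.infiniteFormulas} uses the correct index sets, are routine.
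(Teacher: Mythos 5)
Your proposal is correct and follows essentially the same route as the paper's own proof: Corollary~\ref{cor.HornFP} collapses $\exists \mybar{X}\,\psi$ to the end clauses instantiated by $\mybar{\mu}$, and Theorem~\ref{thm.apprLfp} together with repeated applications of Lemma~\ref{lem.infiniteFormulas} turns the result into an infinite conjunction of first-order formulas. The existentiality of $\Phi_\psi$ that you single out as the crux is handled in the paper by the bare assertion that the formulas $\varphi_1,\ldots,\varphi_n$ of Definition~\ref{def.HornformulaSplit} are existential, which, as you observe, tacitly presupposes that the constraints $\gamma$ occurring in the clauses are themselves existential.
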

\begin{proof}
	Let $\mu_j := [\lfp_{X_j} ~\Phi_{\psi}]$
	for $j \in \{1,\ldots,n\}$.	Applying Corollary \ref{cor.HornFP} we have 
	\[
		\exists \mybar{X}\,\psi \equiv \psi[X_1 \backslash \mu_1,...,X_n \backslash \mu_n]
	\]
	By construction $\mybar{\mu}$ satisfies all base and induction clauses, thus
	\begin{align*}
		\exists \mybar{X} \psi \equiv \forall \mybar{y} \bigwedge_{C \in E} C[X_1 \backslash \mu_1,...,X_n \backslash \mu_n](\mybar{y}),
	\end{align*}
	where $E$ is the set of all end clauses in $\psi$. Note that the formulas $\varphi_1,...,\varphi_n$ in the $n$-tuple $\Phi$ are existential first-order formulas, hence we can use Theorem \ref{thm.apprLfp} to obtain
	\[
	\exists \mybar{X} \psi \equiv \forall \mybar{y} \bigwedge_{C \in E} C[X_1 \backslash \sigma_{1,\Phi}^{\omega},...,X_n \backslash \sigma_{n,\Phi}^{\omega}](\mybar{y}),
	\]
	where $\sigma_{j,\Phi}^{\omega} \equiv \bigvee_{l\in \omega} \sigma_{j,\Phi}^l$ is an infinite disjunction of first-order formulas for $j \in \{1,...,n\}$. The clauses in $E$ have the form $\neg \gamma \vee \neg X_{i_1}(\mybar{t_1}) \vee \dots \vee \neg X_{i_m}(\mybar{t_m})$. We denote a clause in $E$ by the determining tuple $(\gamma,\iota,\tau)$, where $\iota := \{i_1,...,i_m\}$ and $\tau := \{\mybar{t_1},...,\mybar{t_m}\}$. Then 
	\begin{align*}
		\exists \mybar{X} \psi \equiv \forall \mybar{y} \bigwedge_{(\gamma,\iota,\tau) \in E} \left(\neg \gamma \vee \neg \sigma_{i_1,\Phi}^{\omega}(\mybar{t_1}) \vee \dots \vee \neg \sigma_{i_m,\Phi}^{\omega}(\mybar{t_m})\right).
	\end{align*}
	Per definition there is a set of first-order formulas $\Psi_j$ such that $\sigma_{j,\Phi}^{\omega} \equiv \bigvee_{\varphi \in \Psi_j} \varphi$ for every $j \in \{1,...,n\}$. Thus with repeated application of Lemma \ref{lem.infiniteFormulas} there exists a set of first-order formulas $\Psi$ such that 
	\begin{align*}
		\forall \mybar{y} \bigwedge_{(\gamma,\iota,\tau) \in E} \left(\neg \gamma \vee \neg (\bigvee_{\varphi \in \Psi_{i_1}} \varphi(\mybar{t_1})) \vee \dots \vee \neg (\bigvee_{\varphi \in \Psi_{i_m}} \varphi(\mybar{t_m}))\right) \equiv \bigwedge_{\varphi \in \Psi} \varphi.
	\end{align*}
\end{proof}

Note that the set $\Psi$ in Theorem \ref{thm.approximation} is countable and given constructively, let us say $\Psi = \{\psi_1,\psi_2,...\}$. Then the first-order formulas $\psi_1, \psi_1 \land \psi_2,...$ approximate the second-order formula $\exists \mybar{X} \psi$.

\section{Inductive theorem proving}\label{sec.IndProving}

In this section we consider an approach to inductive theorem proving based on tree
 grammars \cite{Eberhard15Inductive} and show how to obtain a central result of \cite{Eberhard15Inductive} from the Horn fixed-point theorem shown in Section \ref{sec.afpthm}. The aim of this approach is to generate a proof of a universal statement in two phases. In the first phase proofs of small instances are computed, from which a second-order unification problem is deduced. Each solution of the unification problem is an induction invariant. We will not go into depth about phase one as we are more interested in phase two, in which solutions of the second-order unification problem are computed. We will see that the second-order unification problem is in fact a Horn formula equation, where we use the Horn fixed-point theorem to get solutions.
In~\cite{Eberhard15Inductive}, as a running example, the approach is demonstrated on a proof
 that the head-recursive and the tail-recursive definitions of the factorial function compute
 the same function.
For space-reasons we do not repeat this example here.
The interested reader is referred to~\cite{Eberhard15Inductive}, and in particular
 to Section 7.1 which contains the computation of a solution of the
 Horn formula equation induced by this running example.
This approach has been developed for the natural numbers in~\cite{Eberhard15Inductive}.
However, this is not a fundamental restriction; the approach could easily be extended to
 induction on recursive data types such as lists, trees, etc.

In this chapter we work in a language $\calL$, which contains the constant symbol $0$ and the unary function symbol $s$. We define $\mybar{n} = s^n(0)$.

\begin{definition}
	Let $F_1,...,F_n,G_1,...,G_m$ be formulas. A formula of the form $F_1 \wedge \cdots \wedge F_n \rightarrow G_1 \vee \cdots \vee G_m$ is called a \emph{sequent} and is written as $\Gamma \Rightarrow \Delta$, where $\Gamma = \{F_1,...,F_n\}$ and $\Delta = \{G_1,...,G_m\}$.
\end{definition}

\begin{definition}[\cite{Eberhard15Inductive},Definition 6.1.]\label{inductiveSchematicSip}
	Let $\alpha,\beta,\nu,\gamma$ be variables only occurring where indicated. Let $\Gamma_0(\alpha,\beta),\Gamma_1(\alpha,\nu,\gamma),\Gamma_2(\alpha)$ be multisets of quantifier-free first-order formulas and let $B(\alpha)$ be a quantifier-free formula. Let $t_i(\alpha,\nu,\gamma)$ and $u_j(\alpha)$ be terms for $i \in \{1,...,n\}$ and $j \in \{1,...,m\}$, where $n,m \geq 1$. Let $X$ be a ternary predicate variable. Then the list of the following three sequents is a \emph{schematic simple induction proof (schematic s.i.p.)}:
	\begin{enumerate}
		\item $\Gamma_0(\alpha,\beta) \Rightarrow X(\alpha,0,\beta)$
		\item $\Gamma_1(\alpha,\nu,\gamma), \bigwedge_{1\leq i \leq n}X(\alpha,\nu,t_i(\alpha,\nu,\gamma)) \Rightarrow X(\alpha,s(\nu),\gamma)$
		\item $\Gamma_2(\alpha), \bigwedge_{1\leq j \leq m} X(\alpha,\alpha,u_j(\alpha)) \Rightarrow B(\alpha)$
	\end{enumerate}	
\end{definition}

Note that every schematic s.i.p. defines a  Horn formula equation $\exists X \forall \alpha,\beta,\nu,\gamma ~ \psi$, where $\psi$ is the conjunction of the three sequents. A solution of a schematic s.i.p. $S$ is defined to be a quantifier-free formula $F(x,y,z)$, such that the three sequents of $S$ with $X$ replaced by $F$ are quasi-tautological, which means it is valid in first-order logic with equality. This means exactly $\models \forall \alpha,\beta,\nu,\gamma ~\psi[X\backslash F]$ as we always talk about logic with equality.
For solving a schematic s.i.p., $\Gamma_0,\Gamma_1,\Gamma_2$ may be arbitrary multisets of
 first-order formulas.
In~\cite{Eberhard15Inductive} the aim is to prove a universal statement $\forall \alpha B(\alpha)$.
A solution of an appropriate schematic s.i.p.\ yields a proof of $\forall \alpha B(\alpha)$ as follows.
Of particular interest is the case, when $\Gamma_0,\Gamma_1,\Gamma_2$ are instances of a theory
 $\Gamma$.
In applications this is an arithmetic theory, e.g. Robinson arithmetic.
Assume we have a solution $F$ of a schematic s.i.p. $S$ in that specific case.
Then we can obtain a proof of $\forall \Gamma \Rightarrow \forall \beta ~F(\alpha,0,\beta)$ from the
 first sequent of $S$, where $\forall \Gamma$ is the universal closure of $\Gamma$.
Similarly we obtain $\forall \Gamma, \forall \beta F(\alpha,\nu, \beta) \Rightarrow \forall \beta F(\alpha,s(\nu),\beta)$
 from the second sequent.
As we are interested in proofs with an induction rule, we then are able to deduce
 $\forall \Gamma \Rightarrow \forall \nu,\beta F(\alpha, \nu, \beta)$.
Thus using the third sequent of $S$ yields a proof of $\forall \Gamma \Rightarrow B(\alpha)$ and
 therefore of $\forall \Gamma \Rightarrow \forall \alpha B(\alpha)$.

Next we examine how to get a solution of a schematic s.i.p.

\begin{definition}[\cite{Eberhard15Inductive}, Definition 6.10.]\label{inductiveSchematicSequence}
	Let $S$ be a schematic s.i.p. with premises given as in Definition \ref{inductiveSchematicSip}. By recursion define the following sequence of formulas.
	\begin{align*}
		C_{S,0}(x,z) &:= \bigwedge \Gamma_0(x,z) \\
		C_{S,q+1}(x,z) &:= \bigwedge \Gamma_1(x,\mybar{q},z) \wedge \bigwedge_{1 \leq i \leq n} C_{S,q}(x,t_i(x,\mybar{q},z))
	\end{align*} 
	$C_{S,q}(x,z)$ is called the $q$-th canonical solution of $S$.
\end{definition}

Let $(\sigma_{\Phi_\psi}^l)_{l \in \omega}$  be the sequence of first-order formulas, describing the unfolding of the fixed-point operator, defined in Definition \ref{def.apprSigma} from the $1$-tuple $\Phi_{\psi}$ of first-order formulas $(\varphi)$ defined in Definition \ref{def.HornformulaSplit} from the Horn formula equation $\exists X \psi$.

\begin{lemma}\label{inductiveEquivalenceVarphi}
	Let $S$ be a schematic s.i.p. and let $\exists X \psi$ be the formula equation defined from $S$. Then for all $q \in \Nat$ it holds 
	\begin{align*}
		C_{S,q}(x,z) \equiv \sigma_{\Phi_{\psi}}^{q+1}(x,\mybar{q},z).
	\end{align*}
\end{lemma}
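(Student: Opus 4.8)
The plan is to prove the equivalence by induction on $q$, after first making the single-component tuple $\Phi_\psi$ fully explicit. Since the schematic s.i.p.\ has one ternary predicate variable $X$, sequent~(1) is the unique base clause $\bigwedge\Gamma_0(\alpha,\beta)\impl X(\alpha,0,\beta)$, sequent~(2) is the unique induction clause $\bigwedge\Gamma_1(\alpha,\nu,\gamma)\land\bigwedge_{i=1}^n X(\alpha,\nu,t_i)\impl X(\alpha,s(\nu),\gamma)$, and sequent~(3) is an end clause which, by Definition~\ref{def.HornformulaSplit}, does not contribute to $\Phi_\psi$. Translating the two remaining clauses and eliminating by the one-point rule the existential quantifiers over the variables fixed by the head equalities (all but $\nu$), one gets $\Phi_\psi=(\varphi)$ with
\[
\varphi(X,x_1,x_2,x_3)\equiv\bigl(\textstyle\bigwedge\Gamma_0(x_1,x_3)\land x_2=0\bigr)\lor\exists\nu\bigl(\textstyle\bigwedge\Gamma_1(x_1,\nu,x_3)\land\bigwedge_{i=1}^n X(x_1,\nu,t_i(x_1,\nu,x_3))\land x_2=s(\nu)\bigr).
\]
By Definition~\ref{def.apprSigma}, $\sigma^{l+1}_{\Phi_\psi}$ arises from $\varphi$ by substituting $\sigma^l_{\Phi_\psi}$ for $X$.

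For the base case $q=0$ I would evaluate $\sigma^1_{\Phi_\psi}(x,\mybar 0,z)=\varphi[X\backslash\bot](x,0,z)$: the second disjunct contains $\bigwedge_{i=1}^n\bot$, which is $\bot$ because $n\geq 1$, while the first reduces to $\bigwedge\Gamma_0(x,z)\land 0=0$; hence $\sigma^1_{\Phi_\psi}(x,\mybar 0,z)\equiv\bigwedge\Gamma_0(x,z)=C_{S,0}(x,z)$. For the step I would unfold $\sigma^{q+2}_{\Phi_\psi}=\varphi[X\backslash\sigma^{q+1}_{\Phi_\psi}]$ and substitute $x_2:=\mybar{q+1}=s(\mybar q)$, so that the target $C_{S,q+1}(x,z)=\bigwedge\Gamma_1(x,\mybar q,z)\land\bigwedge_{i=1}^n C_{S,q}(x,t_i(x,\mybar q,z))$ must be matched against the two disjuncts of $\sigma^{q+2}_{\Phi_\psi}(x,s(\mybar q),z)$.

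The implication $C_{S,q+1}\impl\sigma^{q+2}_{\Phi_\psi}(x,\mybar{q+1},z)$ is the easy half and holds in every structure: instantiating the existential witness $\nu:=\mybar q$ makes $s(\mybar q)=s(\nu)$ trivially true, and the induction hypothesis applied with $z$ replaced by each $t_i(x,\mybar q,z)$ rewrites every $C_{S,q}(x,t_i(\cdot))$ as $\sigma^{q+1}_{\Phi_\psi}(x,\mybar q,t_i(\cdot))$. The reverse implication is the crux: to discard the first disjunct I need $\mybar{q+1}=s^{q+1}(0)\neq 0$, and to collapse $\exists\nu$ I need to infer $\nu=\mybar q$ from $s(\mybar q)=s(\nu)$, i.e.\ injectivity of $s$; then the one-point rule together with the induction hypothesis yields exactly $C_{S,q+1}$.

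The main obstacle is therefore not the inductive bookkeeping but the reliance on these two constructor properties of $s$. They are genuinely necessary: in the one-element structure with $s=\id$ one already has $s(0)=0$, and there $\sigma^2_{\Phi_\psi}(x,\mybar 1,z)$ is made true by its $\bigwedge\Gamma_0$-disjunct whereas $C_{S,1}$ need not be, so the equivalence fails. Thus I would make explicit that $\equiv$ here is read over the intended arithmetic structures validating $s^{q+1}(0)\neq 0$ and $\forall u\forall v\,(s(u)=s(v)\impl u=v)$ — available in the setting where $\Gamma_0,\Gamma_1$ are instances of a theory such as Robinson arithmetic — and carry these two facts through the inductive step above.
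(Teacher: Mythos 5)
You take the same route as the paper's own proof --- induction on $q$, unfolding $\sigma^{l+1}_{\Phi_\psi} \equiv \varphi[X\backslash\sigma^{l}_{\Phi_\psi}]$, with the identical base case --- but your extra care in the inductive step uncovers a genuine defect in the paper's argument rather than in yours. The paper's proof asserts, as an unqualified chain of equivalences,
\begin{align*}
\sigma_{\Phi_\psi}^{q+2}(x,\mybar{q+1},z) \;\equiv\; \bigwedge\Gamma_1(x,\mybar{q},z)\wedge\bigwedge_{1\leq i\leq n}\sigma_{\Phi_\psi}^{q+1}(x,\mybar{q},t_i(x,\mybar{q},z)),
\end{align*}
which silently discards the $\Gamma_0$-disjunct and collapses the existential over $\nu$ to $\nu=\mybar{q}$. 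Exactly as you say, these two steps need $s^{q+1}(0)\neq 0$ and injectivity of $s$, neither of which holds in arbitrary $\calL$-structures; since the paper defines $\equiv$ as validity of the biconditional in \emph{all} structures, the lemma as literally stated is false, and your one-element counterexample (taking $\Gamma_0$ true and $\Gamma_1$ false there) refutes it. Your qualified version is the statement that actually admits a proof, and it is also all that the paper needs downstream: in the proof of the subsequent lemma (\cite{Eberhard15Inductive}, Lemma 6.12) this lemma is invoked only to obtain $\models C_{S,q}(x,z)\impl[\lfp_X~\Phi_{\psi}]$, i.e., only through the implication $C_{S,q}(x,z)\impl\sigma^{q+1}_{\Phi_\psi}(x,\mybar{q},z)$, which is precisely your ``easy half'' and which you prove without any hypotheses on $s$ by witnessing $\nu:=\mybar{q}$ and using the syntactic identity $\mybar{q+1}=s(\mybar{q})$. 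So: same inductive approach, but your version --- either restricting the equivalence to structures validating the two constructor axioms (e.g.\ models of Robinson arithmetic), or weakening the lemma to the unconditional one-directional implication --- is the correct one; the paper's proof skips over exactly the point you flag.
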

\begin{proof}
	We show the equivalence by induction on $q$.
	The definition of $(\sigma_{\Phi}^l)_{l \in \omega}$ is $\sigma_{\Phi}^0 \equiv \bot$ and for $l \in \omega$
	\begin{align*}
		\sigma_{\Phi}^{l+1}(x,y,z) \equiv \varphi(\sigma_{\Phi}^l,x,y,z) \equiv& \exists \alpha,\beta, \nu, \gamma \bigg(\Big( \bigwedge \Gamma_0(\alpha,\beta) \wedge x = \alpha \wedge y = 0 \wedge z = \beta\Big)\bigg. \\
		\vee \big.\Big(\bigwedge \Gamma_1&(\alpha,\nu,\gamma) \wedge \bigwedge_{1 \leq i \leq n} \sigma_{\Phi}^l(\alpha,\nu,t_i(\alpha,\nu,\gamma)) \wedge x = \alpha \wedge y = s(\nu) \wedge z = \gamma\Big)\bigg).
	\end{align*}
	In particular it holds that $\sigma_{\Phi}^1(x,0,z) \equiv \bigwedge \Gamma_0(x,z) \equiv  C_{S,0}(x,z)$. For $q \geq 0$ we have 
	\begin{align*}
		\sigma_{\Phi}^{q+2}(x,\mybar{q+1},z) \equiv& \bigwedge \Gamma_1(x,\mybar{q},z) \wedge \bigwedge_{1 \leq i \leq n} \sigma_{\Phi}^{l+1}(x,\mybar{q},t_i(x,\mybar{q},z))\\
		\equiv&  \bigwedge \Gamma_1(x,\mybar{q},z) \wedge \bigwedge_{1 \leq i \leq n} C_{S,q}(x,t_i(x,\mybar{q},z)) \equiv~ C_{S,q+1}(x,z).
	\end{align*}		
\end{proof}

Now we want to present Lemma 6.12. from \cite{Eberhard15Inductive} as a direct corollary of Corollary \ref{cor.HornFP}. 

\begin{lemma}[\cite{Eberhard15Inductive}, Lemma 6.12.]
	Let $S$ be a schematic s.i.p. Then for any solution $F$ of $S$ and any $q \in \Nat$, the $q$-th canonical solution $C_{S,q}(x,z)$ logically implies $F(x,\mybar{q},z)$.
\end{lemma}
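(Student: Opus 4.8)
The plan is to derive this lemma from part (2) of the classical fixed-point theorem, Corollary~\ref{cor.HornFP}, together with the fixed-point approximation of Section~\ref{sec.prelim.approximation} and the syntactic identity established in Lemma~\ref{inductiveEquivalenceVarphi}. Write $\mu := [\lfp_X ~\Phi_\psi]$ for the least solution of the Horn formula equation $\exists X \forall \alpha,\beta,\nu,\gamma~\psi$ determined by $S$, where $\Phi_\psi$ is the $1$-tuple $(\varphi)$ of Definition~\ref{def.HornformulaSplit}. Note that $\varphi$ is existential, since it has the form $\exists \mybar{y}\, \Lor_C T_C$ with each $T_C$ a conjunction of a constraint, predicate atoms, and equations.

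First I would show that $\mu$ logically implies $F$. Since $F$ is a solution of $S$ we have $\models \forall \alpha,\beta,\nu,\gamma~\psi[X\backslash F]$, hence $\calM \models \psi[X\backslash F]$ for every $\calL$-structure $\calM$. In each such $\calM$ the formula $F$ defines the relation $R_F := \{(a,b,c) \mid \calM \models F(a,b,c)\}$, and $\calM \models \psi[X\backslash F]$ is the same statement as $\calM \models \psi[X\backslash R_F]$. Applying Corollary~\ref{cor.HornFP}/2 with this single relation yields $\calM \models \mu \impl F$, and as $\calM$ was arbitrary we obtain $\models \mu(x,y,z) \impl F(x,y,z)$.

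Next I would connect $\mu$ with the approximants $\sigma_{\Phi_\psi}^l$. Because the component of $\Phi_\psi$ is existential, Theorem~\ref{thm.apprLfp} gives $\mu \equiv \sigma_{\Phi_\psi}^{\omega} \equiv \bigvee_{l\in\omega}\sigma_{\Phi_\psi}^l$, so each finite approximant implies $\mu$; in particular $\models \sigma_{\Phi_\psi}^{q+1} \impl \mu$ for every $q$. Instantiating the variable $y$ by the closed term $\mybar{q}$ and chaining with the implication from the previous paragraph gives $\models \sigma_{\Phi_\psi}^{q+1}(x,\mybar{q},z) \impl F(x,\mybar{q},z)$. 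Finally, Lemma~\ref{inductiveEquivalenceVarphi} states $C_{S,q}(x,z) \equiv \sigma_{\Phi_\psi}^{q+1}(x,\mybar{q},z)$, and substituting this equivalence produces $\models C_{S,q}(x,z) \impl F(x,\mybar{q},z)$, which is exactly the claim.

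The substance of the argument lies in the appeal to the least-solution statement: the real content is that $\mu$ is not merely \emph{a} solution but is contained in every solution, which is precisely what Corollary~\ref{cor.HornFP}/2 supplies. The only points needing care are the observation that substituting the formula $F$ and substituting the relation $R_F$ it defines agree in each model (so that the fixed-point theorem is applicable), and the harmless instantiation of the free variable $y$ by the numeral $\mybar{q}$; both are routine. Notably, no separate induction on $q$ is required here, since the inductive content has already been absorbed into Lemma~\ref{inductiveEquivalenceVarphi} and the approximation theorem.
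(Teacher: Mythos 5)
Your proof is correct and follows essentially the same route as the paper's: both combine Corollary~\ref{cor.HornFP}/2 (minimality of $\mu$), Theorem~\ref{thm.apprLfp} (so that each approximant $\sigma_{\Phi_\psi}^{q+1}$ implies $\mu$), and Lemma~\ref{inductiveEquivalenceVarphi} (identifying $C_{S,q}(x,z)$ with $\sigma_{\Phi_\psi}^{q+1}(x,\mybar{q},z)$), differing only in the order of chaining the implications. Your extra care in justifying the applicability of the fixed-point corollary (formula substitution versus the relation $R_F$ it defines) and in noting the instantiation $y \mapsto \mybar{q}$ fills in details the paper leaves implicit.
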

\begin{proof}
	From the definition of $\sigma_{\Phi}^{\omega}$ it follows that $\models \sigma_{\Phi}^q \rightarrow \sigma_{\Phi}^{\omega}$ for all $q \in \omega$. Theorem \ref{thm.apprLfp} states that $\sigma_{\Phi}^{\omega} \equiv [\lfp_X~ \Phi_{\psi}]$ and thus Lemma \ref{inductiveEquivalenceVarphi} yields $\models  C_{S,q}(x,z) \rightarrow [\lfp_X~ \Phi_{\psi}]$ for all $q \in \omega$. Now Corollary \ref{cor.HornFP} concludes $\models C_{S,q}(x,z) \rightarrow F(x,\mybar{q},z)$ for all $q \in \omega$.
\end{proof}

Thus we see that using the results from Section~\ref{sec.afpthm} shortens the proofs of a result
 from~\cite{Eberhard15Inductive} and explains it as a corollary of our fixed-point theorem.

\section{Related Work}\label{sec.related_work}

Much of the motivation for studying the solutions of formula equations rests on the many
 connections this problem has to a wide variety of other topics in computational logic.
In addition to the connections described in the previous sections, in this section
 we discuss work on various related topics.

\paragraph{Boolean equations and unification.}
Under the name of Boolean equations, solving formula equations is one of the oldest
 and one of the most central problems of logic.
It goes back to the 19th century and was already thoroughly investigated
 in \cite{Schroeder90Vorlesungen}.
See~\cite{Rudeanu74Boolean} for a comprehensive textbook.
Solving Boolean equations is closely related to Boolean unification, a subject of
 thorough study in computer science, see, e.g.,~\cite{Baader98Complexity,Buettner87Embedding,Martin88Unification}
 and~\cite{Martin89Boolean} for a survey.
The generalisation of this problem from propositional to first-order logic has been
 made explicit as early
 as~\cite{Behmann50Aufloesungsproblem,Behmann51Aufloesungsproblem},
 see~\cite{Wernhard17Boolean} for a recent survey.

%\paragraph{Unification}
%TODO: mention context unification and/or second-order unification?

\paragraph{Second-order quantifier elimination.}
Solving a formula equation in first-order logic is closely related to second-order
 quantifier elimination (SOQE), a problem with applications in a variety of areas in
 computer science, e.g., modal logic, databases, and common-sense
 reasoning~\cite{Gabbay08Second}.
A seminal work on second-order quantifier elimination and basis of many modern algorithms
 is Ackermann's~\cite{Ackermann35Untersuchungen}.
The two main modern algorithms for SOQE are the SCAN algorithm and the DLS algorithm.
SCAN has been introduced in~\cite{Gabbay92Quantifier} and is closely related to
 hierarchic superposition~\cite{Bachmair94Refutational}.
Implementations of SCAN can be found in~\cite{Engel96Quantifier,Ohlbach96SCAN}.
It has been used for a range of applications in various areas of
 computational logic~\cite{Goranko03Scan,Gabbay08Second}.
The DLS algorithm has been introduced in~\cite{Doherty97Computing} and extended to
 the $\text{DLS}^*$ algorithm that works with fixed-points
 in~\cite{Doherty98General,Nonnengart98Fixpoint}.
From the perspective of second-order quantifier elimination, our fixed-point theorem can
 be seen as establishing that Horn formula equations have canonical witnesses, called
 ELIM-witnesses in~\cite{Wernhard17Boolean}, in FO[LFP].

\paragraph{Fixed point theorems.}
Fixed-point theorems play a central role for solving equations in many areas of mathematics.
They abound in logic with one of the most famous examples being the recursion theorem in computability
theory which guarantees the existence of a solution of a system of recursion equations by computing a fixed-point.
But also in areas quite remote from logic such constructions can be found, as, for example, in
the use of Banach's fixed-point theorem in the proof of the
Picard-Lindelöf theorem on the unique solvability of ordinary differential equations.
Our use of the fixed-point theorem for solving Horn formula equations in FO[LFP]
 follows exactly the same scheme.

\paragraph{Least fixed point logics.}

First-order logic with a least fixed-point operator, FO[LFP], has been
 studied at least since~\cite{Moschovakis74Elementary}.
It is well-known in finite model theory and computational complexity, most
 notably because of the Immerman-Vardi theorem~\cite{Vardi82Complexity,Immerman86Relational}.
FO[LFP] has already been used for second-order quantifier elimination in the literature, for example
 by Nonnengart and Sza{\l}as in~\cite{Nonnengart98Fixpoint}.
In fact, the key lemma for our fixed-point theorem is a generalisation of a result
 of~\cite{Nonnengart98Fixpoint}, which, in turn, is a generalisation of a result of
 Ackermann's~\cite{Ackermann35Untersuchungen}.

\paragraph{Logic programming.}
In logic programming it is well-known that a set of first-order Horn clauses has a unique
 minimal model, i.e., a satisfying set of ground atoms, and that this set can be obtained
 as fixed-point of an operator defined by the clause set~\cite{Jaffar94Constraint}.
From that point of view, our fixed-point theorem can be understood as, essentially, expressing
 this computation within the logic itself.

\paragraph{Verification.}

An entire workshop series is devoted to applications of constrained Horn clauses in
 verification and synthesis, see, e.g.,~\cite{Hojjat21Proceedings}.
A set of constrained Horn clauses is simply a Horn formula equation in the terminology of this
 paper.
Also in this community the relationship between Horn formula equations and least fixed points
 is well known and has been exploited for practical purposes, e.g., in~\cite{Unno17Automating},
 see also~\cite{Tsukada22Software}.
In contrast to this line of work, our contribution is the formulation of a general,
 theoretical, result based on an explicit fixed point operator that encompasses both,
 simultaneous least fixed points and abstract interpretation.
Abstract interpretation has been used in tools for solving Horn
 clauses~\cite{Hoder11muZ,Kafle16Rahft}.

The work~\cite{Blass87Existential} advocates for the use of existential fixed-point logic
 as a logical foundation of verification.
In particular, it was already shown in~\cite{Blass87Existential} that the weakest precondition and
 the strongest postcondition can be defined without expressivity hypothesis in
 existential fixed-point logic.
We consider this paper as a continuation of this approach which adds to it by
 using the more abstract concept of formula equation and showing how this is useful for a wide
 range of different application in- and outside of verification.

\paragraph{Extensions.}

Various extensions of Horn formula equations have been considered, mostly motivated
 by their applications in verification.
In particular, Horn formula equations have been extended to coinduction
 in~\cite{Basold19Coinduction} and to higher-order logic~\cite{Jochems23Higher,Kobayashi19Temporal,Tsukada20Computability,Unno23Modular,Burn18Higher, Kobayashi18Higher,Kobayashi23Validity}.
Our work, being restricted to first-order Horn formula equations and least fixed points
 does not directly apply to these results and extending our work to these settings
 is left as future work by this paper.

% \cite{Basold19Coinduction} gives a coinductive semantics for Horn clauses -- how does our work relate to this?
%
% \cite{Jochems23Higher} extends the decidable class MSL of first-order Horn clauses to
%  higher-order logic.

\section{Conclusion}

% summary
We have shown a fixed-point theorem for Horn, dual-Horn and linear-Horn formula equations.
These fixed-point theorems apply to an abstract semantics which generalises standard
 semantics by allowing to interpret the least fixed-point operator and the second-order quantifier in a complete lattice.
The central lemma for proving these results is a generalisation of a result from~\cite{Nonnengart98Fixpoint},
 which, in turn, is a generalisation of a result from~\cite{Ackermann35Untersuchungen}.
From the point of view of logic programming, our proof can be understood as  
 expressing the construction of a minimal model of a set of Horn clauses on the object level as a formula
 in first-order logic with least fixed points.

% theoretical understanding of constrained Horn clause solving and software verification
These fixed-point theorems contribute to our theoretical understanding of the logical foundations of
 constrained Horn clause solving and software verification.
As corollary to our fixed point theorem we have obtained the expressibility of the
 weakest precondition and the strongest postcondition, and thus the partial correctness of an imperative
 program in FO[LFP].
We believe that it is fruitful to consider constrained Horn clause solving from the more general
 point of view of solving formula equations.
On the theoretical level this perspective uncovers connections to a number of topics such as
 second-order quantifier elimination and results such as Ackermann's~\cite{Ackermann35Untersuchungen}.
On the practical level it suggests to study the applicability of algorithms such as DLS and SCAN for
 constrained Horn clauses and vice versa, that of algorithms for constrained Horn clause solving for
 applications of second-order quantifier elimination.

% concrete corollaries
Moreover, we have shown that this fixed-point theorem has a number of applications throughout
 computational logic:
as described in Section~\ref{sectionAffineSolutionProblem} it allows to considerably simplify the
 proof of the decidability of affine formula equations given in~\cite{Hetzl20Decidability}.
As shown in Section~\ref{sec.approximation}, it allows a generalisation of a result by
 Ackermann~\cite{Ackermann35Untersuchungen} on second-order quantifier-elimination in a direction different
 from the recent generalisation~\cite{Wernhard17Approximating} of that result.
Furthermore, as shown in Section~\ref{sec.IndProving}, it allows to obtain a result on the generation of
 a proof with induction based on partial information about that proof shown in~\cite{Eberhard15Inductive}
 as straightforward corollary.
Due to the naturalness of both, the class of formula equations and the result, we expect many further
 applications of this fixed-point theorem.
 
% conclusion
In conclusion we believe that Horn formula equations have a central role to play in computational logic
 and that the fixed-point theorem is one of their most important theoretical properties.
The proof of the fixed-point theorem integrates neatly with the existing literature on second-order
 quantifier elimination.
The expressivity of Horn formula equations enables their applicability in a wide range of topics from
 software verification to inductive theorem proving.

{\bf Acknowledgements.} We would like to thank the anonymous reviewers for their thorough
 reading of our submission and for their many comments which have led to considerable
 improvements of this paper.
 We would also like to thank everyone who has provided feedback on the workshop papers~\cite{Hetzl21Fixpoint} and~\cite{Hetzl21Abstract} which were instrumental in
 developing the full results presented in this paper.
 
\bibliographystyle{ACM-Reference-Format}
\bibliography{references}

\end{document}